\newcommand{\junote}[1]{}
\newcommand{\vertical}[1]{{\left\vert\kern-0.25ex\left\vert\kern-0.25ex\left\vert #1  \right\vert\kern-0.25ex\right\vert\kern-0.25ex\right\vert}}
\newcommand{\bA}{\mathbf{A}}
\newcommand{\bB}{\mathbf{B}}
\newcommand{\bC}{\mathbf{C}}
\newcommand{\bE}{\mathbf{E}}
\newcommand{\bH}{\mathbf{H}}
\newcommand{\bN}{\mathbf{N}}
\newcommand{\bY}{\mathbf{Y}}
\renewcommand{\bH}{\mathbf{H}}
\newcommand{\bW}{\mathbf{W}}
\newcommand{\bD}{\mathbf{D}}
\newcommand{\bT}{\mathbf{T}}
\newcommand{\bZ}{\mathbf{Z}}
\newcommand{\p}{\mathsf{Pr}}
\newcommand{\cD}{\mathcal{D}}
\newcommand{\cN}{\mathcal{N}}
\newtheorem{myremark}{Remark} 
\newenvironment{remark}{\begin{myremark}}{\end{myremark}}
\newtheorem{myexample}{Example}
\newcommand{\set}[1]{\left\{ {#1} \right\}}
\newcommand{\paren}[1]{\left( {#1} \right)}
\newcommand{\sparen}[1]{\left[ {#1} \right]}
\DeclareMathOperator{\poly}{poly}
\newcommand{{\priv}}{\mathsf{Priv}}
\newcommand{{\bQ}}{\mathbf{Q}}
\newcommand{{\bR}}{\mathbf{R}}
\newcommand{{\bU}}{\mathbf{U}}
\newcommand{{\bV}}{\mathbf{V}}
\newcommand{\R}{\mathbb{R}}
\newtheorem{theorem}{Theorem}
\newtheorem{lemma}[theorem]{Lemma}
\newtheorem{claim}[theorem]{Claim}
\newtheorem{fact}[theorem]{Fact}
\newtheorem{corollary}[theorem]{Corollary}
\newtheorem{prob}{Problem}
\newtheorem{definition}[theorem]{Definition} 
\newcommand{\thmref}[1]{Theorem~\ref{thm:#1}}
\newcommand{\lemref}[1]{Lemma~\ref{lem:#1}}
\newcommand{\corref}[1]{Corollary~\ref{cor:#1}}
\newcommand{\claimref}[1]{Claim~\ref{claim:#1}}
\newcommand{\secref}[1]{Section~\ref{sec:#1}}
\newcommand{\appref}[1]{Appendix~\ref{app:#1}}
\newcommand{\figref}[1]{Figure~\ref{fig:#1}}
\newcommand{\eqnref}[1]{equation~(\ref{eq:#1})}
\newcommand{\bLambda}{\mathbf{\Lambda}}
\newcommand{\bSigma}{\mathbf{\Sigma}}
\newcommand{\bPsi}{\mathbf{\Psi}}
\newcommand{\bPhi}{\mathbf{\Phi}}
\newcommand{\bPi}{\mathbf{\Pi}}
\newcommand{\bOmega}{\mathbf{\Omega}}
\newcommand{\cov}{\mathsf{COV}}
\newcommand{\lra}{\mathsf{LRA}}
\newcommand{\PDF}{\mathsf{PDF}}
\newcommand{\rad}{\mathsf{Rad}}
\newcommand{\lrf}{\mathsf{LRF}}
\newcommand{\argmin}{\operatornamewithlimits{argmin}}
\newcommand{\bS}{\mathbf{S}}
\newcommand{\bX}{\textbf{X}}
\newcommand{\by}{\textbf{y}}
\newcommand{\bx}{\textbf{x}}
\newcommand{\E}{\mathbb{E}}
\newcommand{\I}{\mathbb{I}}
\newcommand{\N}{\mathbb{N}}
\newcommand{\mypar}[1]{\vspace{10pt} \noindent {\bf {#1}:}}
\newcommand{\contrib}{\mypar}
\newcommand{\naiveadditive}{ \paren{ \frac{ \sqrt{m} +\sqrt{n} + \alpha^{-2} }{(1-\alpha)^{3/2}\varepsilon} } } 
\newcommand{\naivecor}{ {\paren{ \sqrt{m} +\sqrt{n}}}{\varepsilon}^{-1}}  
\newcommand{\naivecontinual}{ \naivecor \log T}
\newcommand{\spaceadditive}{  \paren{\frac{k\alpha^{-1 } \ln(1/\delta) +\sqrt{m } + \sqrt{n}}{ \varepsilon (1-\alpha)^{3}}} } 
\newcommand{\spacecor}{ {\paren{\sqrt{m } + \sqrt{n} }}{\varepsilon}^{-1} } 
\newcommand{\spacecontinual}{ \spacecor \log T }
\newcommand{\naivetheorem}{ 	Let  $m, n \in \N$ and $\varepsilon,\delta$ be the input parameters with $m \geq n$.  Let $k$ be the desired rank of the factorization.  Let $\alpha \in (0,1)$ be an arbitrary constant. Given an $m \times n$ matrix $\bA$ in the turnstile update model,  the algorithm {\scshape Spectral-LRF}, presented in~\figref{private}, satisfies the following properties:
\begin{enumerate}
	 \item Let $\eta=\max\set{k^2,1/\alpha}$. Then {\scshape Spectral-LRF} uses $O((m \eta \alpha^{-1} + n  \eta \alpha^{-3} ) \log (k/\alpha^2) \log(1/\delta))$ space.
	\item {\scshape Spectral-LRF} is $(\varepsilon,\delta)$-differentially private under $\priv_1$. %
	\item With probability at least $9/10$ over the coin tosses of {\scshape Spectral-LRF}, 
	 \begin{align*} &\| \bA - \mathbf{M}_k \|_2  \leq \frac{(1+\alpha)}{(1-\alpha)^2}\Delta_k(\bA) 
	 	\quad + {O}\paren{ \naiveadditive \sqrt{\log (1/\delta)}}, \end{align*} 
		$~\text{where}~\mathbf{M}_k = \bU_k \bSigma_k \bV_k^{\mathsf T}.$
\end{enumerate}
}
\newcommand{\spacetheorem}{
	Let  $m, n,k \in \N$ and $\varepsilon,\delta$ be the input parameters.  Let $\alpha \in (0,1)$ be an arbitrary constant, $\kappa=(1+\alpha)/(1-\alpha)$, $\sigma_{\mathsf min}=16 \ln(1/\delta) \sqrt{t \kappa \ln(4/\delta)}/\varepsilon$,  $\rho_1={\sqrt{(1+\alpha) \ln(1/\delta)}}/{\varepsilon}$, and $\rho_2=\rho_1\sqrt{1+\alpha}$. Given an $m \times n$ matrix $\bA$ in the turnstile update model,  the algorithm {\scshape Low-Space-LRF}, presented in~\figref{space}, satisfies the following properties:
\begin{enumerate}
	 \item Let $\eta=\max\set{k^2,1/\alpha}$. Then {\scshape Spectral-LRF} uses $O((m + n ) \eta \alpha^{-1} \log (k/\alpha) \log(1/\delta))$ space. 
	\item {\scshape Spectral-LRF} is $(\varepsilon,\delta)$-differentially private  under $\priv_2$. \label{spaceprivacy}%
	\item With probability at least $9/10$ over the coin tosses of {\scshape Spectral-LRF}, \label{spacecorrectness}
\begin{align*} & \| \mathbf{M}_k - \begin{pmatrix} \bA & \mathbf{0} \end{pmatrix} \|_2  \leq  \frac{(1+\alpha)^2}{(1-\alpha)^4}  \Delta_k(\bA) 
 + O \paren{ \spaceadditive{\sqrt{\ln(1/\delta)}} }.   \end{align*}
\end{enumerate}
}
\renewcommand{\bH}{\mathbf{H}}
\newcommand{\sT}{\mathsf{T}}
\renewcommand{\t}[1]{\widetilde{#1}}
\newcommand{\h}[1]{\widehat{#1}}
\begin{document}
\title{On Low-Space Differentially Private Low-rank Factorization in the Spectral Norm}
\author{
Jalaj Upadhyay \\ College of Information Science and Technology \\ Pennsylvania State University \\ \small{\sf jalaj@psu.edu}
}
\date{}
\maketitle

\pagenumbering{gobble}
\begin{abstract}
Low-rank factorization is used in many areas of computer science where one performs spectral analysis on large sensitive data  stored in the form of matrices.  In this paper, we study differentially private low-rank factorization of a matrix with respect to the spectral norm in the  turnstile update model. In this problem, given an input matrix $\bA\in \R^{m \times n}$ updated in the turnstile manner and a target rank $k$, the goal is to find  two rank-$k$ orthogonal matrices $\bU_k \in \R^{m \times k}$ and $ \bV_k \in \R^{n \times k}$, and one positive semidefinite diagonal matrix $\bSigma_k \in \R^{k \times k}$ such that $\bA \approx \bU_k \bSigma_k \bV_k^\sT$ with respect to the spectral norm. 

Our main contributions are two computationally efficient and sub-linear space algorithms for computing a differentially private low-rank factorization. We consider two levels of privacy. In the first level of privacy, we consider two matrices neighboring if their difference has a Frobenius norm at most $1$. In the second level of privacy, we consider two matrices as neighboring if their difference can be represented as an outer product of two unit vectors. Both these privacy levels are stronger than those studied in the earlier papers such as Dwork {\it et al.} (STOC 2014), Hardt and Roth (STOC 2013), and Hardt and Price (NIPS 2014). 

As a corollary to our results, we get  non-private algorithms that compute low-rank factorization in the turnstile update model with respect to the spectral norm. We note that, prior to this work,   no algorithm that outputs low-rank factorization with respect to the spectral norm in the turnstile update model was known; i.e., our algorithm gives the first non-private low-rank factorization with respect to the spectral norm in the turnstile update mode.

Our algorithms generate  private linear sketches of the input matrix. Therefore, using the binary tree mechanism of Chan {\it et al.} (TISSEC: 14(3)) and Dwork {\it et al.} (STOC 2010), we  get  algorithms for continual release of low-rank factorization under both these privacy levels. This gives the first instance of  differentially private algorithms with continual release that guarantees a stronger level of privacy than  event-level privacy.

\end{abstract}

\medskip \noindent \textbf{Keywords.} 
Differential privacy, 
low-rank factorization, 
continual release, 
turnstile update model.


\pagenumbering{arabic}
\section{Introduction} \label{sec:introduction}
Spectral analysis of matrices  is  used in many areas where large {\em sensitive data} are stored in the form of matrices. A partial list includes  data mining~\cite{AFKMS01}, recommendation systems~\cite{DKR02}, 
information retrieval~\cite{PRTV00,SC04}, 
web search~\cite{AFKM01,Kleinberg99}, 
clustering~\cite{CEMMP15,DFKVV04,McSherry01}, 
and learning  distributions~\cite{AM05,KSV05}.   A general technique  used in the literature involving such analysis  first computes an approximate low-rank factorization ($\lrf$) of the matrix to avoid the curse of dimensionality with the hope that the output from using the $\lrf$ does not ``differ by much" from the output from using the original data matrix. These algorithms often do not consider privacy (or use ad-hoc anonymization methods) and do not account for dynamic data matrices. 
On the other hand, the limitations of ad-hoc anonymization of these data  has been exemplified by the denonymization of the Netflix datasets~\cite{NS08} and the real-world data changes dynamically. This raises two natural questions:
 
\begin{quote}
{\bf (i)} Can we {\em efficiently} perform $\lrf$ under the spectral norm while {\em preserving the privacy} of the dynamic  data matrix? 

{\bf (ii)} Can we  release $\lrf$ under the continual release model~\cite{DNPR10}?
\end{quote}

We answer these questions in the  affirmative. Specifically, we give  two {\em differentially private} algorithms that receive an $m \times n$ matrix  in the {\em  turnstile update model} and output a $k$-rank factorization with respect to the spectral norm. Our algorithms efficiently compute a small space {\em linear sketches} of the private matrix. We use the private matrix only when generating the linear sketches; therefore, our algorithms can be easily modified  to get differentially private algorithms under {\em continual release} (i.e., the output is released at every time epoch) by using the binary tree mechanism~\cite{CSS,DNPR10}.

In the recent past, there has been a lot of attention to a related problem of differentially private {\em low-rank approximation} ($\lra$) with respect to the spectral norm~\cite{DTTZ14,HP14,HR13,KT13} leading to an optimal approximation error. However, these algorithms run in $O(mnk)$ time, use $O(mn)$ space, and assume that the matrix is static. 
It is easy to see that, if there are no time or space constraints, then $\lra$ can be used to give $\lrf$ in polynomial time. However, it is not known how to efficiently factorize a matrix in sub-linear space. Even in the non-private setting, there is no known algorithm that computes spectral approximation in the turnstile update model --- the only algorithm that computes an $\lra$ is under Frobenius norm by Boutisidis {\it et al.}~\cite{BWZ16} and Clarkson and Woodruff~\cite{CW13}. Unfortunately, the algorithm of Boutsidis {\it et al.}~\cite{BWZ16} is not robust against noise.  In a closely related work, Upadhyay~\cite{Upadhyay16} gave a  differentially private algorithm for $\lrf$ in the turnstile update model when the approximation metric is the Frobenius norm. This algorithm can be seen as a robust version of the algorithm of Boutsidis {\it et al.}~\cite{BWZ16}. 

One natural point to start would be to see if one could use the existing algorithm that gives approximation under Frobenius norm.
Unfortunately, Karnin and Liberty~\cite{KL14} observed that even though the exact solution of $\lrf$ is the same with respect to both the Frobenius and the spectral norm, the same cannot be said about their approximate solutions. Moreover, even though the spectral norm and Frobenius norm are trivially within the factor of each other, the same cannot be said about their approximate solutions. 
\junote{Compare with BWZ}
Therefore, it is not clear if we can use the algorithms of Upadhyay~\cite{Upadhyay16} and Boutsidis {\it et al.}~\cite{BWZ16} to give algorithms for $\lrf$ with respect to the spectral norm.   \junote{Make a note to appendix where this is explained. Talk about personal communication with Woodruff.}
On the other hand, there is an instance of algorithm by Halko {\it et al.}~\cite{HMT11} that gives approximation under both the spectral as well as the Frobenius norm. However, their algorithm is not in a turnstile update model. 

\subsection{Formal Problem Description} \label{sec:problems}
In this paper, we study { differentially private $\lrf$} in the {  turnstile update model}  when the approximation metric is the spectral norm. In the  turnstile update model, an update is in the form of triples $\{ i,j,s_\tau\}$, where $1 \leq i \leq m, 1 \leq j \leq n,s_\tau \in \R$ for all $\tau\geq 1.$  This leads to a change in the $(i,j)$-th entry of the private matrix $\bA$ as follows: $\bA_{i,j} \leftarrow \bA_{i,j} + s_\tau.$ 
We first  give the formal definition of differential privacy. 
\begin{definition} \label{defn:approxdp}
	A randomized algorithm $\mathfrak{M}$ gives {\em $(\varepsilon, \delta)$-differential privacy}, if for all neighbouring databases (presented in the form of matrices) $\bA$ and ${\bA}'$, and all subsets $S$ in the range of $\mathfrak{M}$, 	$ \p[\mathfrak{M}(\bA) \in S] \leq  e^\varepsilon \p[\mathfrak{M}({\bA}') \in S] + \delta $ over the coin tosses of $\mathfrak{M}$. 
\end{definition}

We can now formally define the main problem we address in this paper.
\begin{prob} 
\label{prob:private_factor} ($(\varepsilon,\delta, \beta,\gamma,\zeta,k)\mbox{-}\lrf$).
Given parameters $\varepsilon, \delta, \gamma,\beta, \zeta$, the target rank $k$, and a private $m \times n$ matrix $\bA$  updated in the general turnstile update model, output  (under continual release) an $(\varepsilon,\delta)$-differentially private rank-$k$ factorization $\widetilde{\bU}_k, \widetilde{\bSigma}_k, \widetilde{\bV}_k$ such that 
\begin{align*}  \p \sparen{ \| \bA -\widetilde{\bU}_k \widetilde{\bSigma}_k \widetilde{\bV}_k^\sT \|_2 \leq \gamma \| \bA- [\bA]_k \|_2 +\zeta } \geq 1-\beta,\end{align*} 
where $\| \cdot \|_2$ denotes the spectral norm and $[\bA]_k$ is the best rank-$k$ approximation of $\bA$. \junote{Compare with PCA.}
We refer the term $\gamma$ as the {\em multiplicative error} and the term $\zeta$ as the {\em additive error}. We call the tuples $(\varepsilon,\delta)$ the {\em privacy parameters}. 
\end{prob}

Our problem statement is more general than principal component analysis in the sense that we require approximation to both the left and the singular vectors. We discuss this in more details in~\appref{related}.

\contrib{Granularity of Privacy}
In the past, differentially private $\lra$  with respect to the spectral norm have been proposed with varying levels of privacy. Kapralov and Talwar~\cite{KT13} considered two matrices neighboring if the difference of their spectral norm is at most $1$. Dwork {\it et al.}~\cite{DTTZ14} considered two row-normalized matrices neighboring if they differ in one row. Hardt and Price~\cite{HP14} and Hardt and Roth~\cite{HR13} considered two matrices as neighboring if they differ exactly in one entry. In this paper, we consider even stronger levels of privacy. This matrices deals with streaming matrices; therefore, for the sake of discussion and ease of comparison of privacy model, below we define the granularity of privacy with respect to neighboring matrices. Two streams are said to be neighboring if they are formed using neighboring matrices. 

\junote{Give precise privacy definition in the terms of streams.}
In the first privacy level, $\priv_1$, we call two matrices $\bA$ and $\bA'$ as neighboring if  $\|\bA - \bA' \|_F \leq 1$, where $\| \cdot \|_F$ denotes the Frobenius norm. 
In the second privacy level, $\priv_2$, we consider two matrices are neighboring  if  $\|\bA - \bA' \|_F \leq 1$ and $\bA - \bA'$ is a rank-$1$ matrix. In other words, the matrices differ in only one spectrum by at most $1$.  Both these privacy notions are motivated by natural scenarios that are not captured by the privacy level studied in previous works~\cite{DTTZ14,HP14,HR13}. For example, $\priv_2$ is a natural choice where the spectrum  of the input matrix is the key feature, which is the case in scenarios where spectral analyses are performed. The choice of $\priv_1$ is motivated by the examples listed in Upadhyay~\cite{Upadhyay16}, where the presence or absence of an individual in a social graph can lead to a change of at most $1$ in the Frobenius norm of the corresponding adjacency matrix. We refer the readers to Upadhyay~\cite{Upadhyay16} for more details. Another notable example where $\priv_2$ make sense is the {\sf word2vec} model. \junote{Use word2vec example listed by a reviewer.}

\subsection{Contributions of This Paper} \label{sec:contributions}
All the earlier works~\cite{HP14,HR13,KT13}  use an approach known as the {\em subspace iteration}. A subspace iteration algorithm runs for $k$ rounds. In every iteration, the algorithm computes (and stores) the top singular vector of the matrix, say $\mathbf{v}$, and updates the matrix $\bA$ as follows: $\bA \leftarrow \bA - (\mathbf{v}^\sT \bA \mathbf{v})\mathbf{v}\mathbf{v}^\sT$. Our key insight is that we can compute the top-$k$ singular vector in one step (even in the turnstile update model) by using linear sketches. This also allows us to reduce the additive error as we do not have to add noise $k$ times to preserve the privacy in every iteration, giving us $\sqrt{k}$ improvement over  private algorithms that use subspace iteration. 

Throughout this section, we assume $\delta = o(1/n^2)$ and let $\widetilde O (\cdot)$ hides a ${\log n}$ factor  and denote by $\Delta_k(\bA):=\| \bA - [\bA]_k \|_2.$ 
We state our results assuming $m \geq n$. All our results hold true for $m <n$ with the roles of $m$ and $n$ reversed. Our algorithms are also efficient because all the costly computations are done on low-dimensional sketches and  the sketches themselves can be generated and updated efficiently using known techniques~\cite{CW13,KN14}. 

\contrib{$\lrf$ With Respect to Spectral Norm Under $\priv_1$} \label{sec:informalnaive}
Our first algorithm, {\scshape Spectral-LRF}, is based on the following intuition.  Let $[\bU]_k [\bSigma]_k [\bV]_k^{\mathsf T}$ be a singular-value decomposition of $[\bA]_k$. Therefore, if an orthonormal column matrix $\bU$ is a ``faithful" representation of  $[\bU]_k$, then $\t{\bX}:=[\bSigma]_k [\bV]_k^{\mathsf T} \approx \argmin_{r(\bX)\leq k} \| \bU \bX - \bA \|$  
and $\| \bU \t{\bX}  - \bA\|_2 \approx \| [\bA]_k - \bA\|_2$. We show that the matrix $\bU$ can be computed from a linear sketch. If we pick $\bS$ to be a random projection matrix such that, simultaneously for all $\bX$, $\| \bS(\bU {\bX}   - \bA )\|_2 \approx \| (\bU \bX   - \bA )\|_2 $ with high probability, then $\| \bS(\bU \t{\bX}   - \bA )\|_2 \approx  \| [\bA]_k - \bA\|_2$ with high probability. 
In other words, we need to store $\bS \bA$ and the linear sketch used to compute $\bU$, and output the product of  $\bU$ and $\argmin_{r(\bX)\leq k} \| \bS(\bU \bX  - \bA)\|_2$. 
We show the following.

\begin{theorem} (Informal statement of~\thmref{naive}). \label{thm:informalnaive}
 	Let  $m, n \in \N$ (where $m \geq n$), $(\varepsilon,\delta)$ be the privacy parameters, and $k$ be the desired rank of the factorization.  Let $\alpha \in (0,1)$ be an arbitrary  constant and $\eta=\max \set{k^2,\alpha^{-1}}$. Given an $m \times n$ matrix $\bA$ in the turnstile update model,  there exists an efficient	 $(\varepsilon,\delta)$-differentially private algorithm that uses $O_\delta(m\eta \alpha^{-1} + n  \eta  \alpha^{-3})$ space and outputs a $k$-rank factorization ${\bU}_k, {\bSigma}_k, {\bV}_k$ such that,  with probability at least $9/10$ over the coin tosses of the algorithm,
\begin{align*}   \| \bA -{\bU}_k {\bSigma}_k {\bV}_k^\sT \|_2 &\leq \frac{(1 +\alpha )}{(1-\alpha)^2} \Delta_k(\bA) 
+ \widetilde O  \paren{ \naiveadditive   }  .\end{align*} 
\end{theorem}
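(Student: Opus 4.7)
The plan is to realize the two-step intuition sketched just before the theorem. First, build an orthonormal matrix $\bU\in\R^{m\times \eta}$ whose column span contains a good approximation to the top-$k$ left singular subspace of $\bA$. Second, use a spectral-norm oblivious subspace embedding $\bS$ to solve a low-dimensional rank-constrained regression whose solution $\widetilde{\bX}$ obeys $\bU\widetilde{\bX}\approx[\bA]_k$. Both sketches are linear in $\bA$, so they fit naturally into the turnstile model, and privacy will be enforced by adding Gaussian noise to each sketch a single time.

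Concretely, I would maintain $\bY := \bA\bOmega + \bN_1$ for a Gaussian $\bOmega\in\R^{n\times \eta}$ with $\eta=\Theta(\max\{k^2,1/\alpha\})$ and take $\bU$ to be the $Q$-factor of a thin QR decomposition of $\bY$. In parallel I would maintain $\bZ := \bS\bA+\bN_2$, where $\bS$ is a spectral-norm oblivious subspace embedding of embedding dimension $\widetilde O(\eta/\alpha)$ satisfying $(1-\alpha)\|\bU\bX-\bA\|_2\leq \|\bS(\bU\bX-\bA)\|_2\leq (1+\alpha)\|\bU\bX-\bA\|_2$ simultaneously for every matrix $\bX$ of rank at most $k$. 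The noise matrices $\bN_1,\bN_2$ are i.i.d.\ Gaussian, scaled to the $\ell_2$-sensitivity of each sketch under $\priv_1$, which is $O(\|\bOmega\|_2)$ and $O(\|\bS^\sT\|_2)$ respectively, both of order $\sqrt{m}+\sqrt{n}$ up to logarithmic factors. At query time I would compute $\widetilde{\bX}=\argmin_{r(\bX)\leq k}\|\bS\bU\bX-\bZ\|_2$ in the low-dimensional sketch space and output the SVD of the rank-$k$ matrix $\bU\widetilde{\bX}$. Privacy then follows from the Gaussian mechanism and simple composition, since all downstream computations touch only $(\bY,\bZ)$ and never $\bA$ again.

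For utility, a Halko--Martinsson--Tropp style argument applied to the noisy sketch $\bY$ should give
\begin{align*}
\|(\mathbf{I}-\bU\bU^\sT)\bA\|_2\;\leq\; \tfrac{1+\alpha}{1-\alpha}\,\Delta_k(\bA)+\widetilde O\paren{\naiveadditive},
\end{align*}
where the additive term absorbs $\|\bN_1\|_2$ amplified by the inverse of the $\eta$-th singular value of a Gaussian, which is why the oversampling $\eta=\max\{k^2,1/\alpha\}$ shows up. Combining this with the subspace embedding property of $\bS$ via a standard sketch-and-solve chain (optimality of $\widetilde{\bX}$ under the sketched norm, followed by a two-sided application of the embedding guarantee, and then a triangle inequality against the projection bound above) yields $\|\bU\widetilde{\bX}-\bA\|_2\leq \tfrac{(1+\alpha)}{(1-\alpha)^2}\Delta_k(\bA)+\widetilde O(\naiveadditive)$, which is the claimed error after truncating $\bU\widetilde{\bX}$ to rank $k$.

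The main obstacle I expect is producing a spectral-norm subspace embedding $\bS$ of the right dimension. The usual Clarkson--Woodruff and $\textsf{OSNAP}$ sketches are Frobenius-norm embeddings, and passing through $\|\cdot\|_2\leq\|\cdot\|_F$ loses a factor of $\sqrt{k}$, which would wipe out the ``one-shot'' $\sqrt{k}$ improvement over subspace iteration that the theorem is designed to exhibit. Addressing this---most likely via a Gaussian or SRHT sketch of dimension $\widetilde O(k/\alpha^3)$ combined with the oversampling already present in $\bOmega$---is what drives the $\alpha^{-2}$ numerator and the $(1-\alpha)^{3/2}$ denominator inside $\naiveadditive$, and separating the noise contribution $\|\bN_1\|_2,\|\bN_2\|_2$ from the regression error in spectral norm (rather than Frobenius norm) is where the bulk of the technical work will lie.
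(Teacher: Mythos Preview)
Your high-level architecture matches the paper's: two linear sketches $\bA\bPhi$ and $\bS\bA$, Gaussian noise on each, then a rank-constrained regression in the sketch space. The genuine gap is your sensitivity computation. You bound the $\ell_2$-sensitivity of $\bA\mapsto\bA\bOmega$ by $\|\bOmega\|_2$ and of $\bA\mapsto\bS\bA$ by $\|\bS^{\sT}\|_2$, and assert both are $\Theta(\sqrt{m}+\sqrt{n})$. That is far too loose: under $\priv_1$ neighboring matrices differ by $\bE$ with $\|\bE\|_F\le 1$, and since $\bPhi$ and $\bS$ are chosen to satisfy the Johnson--Lindenstrauss property, one has $\|\bE\bPhi\|_F\le\sqrt{1+\alpha}$ and $\|\bS\bE\|_F\le\sqrt{1+\alpha}$. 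The sensitivity is therefore $O(1)$, and the paper sets $\rho=\sqrt{(1+\alpha)\ln(1/\delta)}/\varepsilon$. The $\sqrt{m}+\sqrt{n}$ in the additive error does \emph{not} come from sensitivity; it comes from the spectral norm of the noise matrices themselves, which are $m\times t$ and $v\times n$ Gaussians with $O(1/\varepsilon)$ standard deviation. If you scaled the noise as you propose, the additive error would be $\Omega((m+n)/\varepsilon)$, not $\widetilde O((\sqrt{m}+\sqrt{n})/\varepsilon)$.

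A second, subtler divergence is your plan to bound $\|(\I-\bU\bU^{\sT})\bA\|_2$ via a Halko--Martinsson--Tropp argument on the noisy $\bY$. The paper does not do this. Instead it introduces the pair of regression problems $\min_\bX\|\bPhi^{\sT}([\bV]_k\bX-\bA^{\sT})\|_2$ and $\min_\bX\|[\bV]_k\bX-\bA^{\sT}\|_2$ and relates their optima through the approximate-matrix-multiplication property of $\bPhi$ (\lemref{mult}). The payoff of this route is that the noise term emerges in the specific form $\|\bN_1([\bV]_k^{\sT}\bPhi)^\dagger[\bV]_k^{\sT}\|_2$; right-multiplying by $\bPhi$ turns this into a Gaussian projected onto a $k$-dimensional subspace, hence an effectively $m\times k$ Gaussian with spectral norm $O(\rho(\sqrt{m}+\sqrt{k}))$. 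An HMT-style deterministic bound does not obviously isolate this projection structure, and without it you would be stuck with $\|\bN_1\|_2=O(\rho(\sqrt{m}+\sqrt{t}))$, which is worse. Finally, the spectral-norm embedding for $\bS$ that you flag as the main obstacle is handled fairly directly in the paper (\claimref{thirdcomplete}) using the normal equations $\bU^{\sT}(\bU\h{\bX}-\bB)=0$ together with $(\alpha^2,\delta)$-{\sf JLP} and sub-multiplicativity; it is less of a hurdle than the noise accounting you have underestimated.
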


In practice, $k \ll \max \set{m,n}$ and $\alpha$ is  a small constant. In that case, we  have $(1-\alpha)^{-1} \approx (1+\alpha)$. If we scale the value of $\alpha$ appropriately, then we  have the following corollary to~\thmref{informalnaive}.

\begin{corollary}  (Informal statement of~\corref{spectral}). \label{cor:informalspectral}
Let $m,n,\varepsilon,\delta$ be as in~\thmref{informalnaive}  with $m \geq n$. Then there exists an efficient 	 $(\varepsilon,\delta)$-differentially private algorithm under $\priv_1$ that  solves $(\varepsilon,\delta, \gamma,1/10,\zeta,k)\mbox{-}\lrf$ in $O_\delta(m\eta \alpha^{-1} + n \eta  \alpha^{-3})$ space, where  $\gamma:= {(1 +\alpha )}$ {and} $ \zeta:= \widetilde O \paren{ \naivecor }.$
\end{corollary}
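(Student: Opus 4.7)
The plan is to derive the corollary directly from \thmref{naive} (Theorem~\ref{thm:informalnaive}) by a rescaling of the parameter $\alpha$ and by absorbing constant factors into the asymptotic notation. The algorithm, privacy guarantee, and asymptotic space bound all carry over verbatim; only the error bound needs to be massaged into the cleaner form stated in the corollary.

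First, I would invoke \thmref{naive} with the parameter $\alpha$ replaced by $\alpha' := \alpha/c$ for a sufficiently large absolute constant $c$ (depending only on the range of $\alpha \in (0,1)$ we wish to cover). The resulting multiplicative factor is $(1+\alpha')/(1-\alpha')^2$, and the main algebraic step is to verify that this is at most $1+\alpha$. Writing $x=\alpha/c$, one has
\begin{equation*}
 \frac{1+x}{(1-x)^2} \;=\; 1 + 3x + 5x^2 + 7x^3 + \cdots,
\end{equation*}
so the excess over $1$ is $3\alpha/c + O((\alpha/c)^2)$. Choosing $c$ large enough as a function of the constant $\alpha$ (for small $\alpha$, $c=4$ already suffices) makes this at most $\alpha$, which establishes the multiplicative part $\gamma = 1+\alpha$.

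Next, I would simplify the additive error. After the rescaling, the term $\widetilde{O}\paren{\naiveadditive}$ of \thmref{naive} becomes
\begin{equation*}
 \widetilde{O}\paren{\frac{\sqrt{m}+\sqrt{n}+(\alpha/c)^{-2}}{(1-\alpha/c)^{3/2}\varepsilon}}.
\end{equation*}
Since $\alpha$ is a constant, the factors $(\alpha/c)^{-2}$ and $(1-\alpha/c)^{-3/2}$ are themselves constants, so the former is absorbed additively and the latter multiplicatively into the $\widetilde{O}(\cdot)$. In the non-degenerate regime $\sqrt{m}+\sqrt{n} = \omega(1)$ the constant $(\alpha/c)^{-2}$ is dominated by $\sqrt{m}+\sqrt{n}$, so the additive error collapses to $\widetilde{O}\paren{\naivecor}$, matching the $\zeta$ claimed in the corollary.

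Finally, the $(\varepsilon,\delta)$-differential privacy under $\priv_1$ is unaffected by the rescaling (the privacy analysis of \thmref{naive} does not depend on $\alpha$ in any way that changes by the substitution $\alpha\mapsto \alpha/c$), and the space bound $O_\delta(m\eta\alpha^{-1}+n\eta\alpha^{-3})$ changes only by constant factors that are swallowed by the $O_\delta$ notation. Putting these three pieces together, the algorithm of \thmref{naive} invoked with $\alpha' = \alpha/c$ solves $(\varepsilon,\delta,1+\alpha,1/10,\zeta,k)\mbox{-}\lrf$ with the stated resources. The only non-routine step is the algebraic inequality for $(1+\alpha')/(1-\alpha')^2$, and even that reduces to choosing $c$ appropriately, so there is no real obstacle beyond careful bookkeeping.
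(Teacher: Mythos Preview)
Your proposal is correct and follows exactly the approach the paper uses: the paper's entire justification for \corref{spectral} is the remark that for small constant $\alpha$ one has $(1-\alpha)^{-1}\approx 1+\alpha$ and that ``if we scale the value of $\alpha$ appropriately'' the corollary follows from \thmref{naive}. You have simply filled in the bookkeeping (the explicit rescaling $\alpha'=\alpha/c$, the series expansion of $(1+x)/(1-x)^2$, and the absorption of the $\alpha^{-2}$ and $(1-\alpha)^{-3/2}$ factors into the $\widetilde O$) that the paper leaves implicit.
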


\begin{remark} \junote{Optimality wrt additive error. Can we claim this?}
 Hardt and Roth~\cite[Thm 1.2]{HR13} showed that any differentially private algorithm incurs an additive error  $\zeta=\Omega(\varepsilon^{-1}\sqrt{n})$ for an $n \times n$  matrix. Their lower bound holds even when the  algorithm can access the input matrix any number of times. \corref{informalspectral} shows that we can achieve the same bound, up to a logarithmic factor, even when the   matrix is updated in the turnstile manner at the cost of small multiplicative error that depends only on the $(k+1)$-th singular value.
\end{remark}

\contrib{Improving the Space Bound Under $\priv_2$}  The  algorithm {\scshape Spectral-LRF} had a space requirement that depends on the dimension of both $\bS$ and $\bU$. 
However,  if the matrix is almost square, then the storage required to store $\bS \bA$ is much higher than that to store the sketch that is used to compute $\bU$. 
A direct observation then is that, if we can simultaneously generate an orthonormal matrix $\bV$ which is a ``faithful" representation of $[\bV]_k$, then $\t{\bX}:=[\bSigma]_k \approx \argmin_{\mathsf{r}(\bX) \leq k} \| \bU \bX \bV^\sT - \bA \|_2.$ Further, if we pick $\mathbf{Q}$ and $\mathbf{R}$ that satisfy similar properties as $\bS$ in~\secref{informalnaive}, then $\| \mathbf{Q}(\bU \t{\bX} \bV^\sT  - \bA ) \mathbf{R} \|_2 \approx  \| [\bA]_k - \bA\|_2$
Therefore, we can output the product of matrices $\bU$, $\argmin_{r(\bX)\leq k} \| \mathbf{Q}(\bU \t{\bX} \bV^\sT  - \bA) \mathbf{R} \|_2$, and $\bV^\sT$. This forms the basis of {\scshape Low-Space-LRF}. 

\begin{theorem} (Informal statement of~\thmref{space}). \label{thm:informalspace}
 	Let  $m, n \in \N$ (where $m \leq n$) and $\varepsilon,\delta$ be the input parameters.  Let $k$ be the desired rank of the factorization.  Let $0<\alpha <1$ be an arbitrary  constant and $\eta=\max \set{k^2,\alpha^{-1}}$. Given an $m \times n$ matrix $\bA$   in the turnstile update model,  there exists an efficient	 $(\varepsilon,\delta)$ differentially private algorithm that  outputs a $k$-rank factorization ${\bU}_k, {\bSigma}_k ,{\bV}_k$ using $O_\delta((m+n)\eta \alpha^{-1} \log(1/\delta) )$ space, such that with probability at least $9/10$ over the coin tosses of the algorithm,
\begin{align*}   \| \bA -{\bU}_k {\bSigma}_k {\bV}_k^\sT \|_2 &\leq \frac{(1 +\alpha )^2}{(1-\alpha)^4} \Delta_k(\bA) 
 + \widetilde O  \spaceadditive    .\end{align*}  
\end{theorem}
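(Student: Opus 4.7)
I would maintain, in a single pass through the turnstile stream, four linear sketches of $\bA$: a right-multiplied sketch $\bA \bPhi$ with $\bPhi \in \R^{n \times O(\eta \alpha^{-1})}$ that yields an orthonormal basis $\bU$ for an approximate top-$k$ column space of $\bA$; a left-multiplied sketch $\bPsi^\sT \bA$ that yields an analogous orthonormal row-space basis $\bV$; and an affine-embedding triple $(\bQ \bA,\, \bA \bR,\, \bQ \bA \bR)$, where $\bQ$ and $\bR$ are subspace-embedding matrices with $O(k/\alpha)$ rows (respectively columns) chosen so that, simultaneously for every rank-$k$ matrix of the form $\bU \bX \bV^\sT$, the sketched spectral norm is within a $(1 \pm \alpha)$ factor of the true spectral norm. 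Each sketch has large dimension at most $\max(m,n)$ and small dimension $O(\eta \alpha^{-1} \log(1/\delta))$, so the total storage is $O((m+n)\eta\alpha^{-1}\log(1/\delta))$, matching claim (1). The released factorization is $(\bU, \widetilde\bX, \bV)$, where $\widetilde\bX := \argmin_{r(\bX)\leq k}\|\bQ(\bU \bX \bV^\sT - \bA)\bR\|_2$ is computed entirely on the sketches.

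\textbf{Privacy under $\priv_2$.} Under $\priv_2$ a neighboring stream alters $\bA$ by a rank-$1$ matrix $uv^\sT$ with $\|u\|_2\|v\|_2 \leq 1$, so every stored linear sketch $\bM \bA \bN$ changes by a rank-$1$ matrix of operator norm at most $\|\bM u\|_2\|\bN^\sT v\|_2$. I would add independent Gaussian noise to each sketch at the scales $\rho_1$, $\rho_2$, and $\sigma_{\mathsf{min}}$ prescribed in the theorem statement and invoke the Gaussian mechanism with advanced composition over the four releases; the crucial rank-$1$ structure of the sensitivity, which is precisely the advantage $\priv_2$ gives over $\priv_1$, lets the noise scales depend only on $\sqrt{\ln(1/\delta)}/\varepsilon$ rather than paying an additional $\sqrt{mn}$-type cost. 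Continual release then follows by running each of the privatized linear sketches through the binary-tree mechanism, exactly as in Spectral-LRF.

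\textbf{Utility.} I would analyse the error in two stages. First, conditional on the privacy noise, a randomized range-finder argument shows that $\bU$ is a $(1+\alpha)/(1-\alpha)^2$-faithful approximation of the top-$k$ left singular subspace of $\bA$ up to additive error controlled by $\rho_1$, and symmetrically for $\bV$. Composing the two one-sided guarantees yields
\begin{equation*}
\min_{r(\bX)\leq k}\|\bU \bX \bV^\sT - \bA\|_2 \;\leq\; \frac{(1+\alpha)^2}{(1-\alpha)^4}\,\Delta_k(\bA) \;+\; O\!\paren{\frac{k\alpha^{-1}\sqrt{\ln(1/\delta)}}{\varepsilon(1-\alpha)^3}},
\end{equation*}
where the squared multiplicative factor captures the distortion incurred independently on each side of the factorization and the additive term absorbs the two Gaussian noise contributions. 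Finally, the affine-embedding property of $(\bQ,\bR)$ implies that $\widetilde\bX$ is a $(1\pm\alpha)$-approximate minimizer of the true unrestricted spectral-norm objective, and this last factor is folded into the $(1-\alpha)^4$ denominator, yielding the advertised bound.

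\textbf{Main obstacle.} The technically delicate step is establishing the two-sided affine embedding in the \emph{spectral} norm rather than the Frobenius norm, since standard oblivious subspace embeddings are stated in the latter. I would handle this by combining a net argument over the unit ball of the $O(k^2)$-dimensional space $\{\bU \bX \bV^\sT : r(\bX)\leq k\}$ with Gaussian (or SRHT) concentration of the sketched spectral norm; this is what forces the $k\alpha^{-1}\ln(1/\delta)$ contribution to the additive error. A secondary concern is ensuring that the Gaussian privacy noise added to the sketches does not corrupt either the range-finder or the embedding guarantee, which can be controlled by the choice of $\rho_1$ and $\rho_2$ so that the noise is dominated by the signal on the relevant $O(k)$-dimensional subspaces.
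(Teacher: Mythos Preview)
Your privacy mechanism is exactly the one the paper rules out at the opening of Section~\ref{sec:space}. If you add independent Gaussian noise to \emph{both} the column sketch $\bA\bPhi+\bN$ and the row sketch $\bPsi\bA+\bN'$, then when you unwind the utility analysis the additive error contains a term of the form $\|\bN\,([\bA]_k^{\mathsf T}\bPhi)^\dagger\,(\bPsi\bA\bPhi([\bA]_k^{\mathsf T}\bPhi)^\dagger)^\dagger\,\bN'\|_2$. That expression involves the pseudoinverse of a product whose singular values track those of $\bA$ itself, so if the top singular values of $\bA$ are as small as $1/n$ the term is $\Omega(n)$ and the advertised $\sqrt{m}+\sqrt{n}$ additive bound fails. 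Your sentence ``the crucial rank-$1$ structure of the sensitivity \ldots\ lets the noise scales depend only on $\sqrt{\ln(1/\delta)}/\varepsilon$'' is true for the \emph{sensitivity}, but it does nothing to control this data-dependent blow-up in the \emph{utility}.

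The paper's fix has two ingredients you are missing. First, it pads the input to $\widehat{\bA}=(\bA\ \ \sigma_{\min}\I_m)$ so that all singular values are at least $\sigma_{\min}$; the parameter $\sigma_{\min}$ in the theorem statement is the scale of this identity padding, not a noise variance as you treat it. Second, the column-space sketch is privatised by \emph{input perturbation}: one stores $\bY_c=\widehat{\bA}\,\bPhi\,\bOmega$ with $\bOmega$ Gaussian and no additive noise, and proves privacy via a multivariate-Gaussian argument in the style of Blocki \emph{et al.}, which is only well-defined because the padding forces full rank. Output perturbation is then applied to the remaining two sketches $\bPsi\widehat{\bA}$ and $\bS\widehat{\bA}\bT$. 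The $k\alpha^{-1}\ln(1/\delta)/\varepsilon$ contribution you attribute to a net argument over $\{\bU\bX\bV^{\mathsf T}\}$ in fact arises from $\sigma_{\min}$ entering via Lemmas~\ref{lem:first} and~\ref{lem:third}; the spectral-norm control of $\bS$ and $\bT$ is obtained not by a net but by iterating the one-sided inequality~\eqref{eq:S} already established in the proof of \thmref{naive}.
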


As before we get analogous results to~\corref{informalspectral}. 
\begin{corollary}  (\corref{space}, informal). \label{cor:informalspace}
Let $m,n,\varepsilon,\delta$ be as in~\thmref{informalnaive} and $\alpha \in (0,1)$ be a small constant. Then there exists an 	efficient $(\varepsilon,\delta)$ differentially private algorithm that   solves $(\varepsilon,\delta, \gamma,1/10,\zeta,k)\mbox{-}\lrf$ using  $O((m+n)\eta \alpha^{-1} \log(1/\delta) )$ space, where  $\gamma:= {(1 +\alpha )}$ {and} $ \zeta:= \widetilde O \paren{\spacecor}.$
\end{corollary}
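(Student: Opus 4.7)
The plan is to derive this corollary as a direct consequence of Theorem~\ref{thm:informalspace}, which already supplies the low-space differentially private algorithm together with a spectral-norm error guarantee. The only gap between the corollary and the theorem is cosmetic simplification: reducing the multiplicative factor $(1+\alpha)^2/(1-\alpha)^4$ to the cleaner form $(1+\alpha)$, and absorbing lower-order terms of the additive error into $\widetilde{O}(\spacecor)$.

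For the multiplicative factor, I would invoke Theorem~\ref{thm:informalspace} with a rescaled accuracy parameter $\alpha' := \alpha/C$ for a sufficiently large absolute constant $C$. A Taylor expansion gives $(1+\alpha')^2(1-\alpha')^{-4} = 1 + 6\alpha' + O(\alpha'^2)$, so choosing $C = 12$ and restricting attention to $\alpha \leq 1/2$ (which is permissible since the corollary already treats $\alpha$ as a small constant) yields $(1+\alpha')^2(1-\alpha')^{-4} \leq 1+\alpha$, matching the claimed $\gamma$.

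For the additive error, Theorem~\ref{thm:informalspace} gives a bound of the form $\widetilde{O}((k\alpha'^{-1}\ln(1/\delta) + \sqrt{m} + \sqrt{n})(1-\alpha')^{-3}\varepsilon^{-1}\sqrt{\ln(1/\delta)})$. Since $\alpha$ is a constant, the $(1-\alpha')^{-3}$ factor and the $C$-rescaling change the bound only by a multiplicative constant, and the $\ln(1/\delta)$ and $\sqrt{\ln(1/\delta)}$ factors fold into $\widetilde{O}$. In the low-rank regime $k \ll \sqrt{\min\{m,n\}}$ that is relevant whenever rank-$k$ factorization is informative, the term $k\alpha'^{-1}/\varepsilon$ is dominated by $(\sqrt{m}+\sqrt{n})/\varepsilon$, leaving precisely $\widetilde{O}(\spacecor)$ as the additive error.

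The space bound in Theorem~\ref{thm:informalspace} is $O_\delta((m+n)\eta\alpha'^{-1}\log(1/\delta))$ with $\eta = \max\{k^2,\alpha'^{-1}\}$; since $\alpha' = \Theta(\alpha)$ and $\alpha$ is constant, this is $O((m+n)\eta\alpha^{-1}\log(1/\delta))$, as required. Privacy, neighbouring-relation $\priv_2$, and success probability $9/10$ are inherited verbatim from Theorem~\ref{thm:informalspace}, and efficiency follows from the fact that all costly computations in the underlying algorithm act on sketches of dimension independent of $\max\{m,n\}$. The only real obstacle is bookkeeping: making sure the chosen $C$ simultaneously controls the multiplicative factor, preserves the additive-error simplification, and keeps the space bound unchanged up to constants. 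No new probabilistic or algorithmic ingredient is needed beyond the theorem itself.
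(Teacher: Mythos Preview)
Your proposal is correct and follows essentially the same approach as the paper: the paper derives the corollary from Theorem~\ref{thm:space} by noting that when $k \ll \max\{m,n\}$ and $\alpha$ is a small constant one has $(1-\alpha)^{-1}\approx(1+\alpha)$, and then rescales $\alpha$ appropriately. Your write-up is in fact more explicit than the paper's own one-sentence justification (you spell out the Taylor expansion and the choice of rescaling constant), but the underlying argument is identical.
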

\junote{Talk about space complexity and why it is so high?}
\begin{remark}
\label{remark:space}
The dependency of space complexity on $\alpha$ is unavoidable. To see this, consider a matrix $\bA$ of rank $k+1$. In this case, approximation with respect to the Frobenius norm and spectral norm are equivalent. However, due to the result of Upadhyay~\cite{Upadhyay16}, any algorithm that achieves $(1+\alpha)$ multiplicative error and $O(n)$ additive error has to use $\Omega(nk/\alpha)$ space. Therefore, one cannot hope to get an $\lra$ under spectral norm with no multiplicative error and low space under the turnstile update model. 
\end{remark}

\subsection{Applications of Our Results}
\contrib{Application in Continual Release} One of the characteristics of our algorithms is that  $\bU$ is computed from a linear sketch of the private input matrix and $\bS \bA$ is a linear sketch. This implies that we can use the binary tree mechanism ~\cite{CSS,DNPR10} to get a low-rank factorization under continual release by paying an $O(\log T)$ factor in the additive error, where $T$ is the number of updates. This is stated in the form of the following results (for a small constant $\alpha \in (0,1)$). 
\begin{theorem}  (Informal statement of~\thmref{continual}). 
\label{thm:informalcontinual}
Let $m,n,\varepsilon,\delta$ be as in~\thmref{informalnaive}. Then there exists an efficient	 $(\varepsilon,\delta)$-differentially private algorithm under $\priv_1$ that   solves $(\varepsilon,\delta, \gamma,1/10,\zeta,k)\mbox{-}\lrf$  under continual release for $T$ time epochs, where   $\gamma:= {(1 +\alpha )}$ {and} $ \zeta:= \widetilde O  \paren{ \naivecontinual }.$
\end{theorem}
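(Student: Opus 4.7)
The plan is to lift the one-shot algorithm {\scshape Spectral-LRF} from \thmref{informalnaive} to the continual release setting by exploiting the fact that its entire access to the private matrix $\bA$ is through two linear sketches. First I would recall that the only things {\scshape Spectral-LRF} stores of $\bA$ are $\bS \bA$ and the sketch $\bY$ used to compute the orthonormal matrix $\bU$; every step after this is pure post-processing of $(\bS\bA, \bY)$ plus public randomness. Because each update $\{i,j,s_\tau\}$ adds $s_\tau \be_i \be_j^{\sT}$ to $\bA$, linearity gives $\bS(\bA + s_\tau \be_i\be_j^{\sT}) = \bS\bA + s_\tau (\bS\be_i)\be_j^{\sT}$, and analogously for $\bY$. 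Thus maintaining the sketches in the turnstile stream corresponds to maintaining two running sums of $T$ matrix-valued summands.

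Next I would invoke the binary tree mechanism of Chan~{\it et al.} and Dwork~{\it et al.} on each coordinate of the two sketches. The standard guarantee is that a running sum of a stream of scalars, each changing by at most $1$ in the neighboring stream, can be released continually under $(\varepsilon,\delta)$-differential privacy with additive Gaussian noise of scale $\widetilde O(\varepsilon^{-1}\sqrt{\log(1/\delta)})$ per prefix, at a multiplicative $O(\log T)$ overhead in the noise standard deviation relative to a single release. Applied coordinate-wise to $\bS\bA$ and $\bY$, and using that under $\priv_1$ a neighboring stream changes the sketches by $\bS\bE$ and $\bY\bE$ with $\|\bE\|_F \le 1$ (so each entry of the sketch changes by a bounded amount determined by the columns of $\bS$ and $\bY$, which are controlled exactly as in the privacy analysis of {\scshape Spectral-LRF}), the composition provides $(\varepsilon,\delta)$-differentially private estimates $\widehat{\bS\bA}^{(t)}$ and $\widehat{\bY}^{(t)}$ at every epoch $t \le T$ with noise distributed as in the one-shot analysis inflated by a factor of $O(\log T)$.

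Once the sketches are released privately, I would feed $\widehat{\bS\bA}^{(t)}$ and $\widehat{\bY}^{(t)}$ into exactly the same post-processing procedure that {\scshape Spectral-LRF} uses in \thmref{informalnaive} to produce $\widetilde\bU_k^{(t)}\widetilde\bSigma_k^{(t)}\widetilde\bV_k^{(t)\sT}$. Privacy is immediate from the post-processing property of differential privacy applied to the binary tree output. For utility, I would re-run the error analysis underlying \thmref{informalnaive}, observing that the only change is the per-entry noise magnitude, which is now larger by $O(\log T)$; the multiplicative error $(1+\alpha)/(1-\alpha)^2$ is unaffected since it comes from the deterministic sketching argument, while the additive error becomes $\widetilde O(\naivecontinual)$ after absorbing the $\log T$ factor into the $\widetilde O(\cdot)$ notation. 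Taking a union bound over the $T$ epochs costs only another $\log T$ factor, again absorbable into $\widetilde O(\cdot)$, and scaling $\alpha$ by a small constant converts the $(1+\alpha)/(1-\alpha)^2$ prefactor into $(1+\alpha)$ as in \corref{informalspectral}.

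The main obstacle I anticipate is not in the high-level reduction but in making the per-coordinate sensitivity bounds for the binary tree mechanism tight enough that no extra $\sqrt{mn}$ factors sneak in: one has to argue that the effective $\ell_2$-sensitivity of the vectorized sketches under $\priv_1$ is controlled by the spectral/Frobenius behavior of the sketching matrices $\bS$ and $\bY$ rather than by their ambient dimension, mirroring the sensitivity analysis already carried out for the one-shot theorem. Once this sensitivity bookkeeping is done correctly, the rest of the argument is a direct composition of \thmref{informalnaive} with the binary tree mechanism.
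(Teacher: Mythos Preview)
Your proposal is correct and takes essentially the same approach as the paper: maintain the two linear sketches of {\scshape Spectral-LRF} via the binary tree mechanism, incur an $O(\log T)$ blow-up in the Gaussian noise scale, and then post-process exactly as in the one-shot algorithm. The paper's own treatment is in fact terser than yours---it simply presents the algorithm {\scshape Continual-LRF} and states that a ``straightforward application of the analysis of Chan {\it et al.}'' to \corref{spectral} yields the result; your anticipated sensitivity concern is already handled by the one-shot privacy lemma, since the Frobenius-norm sensitivity of each sketch is at most $\sqrt{1+\alpha}$ regardless of the ambient dimension.
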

\begin{theorem}  (Informal statement of~\thmref{spacecontinual}).  \label{cor:informalspacecontinual}
Let $m,n,\varepsilon,\delta$ be as in~\thmref{informalspace}. Then there exists an efficient	 $(\varepsilon,\delta)$-differentially private algorithm under $\priv_2$ that   solves $(\varepsilon,\delta, \gamma,1/10,\zeta,k)\mbox{-}\lrf$  under continual release for $T$ time epochs, where   $\gamma:= {(1 +\alpha )}$ {and} $ \zeta:= \widetilde O \paren{ \spacecontinual }.$
\end{theorem}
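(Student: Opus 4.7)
The plan is to lift the one-shot algorithm {\scshape Low-Space-LRF} (whose guarantees are given in~\thmref{informalspace}/\corref{informalspace}) to the continual release setting by exploiting the crucial structural property that every access to the private matrix $\bA$ occurs only through linear sketches. Concretely, the outputs produced during {\scshape Low-Space-LRF} — objects of the form $\bS\bA$, $\bA\bR$, $\mathbf{Q}\bA$, $\mathbf{Q}\bA\bR$, etc.\ used to extract $\bU$, $\bV$ and the core $\t{\bX}$ — are all linear functions of $\bA$. Hence each turnstile update $(i,j,s_\tau)$ translates to adding $s_\tau$ times a corresponding rank-one pattern into each sketch, and at any epoch the sketches are exactly the linear sketches of the running matrix. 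This linearity is what will let the binary tree mechanism apply unchanged.

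The first step is to fix the random sketching matrices $\bS,\bR,\mathbf{Q}$ once and for all at time $0$ (they are data-independent) and view each target sketch as a sum of $T$ signed rank-one updates to sketches. Then instantiate a separate copy of the binary tree mechanism of Chan {\it et al.}~\cite{CSS} and Dwork {\it et al.}~\cite{DNPR10} on each of the sketches $\bS\bA,\bA\bR,\mathbf{Q}\bA,\mathbf{Q}\bA\bR$. By standard composition and the fact that each node in the binary tree stores a partial sum which is released with noise calibrated to the $\priv_2$ sensitivity of a single rank-one update (where, under $\priv_2$, $\bA-\bA'$ is rank-one of norm at most $1$), the overall mechanism remains $(\varepsilon,\delta)$-differentially private under $\priv_2$. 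At each epoch $\tau \le T$, the noisy partial sums along the at most $\lceil \log_2 T\rceil$ nodes covering $[1,\tau]$ are added together to produce the noisy estimate of the current sketches, which are then fed into the (data-independent) post-processing stages of {\scshape Low-Space-LRF} to produce the factorization at time $\tau$.

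For the utility bound, the only change versus the one-shot result is that each sketch is now corrupted by a sum of $O(\log T)$ independent Gaussian noises of the original variance, instead of a single copy. Since variances add, the effective noise standard deviation scales by a factor of $O(\sqrt{\log T})$; in the spectral norm, this multiplies the additive error term in~\thmref{informalspace} by a factor of $\widetilde{O}(\sqrt{\log T})$, which matches the $\widetilde O(\spacecontinual)$ bound claimed (the $\widetilde O(\cdot)$ absorbs the $\log T$ factor, as in the paper's convention). Post-processing (computing $\bU,\bV,\t{\bX}$ from the noisy sketches) does not degrade privacy and, because the guarantees of~\thmref{informalspace} were stated directly in terms of the noisy linear sketches, the multiplicative error $\gamma = (1+\alpha)$ carries over verbatim after absorbing constants through the $\alpha$-rescaling used in~\corref{informalspace}.

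The main obstacle is verifying that the sensitivity calibration used by the binary tree mechanism remains correct under $\priv_2$: a single entry-level turnstile update need not correspond to a $\priv_2$-neighbour of the final matrix, so one must argue privacy at the level of the entire update stream under the neighbouring relation ``one stream differs from another by an insertion whose cumulative effect is a rank-$1$ outer product of norm at most $1$.'' This is handled by noting that the sensitivity of each linear sketch to any rank-$1$, unit-Frobenius perturbation is controlled by the sketching matrix's spectral norm (which is a fixed, data-independent quantity concentrated by standard Johnson-Lindenstrauss-type bounds), so Gaussian noise calibrated as in~\thmref{informalspace} suffices at every node of the tree; privacy then follows from the composition analysis of~\cite{CSS,DNPR10} exactly as in the event-level case, but with the event now being a rank-one spectral perturbation rather than a single-entry change.
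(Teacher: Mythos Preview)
Your overall strategy matches the paper's: apply the binary tree mechanism of~\cite{CSS,DNPR10} to each of the linear sketches maintained by {\scshape Low-Space-LRF}, then post-process. However, there is a quantitative gap in your noise accounting, and a structural issue you have glossed over.

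\textbf{The noise-scaling error.} You write that at query time each sketch is ``corrupted by a sum of $O(\log T)$ independent Gaussian noises of the original variance,'' and hence the additive error grows by $O(\sqrt{\log T})$. This is not consistent with $(\varepsilon,\delta)$-privacy for the full release. In the binary tree mechanism every update touches $O(\log T)$ nodes, so by advanced composition each node must be released with per-node budget $\varepsilon' \approx \varepsilon/\sqrt{\log T}$ (this is exactly what the paper does in~\figref{continual}, setting $\varepsilon'=\varepsilon/\sqrt{\log T}$). Consequently the per-node noise standard deviation is already a factor $\sqrt{\log T}$ larger than the one-shot $\rho$; summing $O(\log T)$ such noises contributes another $\sqrt{\log T}$, for a total blow-up of $\log T$, not $\sqrt{\log T}$. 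This is precisely the $\log T$ appearing in $\zeta=\widetilde O(\spacecontinual)$. Your statement that ``Gaussian noise calibrated as in~\thmref{informalspace} suffices at every node'' would not yield $(\varepsilon,\delta)$-privacy overall.

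\textbf{The $\bY_c$ sketch.} You treat all sketches as protected by additive Gaussian noise, but in {\scshape Low-Space-LRF} the sketch $\bY_c=\widehat{\bA}\,\widehat{\bPhi}$ is made private by \emph{input} perturbation: the multiplicative Gaussian $\bOmega$ inside $\widehat{\bPhi}=\bPhi\bOmega/\sqrt{t}$ together with the $\sigma_{\min}\I_m$ padding (Claim~\ref{claim:Y_c}). The binary tree trick of ``add fresh Gaussian noise to each partial sum and add them up'' does not apply verbatim to this sketch, because the privacy of $\bY_c$ does not come from an additive noise term that can be split across nodes. The paper handles this by re-initializing $(\widehat{\bY}_c)_i$ at each tree node exactly as $\widehat{\bY}_c$ is initialized in~\figref{space} (in particular, with the $\sigma_{\min}\I_m$ padding), rather than by adding Gaussian noise; your proposal does not account for this distinction.
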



\contrib{Application in Non-private Setting}
As an immediate result of~\thmref{informalnaive}, when $\varepsilon \rightarrow \infty$, we get a non-private algorithm that achieves low-rank factorization in the turnstile update model with respect to the spectral norm. Prior to this work, no algorithm that outputs $\lra$ with respect to the spectral norm in the turnstile update model was known.

\begin{theorem} 
\label{thm:informalnonprivate}
Let $m,n$ be as in~\thmref{informalnaive} and $\alpha \in (0,1)$. Given a matrix $\bA \in \R^{m \times n}$  in the turnstile update model, there exists an efficient algorithm that uses $O(mk \alpha^{-1} + n   \alpha^{-4})$ space and outputs a $k$-rank factorization ${\bU}_k, {\bSigma}_k, {\bV}_k$ such that  $\| \bA - {\bU}_k {\bSigma}_k {\bV}_k^\sT \|_2 \leq  {(1 +\alpha )} \Delta_k(\bA)$. 
\end{theorem}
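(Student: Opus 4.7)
The plan is to derive Theorem~\ref{thm:informalnonprivate} as a direct corollary of Theorem~\ref{thm:informalnaive} by letting the privacy parameter $\varepsilon \to \infty$. Tracking the additive error $\widetilde{O}\bparen{\naiveadditive \sqrt{\log(1/\delta)}}$ in Theorem~\ref{thm:informalnaive}, every summand is proportional to $\varepsilon^{-1}$, so it vanishes in the non-private limit. The multiplicative factor $(1+\alpha)/(1-\alpha)^2$ survives, but by rescaling the internal accuracy parameter $\alpha' = c\alpha$ for a sufficiently small absolute constant $c$, we obtain $(1+\alpha')/(1-\alpha')^2 \leq 1+\alpha$, which is the bound claimed in the theorem.

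Concretely, I would start from the algorithm {\scshape Spectral-LRF} used to prove Theorem~\ref{thm:informalnaive} and strip out the Gaussian noise matrices that are present only to enforce $(\varepsilon,\delta)$-differential privacy. The correctness proof of Theorem~\ref{thm:informalnaive} rests on two ingredients that do not involve these noise matrices: (i) an oblivious subspace/affine embedding property of the sketching matrix $\bS$, which guarantees that $\argmin_{\mathrm{r}(\bX)\leq k}\|\bS(\bU\bX - \bA)\|_2$ solves the original regression $\|\bU\bX-\bA\|_2$ up to a $(1+\alpha)$ factor; and (ii) the fact that the column space of $\bU$, extracted from a linear sketch of $\bA$, is a faithful approximation to the top-$k$ left singular subspace of $\bA$ in the spectral norm, via a randomized range-finder analysis. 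Both properties are standard for oblivious random projections (for instance, along the lines of Halko et al.\ and Clarkson--Woodruff) and hold whether or not privacy noise is added. Removing the noise matrices therefore leaves these guarantees intact, and the resulting error is purely multiplicative, equal to $(1+\alpha)/(1-\alpha)^2 \cdot \Delta_k(\bA)$ before the rescaling above.

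Next I would re-examine the space bound. In Theorem~\ref{thm:informalnaive} the factor $\eta = \max\{k^2, 1/\alpha\}$ in the sketch dimension arises because the private algorithm must pad the sketches with $\Theta(k\alpha^{-1}\log(1/\delta))$ extra rows or columns to absorb added noise while preserving the embedding property. Without privacy, no padding is required: the column-sketch dimension can be taken as $\Theta(k/\alpha)$, which is the cost of an oblivious subspace embedding of a $k$-dimensional subspace with distortion $1+\alpha$, and the row-sketch dimension as $\Theta(1/\alpha^2)$, the standard cost of an oblivious affine embedding. Multiplying by $m$ and $n$ respectively, and accounting for the rescaling $\alpha' = \Theta(\alpha)$, yields the stated space bound $O(mk\alpha^{-1} + n\alpha^{-4})$; in particular the $\log(1/\delta)$ factor in Theorem~\ref{thm:informalnaive}, which is tied to the noise variance, disappears.

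The main subtlety, rather than an obstacle, is the bookkeeping in the rescaling step: one must verify that replacing $\alpha$ by $c\alpha$ inside the algorithm inflates the space bound only by an absolute constant (since $\alpha^{-1}$ enters polynomially), and that the same constant $c$ simultaneously delivers the bound $(1+\alpha')/(1-\alpha')^2 \leq 1+\alpha$ and the stated asymptotic space complexity. Both checks are routine. Combining them with the stripped-down version of {\scshape Spectral-LRF} described above completes the proof.
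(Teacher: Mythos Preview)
Your approach matches the paper's: Theorem~\ref{thm:informalnonprivate} is presented there as an immediate consequence of Theorem~\ref{thm:informalnaive} (equivalently, the formal Theorem~\ref{thm:naive}) obtained by sending $\varepsilon \to \infty$, which kills the additive error, and then rescaling $\alpha$ to absorb the constant $(1+\alpha)/(1-\alpha)^2$ into $1+\alpha$. The paper gives no further argument, and in the formal restatement (Corollary~\ref{cor:nonprivate}) does not even record a space bound.

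One correction to your reasoning about space: the factor $\eta = \max\{k^2, 1/\alpha\}$ in Theorem~\ref{thm:naive} is \emph{not} privacy padding. It comes from the correctness analysis: $\bPhi$ must satisfy Lemma~\ref{lem:mult} with error parameter $\sqrt{\alpha}/k$ (not $\alpha$), because the proof of Claim~\ref{claim:firstcomplete} converts a Frobenius-norm bound to a spectral-norm bound via $\|\cdot\|_F \leq \sqrt{k}\,\|\cdot\|_2$ on a rank-$k$ matrix, and this forces $t = \Omega(k^2/\alpha)$ whether or not noise is added. What does drop in the non-private case is the $\log(1/\delta)$ dependence (the failure probability can be taken constant), but the $k^2$ does not disappear by the argument you gave. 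The paper itself does not carefully derive the specific bound $O(mk\alpha^{-1} + n\alpha^{-4})$ stated informally, so this discrepancy is in the paper's exposition rather than in your overall plan.
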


Similarly, we can improve the space bound of the algorithm by setting $\varepsilon \rightarrow \infty$ in~\thmref{informalnaive}. 
\begin{theorem} 
 \label{thm:informalspacenonprivate}
Let $m,n$ be as in~\thmref{informalnaive} and $\alpha \in (0,1)$. Given a matrix $\bA \in \R^{m \times n}$ in the turnstile update model, there exists an efficient algorithm that uses $O((m + n)k \alpha^{-1} )$ space and outputs a $k$-rank factorization ${\bU}_k, {\bSigma}_k, {\bV}_k$, such that  $\| \bA - {\bU}_k{\bSigma}_k{\bV}_k^\sT \|_2 \leq  {(1 +\alpha )} \Delta_k(\bA)$. 
\end{theorem}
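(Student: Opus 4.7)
The plan is to take the \textsc{Low-Space-LRF} algorithm underlying Theorem~\ref{thm:informalspace} and instantiate it in the non-private regime $\varepsilon \to \infty$, i.e., to drop every step whose sole purpose was enforcing differential privacy. The algorithm has two components that survive this reduction: (a) linear sketches $\bS\bA$, $\bA\bT$, and $\bQ\bA\bR$ that are maintainable under turnstile updates, and (b) a post-processing step where we compute orthonormal matrices $\bU \in \R^{m \times O(k/\alpha)}$ and $\bV \in \R^{n \times O(k/\alpha)}$ capturing the top-$k$ left and right singular subspaces from the ranges of $\bA\bT$ and $\bS\bA$, and then solve $\widetilde{\bX} = \argmin_{\mathsf{r}(\bX)\le k}\|\bQ(\bU\bX\bV^{\mathsf T} - \bA)\bR\|_2$ in the sketched space. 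In the private version, both the sketches and $\bU,\bV$ are perturbed by Gaussian noise of scale $\widetilde O(\sqrt{\ln(1/\delta)}/\varepsilon)$; as $\varepsilon \to \infty$ these perturbations vanish, and so does the additive-error term in Theorem~\ref{thm:informalspace}.

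Next I would verify that the multiplicative-error analysis sketched in Section~\ref{sec:informalnaive} carries over verbatim in this noise-free regime. The argument only uses that $\bS,\bT,\bQ,\bR$ are oblivious subspace embeddings so that, simultaneously for all rank-$k$ matrices $\bX$, the sketched objective $\|\bQ(\bU\bX\bV^{\mathsf T}-\bA)\bR\|_2$ is within a $(1\pm O(\alpha))$ factor of $\|\bU\bX\bV^{\mathsf T}-\bA\|_2$, together with the fact that $\bU$ and $\bV$ are themselves $(1+O(\alpha))$-approximations to the top-$k$ singular subspaces recovered from $\bA\bT$ and $\bS\bA$. Both properties depend only on standard sketching distributions (e.g., the CountSketch-composed-with-Gaussian constructions of~\cite{CW13,KN14}), not on any noise, so the nested factor $(1+\alpha)^2/(1-\alpha)^4$ of Theorem~\ref{thm:informalspace} collapses to the desired $(1+\alpha)$ after rescaling $\alpha$ by a constant.

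For the space bound, I would observe that the $\log(1/\delta)$ factor in $O((m+n)\eta\alpha^{-1}\log(1/\delta))$ came entirely from Gaussian tail bounds needed to calibrate noise to the privacy guarantee, and that the blow-up $\eta=\max\{k^2,1/\alpha\}$ was driven by the mismatch between spectral-norm sensitivity and the $\ell_\infty$-type sensitivity required so that noise would not dominate the sketch. With noise removed neither factor is needed: a sketch dimension of $O(k/\alpha)$ suffices for an oblivious $(1\pm\alpha)$ subspace embedding of the column/row span of any rank-$k$ matrix. Summing the storage for $\bS\bA$, $\bA\bT$, and $\bQ\bA\bR$ (each an $O(k/\alpha)\times n$, $m\times O(k/\alpha)$, or $O(k/\alpha)\times O(k/\alpha)$ block) yields the claimed $O((m+n)k\alpha^{-1})$.

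The main obstacle I expect is bookkeeping: a few places in \textsc{Low-Space-LRF} introduce noise \emph{before} sketching in order to make the privacy proof compose cleanly, and these insertions have to be replaced by the corresponding off-the-shelf non-private embedding without breaking turnstile maintainability or the time efficiency. This is routine once the private algorithm is spelled out, but verifying that the correctness proof of Theorem~\ref{thm:space} continues to apply line-by-line after these substitutions is the step I would carry out most carefully.
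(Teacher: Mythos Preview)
Your overall approach---drop the Gaussian perturbations in \textsc{Low-Space-LRF} by sending $\varepsilon\to\infty$ and observe that the additive term vanishes while the multiplicative analysis survives---is exactly what the paper does; the paper's entire ``proof'' is the one-line remark preceding the theorem.

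There is, however, a gap in your space accounting. You assert that the factor $\eta=\max\{k^2,1/\alpha\}$ ``was driven by the mismatch between spectral-norm sensitivity and the $\ell_\infty$-type sensitivity required so that noise would not dominate the sketch,'' and hence disappears once noise is removed. That is not where the $k^2$ comes from. In the proof of \claimref{firstcomplete} the sketch $\bPhi$ is required to satisfy \lemref{mult} with error parameter $\sqrt{\alpha}/k$ (not $\alpha$), because the argument passes through a Frobenius-norm approximate-matrix-multiplication bound and then uses $\|\cdot\|_F\le\sqrt{k}\,\|\cdot\|_2$ to convert back to spectral norm; this forces the sketch dimension to scale like $(k/\sqrt{\alpha})^2=k^2/\alpha$. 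That step is pure correctness and has nothing to do with privacy, so simply deleting noise does not shrink $\eta$ to $k$. The $\log(1/\delta)$ factors, on the other hand, do legitimately become $O(1)$ once you only require constant success probability, so that part of your reduction is fine.

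To actually land at $O((m+n)k\alpha^{-1})$ you would need to replace the Frobenius-norm route in \eqnref{aim} by a direct spectral-norm subspace-embedding argument that only needs $O(k/\alpha)$ rows. The paper, being informal at this point, does not spell this out either; but your proposal should not claim that the existing correctness proof ``carries over verbatim'' at the smaller sketch size, because it does not.
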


\subsection{Technical Overview} \label{sec:overview}
In this section, we only present the proof idea for the correctness of  {\scshape Spectral-LRF}. The proof idea for  the correctness of  {\scshape Low-Space-LRF} and the algorithms for  continual releases follows from the intuition given in~\secref{contributions}. 
The basic idea behind {\scshape Spectral-LRF} is to compute matrix $\bU$ and find $\bS$ with certain properties.  
  Proving that our candidate choices for $\bS$ and $\bU$ satisfy the desired properties turns out to be the main technical challenge of this paper. The main complication here is that the spectral norm does not satisfy the Pythagorean theorem, which forms a key ingredient in the proof when the approximation guarantees are in the Frobenius norm~\cite{Upadhyay16}. 
In what follows,  we use the private matrix $\bA$ to give the intuition. In the actual proof, there is an extra Gaussian matrix due to use of the Gaussian mechanism to preserve privacy. We show that if we use the Gaussian mechanism appropriately, the additive error scales optimally with the dimensions of the private matrix $\bA$. 

Let $[\bU]_k [\bSigma]_k [\bV]_k^{\mathsf T}$ be a singular-value decomposition of $[\bA]_k$ and $\mathsf{r}(\cdot)$ denote the rank of the matrix. 
Let $\bPhi$ be a matrix that satisfies~\lemref{mult} for parameters $(\sqrt{\alpha}/k,1/10)$. Then our choice of $\bU$ is an orthonormal basis of the column space of $\bA \bPhi$. This implies that  $\min_{\mathsf{r}(\bX) \leq k} \| \bU \bX - \bA \|_2 \leq \min_{\mathsf{r}(\bX) \leq k} \| \bA \bPhi \bX - \bA \|_2$. We use two optimization problems to prove that $\bU$ is a good approximation to $[\bU]_k$: $ \min_{\bX} \| \bPhi^\sT ([\bV]_k \bX - \bA^\sT) \|_2$ ({\scshape Problem 1}) and  $ \min_{\bX} \| ([\bV]_k \bX - \bA^\sT) \|_2$ ({\scshape Problem 2}). 
 The result follows  by combining~\eqnref{intuition3},~\eqnref{intuition1}, and~\eqnref{intuition2} listed below.

Since {\scshape Problem 1} is a minimization problem over all allowable matrices $\bX$, using~\thmref{FT07} and the singular value decomposition of $\bPhi^\sT [\bV]_k$, we get $\t{\bX} = (\bPhi^\sT [\bV]_k)^\dagger \bPhi^\sT \bA^\sT $. Since the rank of $([\bV]_k (\bPhi^\sT [\bV]_k)^\dagger)$ is at most $ k$,   
 \begin{align}
 \min_{\mathsf{r}(\bX) \leq k} \| \bU \bX - \bA \|_2 &\leq 
  \min_{\mathsf{r}(\bX) \leq k} \| \bA \bPhi \bX - \bA \|_2 \nonumber  \\ 
  &\leq
  \| \bA \bPhi ([\bV]_k (\bPhi^\sT [\bV]_k)^\dagger)^\sT - \bA \|_2 \nonumber \\ 
  & =
 \| [\bV]_k \t{\bX} - \bA^\sT \|_2 . 
 \label{eq:intuition3}
 \end{align} 

We then show that, if $\t{\bX}$ is a solution to {\scshape Problem 1} and $\h{\bX}$ is a solution to {\scshape Problem 2}, then 
\begin{align}
\| [\bV]_k \t{\bX} - \bA^\sT \|_2 \leq (1-\alpha)^{-1} \| ([\bV]_k \h{\bX} - \bA^\sT) \|_2\label{eq:intuition1}.
\end{align}

Now if we pick $\bX=[\bSigma]_k[ \bU]_k^\sT$ in  {\scshape Problem 2}, then
 \begin{align}
 \| [\bV]_k \h{\bX} - \bA^\sT \|_2 &= \min_{\bX} \| ([\bV]_k \bX - \bA^\sT) \|_2 \nonumber \\
 & \leq  \| \bA^\sT - [\bV]_k[\bSigma]_k[ \bU]_k^\sT \|_2 = \Delta_k(\bA). \label{eq:intuition2}
 \end{align}

We sample $\bS$ from a distribution of random projection matrices that satisfies the {\em Johnson-Lindenstrauss lemma}. To prove that our choice of $\bS$ satisfies the required property, we introduce the  optimization problem: 
$ \min_{\bX} \| \bU \bX - \bA \|_2. $
Since $\bX$ can have rank at most the rank of $\bA$, using~\thmref{FT07}, we have that $\h{\bX}=\bU^\sT [\bU \bU^\sT \bA]= \argmin_{\bX} \| \bU \bX - \bA \|_2.$ This implies that $\bU^\sT(\bU \h{\bX}-\bA) = \bU^\sT(\bU\bU^\sT -\I)\bA =0$. We use this fact, the sub-additivity and sub-multiplicativity of the spectral norm, and the fact that $\|\bS \|_2 \approx 1$ if $\bS$ satisfies the Johnson-Lindenstrauss lemma to prove that $\bS$ satisfies the required property.

To preserve the privacy of {\scshape Spectral-$\lrf$}, we note that the private matrix is used twice: when we compute the  matrix $\bU$ and when we compute $\bS \bA$. Therefore, if we add appropriately scaled Gaussian matrices at these stages, then the  privacy of the algorithm follows from the fact that differential privacy is preserved under post-processing.
For the privacy proof of {\scshape Low-Space-LRF}, we use the idea used by Upadhyay~\cite{Upadhyay16} to use both the input and the output perturbation. 

\subsection{Comparison with Previous Works} \junote{Discuss how hard it is to compare with previous works. Limits of streaming and multiplicative approximation. Use Frobenius norm example.}
All the previous private algorithms compute low-rank approximation (either of the matrix or its covariance). Though it is possible to compute the factorization of their outputs, this would incur an extra $O(mn^2)$ running time for the factorization of an $m \times n$ rank-$k$ matrix. Moreover,  they require $O(mn)$ space just to store the  output (Dwork {\it et al.}~\cite{DTTZ14} requires $O(n^2)$, but they work with covariance matrices). Unlike this work, previous works also assume structured private matrix, like matrices with  singular value separation~\cite{HR13,HP14,KT13}. 
Previous works have studied three variants of $\lra$. Dwork {\it et al.}~\cite{DTTZ14} only approximate the right-singular vector while Hardt and Roth~\cite{HR13} (and subsequently, Hardt and Price~\cite{HP14} who improved the result of Hardt and Roth~\cite{HR13} and Upadhyay~\cite{Upadhyay14}) approximates both  right and left singular vectors. %
 Kapralov and Talwar~\cite{KT13} and {Jiang {\it et al.}~\cite{JXZ15}}  considered $(\varepsilon,0)$-differential privacy. On the other hand, our algorithm outputs a factorization with $(\varepsilon,\delta)$-differential privacy.  In this sense, our problem is closely related to Hardt and Price~\cite{HP14}. {We give the technical descriptions and the differences between Problem~\ref{prob:private_factor} and all the previously studied problems in~\appref{related}. } 
Here, we only compare this work with the results of Hardt and Price~\cite{HP14} and Upadhyay~\cite{Upadhyay14}. We give a comparison of the results  in Table~\ref{table:results}. 
Below, we enumerate the key differences between our result and that of Hardt and Price~\cite{HP14}. 

\begin{enumerate} 
	\item Space efficiency. The  algorithms of Hardt and Price~\cite{HP14} and Upadhyay~\cite{Upadhyay14} use  $O(mn)$ space. On the other hand, both of our algorithms use sub-linear space. 
	\item  Approximation error. Both of our bounds improve the additive error of Upadhyay~\cite{Upadhyay14} by a factor of $k^{3/2}\alpha^2 \sqrt{\log(1/\delta)}$. To make a reasonable comparison with Hardt and Price~\cite{HP14}, we consider their result without  coherence assumption.  Hardt and Price~\cite{HP14} incurs  an additive error $\widetilde O(\sigma_1 \sqrt{k(m+n) } \varepsilon^{-1})$, where $\sigma_1$ is the maximum singular value of the input matrix. We can rewrite our results to say that we incur an additive error that depends on $\alpha \sigma_{k+1} + \widetilde O (\sqrt{m+n}/\varepsilon)$. In most real world scenarios, $\sigma_{k+1} \ll \sigma_1$. In other words, we improve the result of Hardt and Price~\cite{HP14}  if $\sigma_1$  is large. Recall that the dependency on $\alpha$ is unavoidable for low-space algorithm (see, Remark~\ref{remark:space}).
	\junote{Talk about multiplicative error as well!} 
	\item Streaming constraints. Hardt and Price~\cite{HP14} is a private version of the iterative algorithm of Halko {\it et al.}~\cite{HMT11}. The algorithm of Upadhyay~\cite{Upadhyay14} is one-pass, but assumes that the matrix is streamed row-wise. A row-wise update is an easier problem (with respect to the space required) compared to the turnstile update model even  in the non-private setting as illustrated by Clarkson and Woodruff~\cite{CW09}.
\end{enumerate}

\begin{table} [t]
{
\small{
 \begin{center}
\begin{tabular}{|c|c|c|}
\hline
	  									& Neighbouring Data 								  & Multiplicative Error \\
						&	Assumptions 			 & Additive Error
  \\ \hline
{\thmref{naive}}						& $\| \bA - {\bA}'\|_F = 1$	& $ \gamma={(1+\alpha)}{(1-\alpha)^{-2}}$  \\
 $(\varepsilon,\delta)$-differential privacy    & No assumption & $ \zeta=\widetilde O  \paren{ \naiveadditive  }$ 
\\ \cline{1-3}
{\thmref{space}}						& $ \bA - {\bA}' = \mathbf{u} \mathbf{v}^{\mathsf T}$	& $ \gamma={(1+\alpha)^2}{(1-\alpha)^{-4}}$  \\
 $(\varepsilon,\delta)$-differential privacy    & No assumption & $ \zeta=\widetilde O \paren{\spaceadditive}$ 
\\ \hline

{Kapralov and Talwar~\cite{KT13}}  &	 $\| \bA\|_2 -  \|{\bA}' \|_2 = 1$ 						& $ \gamma=1$	 	 \\ 
 $(\varepsilon,0)$-differential privacy   & Singular-value separation (SVS) & $ \zeta=c \lambda_1,~\text{where}~\lambda_1 = \Omega \paren{ \frac{nk^3}{\varepsilon c^6} }$
   \\ \cline{1-3}

{Upadhyay~\cite{Upadhyay14}}					& $\bA - {\bA}'=\mathbf{e}_s \mathbf{v}^{\mathsf T}$	& $ \gamma=\poly(k)$	 \\ 
 $(\varepsilon,\delta)$-differential privacy   & No assumption & $\zeta= \widetilde O  \paren{  {k^2 \varepsilon^{-1}\sqrt{ (m+n) }  }}  $
  \\ \cline{1-3}
  
{Hardt and Price~\cite{HP14}}					& $\bA - {\bA}'=\mathbf{e}_s \mathbf{e}_t^{\mathsf T}$	& $ \gamma=1$	 \\ 
  $(\varepsilon,\delta)$-differential privacy   & Incoherence and SVS & $\zeta= \widetilde O  \paren{  \frac{\sigma_1\sqrt{k\mu \log d\log(\log d\sigma_k/(\sigma_{k}-\sigma_{k+1}))}  }{\varepsilon(\sigma_k - \sigma_{k+1})}  }$
   \\ \cline{1-3}

{Hardt and Price~\cite{HP14}}					& $\bA - {\bA}'=\mathbf{e}_s \mathbf{e}_t^{\mathsf T}$	& $ \gamma=1$ \\ 
 $(\varepsilon,\delta)$-differential privacy	& SVS   & $\zeta= \widetilde O  \paren{  \frac{\sigma_1\sqrt{k d \log d  \log(\log(d)\sigma_k/(\sigma_{k}-\sigma_{k+1}))}  }{\varepsilon(\sigma_k - \sigma_{k+1})}  }$
  \\ \hline

{Dwork {\it et al.}~\cite{DTTZ14}}			& $\bA - {\bA}' = \mathbf{e}_s \mathbf{v}^{\mathsf T}$ 	&$ \gamma=1$	 \\ 
 $(\varepsilon,\delta)$-differential privacy  & No assumptions & $ \zeta=\widetilde O  \paren{ {k\sqrt{n} }/{\varepsilon} }$
 \\ \cline{1-3}
 
{Jiang {\it et al.}~\cite{JXZ15}} 					& $\bA - {\bA}' = \mathbf{e}_s \mathbf{e}_t^{\mathsf T}$ 		& $ \gamma=1$	 \\ 
  $(\varepsilon,0)$-differential privacy  & No assumption & $ \zeta=O\paren{ {n \varepsilon^{-1} \log n}}$
 \\ \hline
\end{tabular}
\caption{Comparison of Models for Differentially Private $k$-Rank Approximation ($\beta=O(1)$, $\mathbf{u}$ and $\mathbf{v}$ are unit vectors, $\mathbf{e}_s$ is the $s$-th standard basis, and $d=m+n$).} \label{table:results}
\label{table}
 \end{center}
}}
\end{table}

\subsection{Related Works}
Low-rank approximation of large data-matrices  has received a lot of attention in the recent past in the private as well as the non-private setting.  In what follows, we give a brief exposition of those that are most relevant to this work.

In the private setting, previous works have either used random projection~\cite{CW09,KN14,Sarlos06} or random sampling (at a cost of  a small additive error) to give low-rank approximation~\cite{AM07,DKM06,DM05,DRVW06,FKV04,PTRV98,RV07}. Many of the latter algorithms were improved independently by Deshpande and Vempala~\cite{DV06} and Sarlos~\cite{Sarlos06}.  Subsequent works~\cite{CW13, MZ11,MM13,NDT09} achieved a run-time that depends linearly on the input sparsity of the matrix. 
In a series of works, Clarkson and Woodruff~\cite{CW09,CW13} showed space lower bounds and almost matching space algorithms. 
Recently, Boutsidis {\it et al.}~\cite{BWZ16} gave the first space-optimal algorithm for low-rank approximation, but they do not optimize for run-time.

In the private setting, $\lra$ has been studied under a privacy guarantee called differential privacy. Differential privacy was  introduced by Dwork {\it et al.}~\cite{DMNS06}. The Gaussian variant of this basic sanitizer was proven to preserve differential privacy by ~\cite{DKMMN06} in a follow-up work. 
Since then, many algorithms for preserving differential privacy have been proposed in the literature~\cite{DR14}.  All these mechanisms have a common theme: they perturb the output before responding to  queries. Recently, Blocki {\it et al.}~\cite{BBDS12} and Upadhyay~\cite{Upadhyay13} took a complementary approach. They perturb the input reversibly and then perform a random projection of the perturbed matrix.

Blum  {\it et al.}~\cite{BDMN05} first studied the problem of differentially private $\lra$ in the Frobenius norm. This was improved by Hardt and Roth~\cite{HR12}  under the low coherence assumption. Upadhyay~\cite{Upadhyay14} later made it a single-pass. Differentially-private $\lra$ has been studied in the spectral norm as well by many works~\cite{CSS12,KT13,HR13,HP14}.
Kapralov and Talwar~\cite{KT13} and Chaudhary  {\it et al.}~\cite{CSS12} studied the spectral $\lra$ of a matrix by giving a matching upper and lower bounds for privately computing the top $k$ eigenvectors of a matrix with pure differential privacy (i.e., $\delta=0$). In subsequent works Hardt and Roth~\cite{HR13} and Hardt and Price~\cite{HP14} improved the approximation guarantee with respect to the spectral norm by using {\em robust private subspace iteration} algorithm. 
Recently, Dwork  {\it et al.}~\cite{DTTZ14} gave a tighter analysis of Blum {\it et al.}~\cite{BDMN05} to give an optimal approximation to the right singular space, i.e., they gave a $\lra$ for the covariance matrix.  

The literature on non-private algorithms that makes few passes over the input matrix is so  extensive that we cannot hope to cover it in any detail here. In the private setting, Dwork  {\it et al.}~\cite{DNPR10} first considered streaming algorithms with privacy under the model of {\em pan-privacy}, where the internal state is known to the adversary. They gave private analogues of known sampling based streaming algorithms to answer various counting tasks, like estimating  distinct elements, cropped means, number of heavy hitters, and frequency counts. 
This was followed by results on online private learning~\cite{DTTZ14,JKT12,TS13}.


\section{Notations and Previous Known Results Used in This Paper}\label{sec:prelims}
 We let $\N$ to denote the set of natural numbers. We use bold-face capital letters to denote matrices and bold-face small letters to denote vectors. We denote by $\mathbf{0}^{m \times n}$ the all-zero $m \times n$ matrix and by $\I_n$ the $n \times n$ identity matrix. For a matrix $\bA$, we denote its best $k$-rank approximation by $[\bA]_k$, its Frobenius norm by $\| \bA \|_F$, and its spectral norm by $\| \bA \|_2$. The {\em singular-value decomposition} (SVD) of an $m \times n$ rank-$r$ matrix $\bA$ is a decomposition of $\bA$ as a product of three matrices, $\bA = \bU \bSigma \bV^{\mathsf T}$ such that $\bU \in \R^{m \times r}$ and $\bV \in \R^{n \times r}$ have orthonormal columns and $\bSigma \in \R^{r \times r}$ is a diagonal matrix with singular values of $\bA$ on its diagonal. For a matrix $\bA$, we use the symbol $\mathsf{r}(\bA)$ to denote its {\em rank} and $\det(\bA)$ to denote its {\em determinant}. The {\em Moore-Penrose pseudo-inverse} of a matrix $\bA =  \bU \bSigma \bV^{\mathsf T}$ is denoted by $\bA^\dagger$ and has a SVD $\bA^\dagger =  \bV \bSigma^\dagger \bU^{\mathsf T}$, where $\bSigma^\dagger$ consists of inverses of only non-zero singular values of $\bA$. 
Given a random variable $x$, we denote by $\cN(\mu, \rho^2)$ the fact that  $x$ has a normal Gaussian distribution with mean $\mu$ and variance $\rho^2$. 

Let $\alpha, \beta >0$. A distribution $\cD$ over ${p \times n}$ random matrices satisfies {\em $(\alpha,\beta)$-Johnson-Lindenstrauss property} {(\em {\sf JLP})} if, for   any unit vector $\bx \in \R^n$, we have $(1- \alpha) \leq \|\bPhi \bx \|_2^2  \leq (1 + \alpha)$ with probability $1-\beta$ over $\bPhi \sim \cD$.  A distribution $\cD$ of $p\times m$ matrices satisfies {\em $(\alpha,\beta)$-subspace embedding} for a matrix $\bA \in \R^{m \times n}$ if, for all $\bx \in \R^n$, with probability $1-\beta$ over $\bPhi \sim \cD$,  $(1-\alpha) \|  \bA \bx \|_2 \leq \| \bPhi \bA \bx \|_2 \leq (1+\alpha) \|  \bA \bx \|_2.$

 \begin{lemma} \label{lem:mult} (Clarkson and Woodruff~\cite{CW13})
Let $\bB_1$ and $\bB_2$ be arbitrary matrices with $m$ rows such that $\bB_1$ has rank-$r$. Let $\cD$ be a distribution over $p \times m$ random matrices that satisfies the $(\alpha,\beta)$-subspace embedding for $\bB_1$. 
 Then there exists a $p=\Theta(\alpha^{-2})$ such that, with probability at least $1-\beta$ over $\Phi \sim \cD$, 
 \begin{align*} {\| \bB_1^{\mathsf T} \bPhi^{\mathsf T} \bPhi \bB_2 - \bB_1^{\mathsf T} \bB_2 \|_F^2 \leq {\alpha^2} \| \bB_1 \|_F^2 \| \bB_2 \|_F^2 }.\end{align*}
\end{lemma}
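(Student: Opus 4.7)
My plan is to prove this bound via a standard second-moment argument followed by Markov's inequality. Write $\mathbf{b}_1,\ldots,\mathbf{b}_{n_1}$ for the columns of $\bB_1$ and $\mathbf{c}_1,\ldots,\mathbf{c}_{n_2}$ for the columns of $\bB_2$. The starting observation is that the squared Frobenius norm in question decomposes entry-wise:
\begin{align*}
\| \bB_1^\sT \bPhi^\sT \bPhi \bB_2 - \bB_1^\sT \bB_2 \|_F^2 = \sum_{i,j} \bigl( \mathbf{b}_i^\sT \bPhi^\sT \bPhi \mathbf{c}_j - \mathbf{b}_i^\sT \mathbf{c}_j \bigr)^2,
\end{align*}
so by linearity of expectation it suffices to control the per-pair expected squared error $\E[(\mathbf{b}_i^\sT \bPhi^\sT \bPhi \mathbf{c}_j - \mathbf{b}_i^\sT \mathbf{c}_j)^2]$ for each fixed pair of columns.

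To handle each such term, I would apply the polarization identity $\mathbf{u}^\sT \mathbf{v} = \tfrac{1}{4}(\|\mathbf{u}+\mathbf{v}\|^2 - \|\mathbf{u}-\mathbf{v}\|^2)$ simultaneously to $(\bPhi\mathbf{b}_i,\bPhi\mathbf{c}_j)$ and to $(\mathbf{b}_i,\mathbf{c}_j)$; this expresses the inner-product error as a signed difference of two squared-norm errors of the form $\|\bPhi\mathbf{x}\|^2 - \|\mathbf{x}\|^2$, evaluated at $\mathbf{x} = \mathbf{b}_i \pm \mathbf{c}_j$. The quantitative input I would then invoke is the JL-moment bound $\E[(\|\bPhi\mathbf{x}\|^2 - \|\mathbf{x}\|^2)^2] \leq C\alpha^2 \|\mathbf{x}\|^4 / p$, which is the natural second-moment strengthening of the subspace-embedding hypothesis and holds for the standard sketching distributions (Gaussian, subsampled randomized Hadamard, sparse JL/OSNAP). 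Squaring the polarization identity, using $(a-b)^2 \leq 2a^2+2b^2$, substituting the JL-moment bound, and rescaling to unit vectors yields the per-entry estimate
\begin{align*}
\E\bigl[(\mathbf{b}_i^\sT \bPhi^\sT \bPhi \mathbf{c}_j - \mathbf{b}_i^\sT \mathbf{c}_j)^2\bigr] \leq \frac{C'\alpha^2}{p}\, \|\mathbf{b}_i\|^2 \|\mathbf{c}_j\|^2.
\end{align*}

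Summing over $(i,j)$ factors cleanly as $\sum_i \|\mathbf{b}_i\|^2 \cdot \sum_j \|\mathbf{c}_j\|^2 = \|\bB_1\|_F^2 \|\bB_2\|_F^2$, giving
\begin{align*}
\E\bigl[\| \bB_1^\sT \bPhi^\sT \bPhi \bB_2 - \bB_1^\sT \bB_2 \|_F^2\bigr] \leq \frac{C'\alpha^2}{p}\, \|\bB_1\|_F^2 \|\bB_2\|_F^2.
\end{align*}
Choosing $p = \Theta(\alpha^{-2})$ with a sufficiently large hidden constant and applying Markov's inequality then delivers the claimed tail bound, with the failure probability $\beta$ absorbed into the constant factor hidden in $p$. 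The main obstacle is precisely the per-pair variance estimate above: the subspace-embedding property stated in the hypothesis controls only vectors in the column span of $\bB_1$, whereas $\mathbf{b}_i \pm \mathbf{c}_j$ generally falls outside that span. To be fully rigorous one therefore either strengthens the assumption on $\cD$ to a JL-moment condition (which is automatic for the concrete sketches invoked elsewhere in this paper) or derives the required second-moment bound directly for the specific distribution at hand.
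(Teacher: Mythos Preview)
The paper does not prove this lemma; it is quoted as a known result from Clarkson and Woodruff~\cite{CW13} and used as a black box. So there is no proof in the paper to compare against.

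That said, your argument is exactly the standard route to approximate matrix multiplication: entry-wise decomposition, a second-moment bound on each inner-product error, summation, and Markov. You have also correctly put your finger on the one real issue. The hypothesis as written is only an $(\alpha,\beta)$-subspace embedding for $\bB_1$, which controls $\|\bPhi\bx\|$ for $\bx$ in the column span of $\bB_1$; your polarization step needs control over $\mathbf{b}_i\pm\mathbf{c}_j$, which lies outside that span unless $\bB_2$'s columns happen to live there too. The actual result in~\cite{CW13} (and earlier in Sarl\'os) is proved under a JL-moment or $(\alpha,\delta,2)$-JL hypothesis on $\cD$, not merely a subspace-embedding hypothesis for $\bB_1$, and that is what makes the per-pair variance bound go through. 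The lemma as stated here is slightly loosely worded; your diagnosis and proposed fix (assume the JL-moment property, which all the sketches used in this paper satisfy) are the right ones.
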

The tuple $(\alpha,\beta)$ is called the {\em error parameters.}

\section{Differentially Private Algorithm for Spectral {\scshape LRF} in the Turnstile Model} \label{sec:naive}
In this section, we present our space-efficient private algorithm for  $\lrf$ under $\mathsf{Priv}_1$ with respect to the spectral norm in the turnstile update model. Our privacy level, $\priv_1$, is stronger than that considered by previous $(\varepsilon,\delta)$-differentially private algorithms~\cite{DTTZ14,HR13,HP14}. Our algorithm formalizes the ideas mentioned in~\secref{overview}. In particular, we show that we can compute a set of $k$-orthonormal basis vectors that approximates the top-$k$ left singular space of the private matrix in the turnstile update model.
The details of our algorithm {\scshape Spectral-LRF} is presented in~\figref{private}. We prove the following bound about   {\scshape Spectral-LRF}.


\begin{figure} [t]
\begin{center} 
\fbox
{
\begin{minipage}[l]{5.5in}
\small{
\medskip \noindent \textbf{Initiialization.} 
	Let $\eta:=\max\set{k^2,1/\alpha}$, $\rho:={\sqrt{(1+\alpha) \ln(1/\delta)}}/{\varepsilon}$,  $t=O(\eta \alpha^{-1} \log(k/\alpha) \log (1/\delta))$, and $v=O(\eta \alpha^{-3} \log (t/\alpha)  \log(1/\delta))$. 
Sample $\bN_1 \sim \cN(0,\rho^2)^{m \times t}$ and $\bN_2 \sim \cN(0,\rho^2)^{v \times n}$ 
	Let $\bPhi \in \R^{n \times t}$ be such that $\bPhi \sim \cD_R$ satisfies the statement of~\lemref{mult} for the error parameters $(\sqrt{\alpha}/k,\delta)$.
	Let  $\bS \in \R^{v \times m}$ such that $\bS \sim \cD_A$ satisfies $(\alpha^2,\delta)$-{\sf JLP}. Initialize  all ${m \times t}$ zero matrix $\h{\bY}$ and an all zero ${v \times n}$ matrix $\h{\bZ}$. Publish $\bS$ and $\bPhi$. $\bN_1$ and $\bN_2$ are private matrices.

\medskip
\medskip \noindent \textbf{Update rule and computing the factorization.}  Suppose at time $\tau$, the stream is $(i_\tau,j_\tau,s_{\tau})$, where $(i_\tau,j_\tau) \in [m] \times [n]$. Let $\bA_\tau$ be a matrix with the only non-zero entry $s_\tau$ in  the position $(i_\tau,j_\tau)$. Update the matrices by the following rule: $\h{\bY} \leftarrow \h{\bY} + \bA_\tau \bPhi$ and  $\h{\bZ} \leftarrow \h{\bZ} +  \bS\bA_\tau$. 

\medskip
 Once the matrix is streamed, we follow the following steps.
\begin{enumerate}
	\item 
	 Compute the singular value decomposition $\widetilde{\bU} \widetilde{\bSigma} \widetilde{\bV}^{\mathsf T}$ of $\mathbf{S} \bU \in \R^{v \times t}$, where $\bU \in \R^{m \times t}$ is a matrix whose columns are an orthonormal basis  for the column space of $(\h{\bY} + \bN_1)$. 
	\item Compute the singular value decomposition of $ \widetilde{\bV} \widetilde{\bSigma}^{\dagger} \widetilde{\bU}^{\mathsf T} [\widetilde{\bU} \widetilde{\bU}^{\mathsf T}(\h{\bZ}+\bN_2)]_k \in \R^{t \times n}$. 
	Let it be $\bU' \bSigma' \bV'^{\mathsf T}$. 
	\item Output $\bU_k:=\bU \bU'$, $\bSigma_k:=\bSigma'$ and $\bV_k:=\bV'$. Let $\mathbf{M}_k = \bU_k \bSigma_k \bV_k^{\mathsf T}.$
\end{enumerate}
}\end{minipage}
} \caption{Differentially private Low-rank Factorization ({\scshape Spectal-LRF})} \label{fig:private}
\end{center}
\end{figure}

\begin{theorem} \label{thm:naive}
\naivetheorem
 \end{theorem}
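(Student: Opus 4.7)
The plan is to prove the three properties --- space, privacy, correctness --- in turn, with the first two being relatively mechanical and the third absorbing the bulk of the work. For space, the storage is dominated by the two sketches $\h{\bY}\in\R^{m\times t}$ and $\h{\bZ}\in\R^{v\times n}$ together with the noise matrices $\bN_1,\bN_2$ of matching dimensions; plugging in $t=O(\eta\alpha^{-1}\log(k/\alpha)\log(1/\delta))$ and $v=O(\eta\alpha^{-3}\log(t/\alpha)\log(1/\delta))$ with $\eta=\max\{k^2,1/\alpha\}$ yields the claimed $O((m\eta\alpha^{-1}+n\eta\alpha^{-3})\log(k/\alpha^2)\log(1/\delta))$ bound, treating the public sketching matrices $\bPhi,\bS$ as pseudorandomly generated. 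For privacy, the algorithm touches $\bA$ only through the two released quantities $\h{\bY}+\bN_1=\bA\bPhi+\bN_1$ and $\h{\bZ}+\bN_2=\bS\bA+\bN_2$, so everything downstream is post-processing; for $\priv_1$-neighbors with $\|\bA-\bA'\|_F\leq 1$ the $\ell_2$-sensitivity of either release is $\|\bPhi\|_2,\|\bS\|_2\leq\sqrt{1+\alpha}$ with high probability (by the JL-type property), so Gaussian noise of scale $\rho=\sqrt{(1+\alpha)\ln(1/\delta)}/\varepsilon$ yields $(\varepsilon/2,\delta/2)$-DP per release by the standard Gaussian mechanism, and basic composition gives $(\varepsilon,\delta)$-DP under $\priv_1$.

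For correctness, I would follow the outline of Section~\ref{sec:overview}. Let $\bU\in\R^{m\times t}$ be the orthonormal basis of $\bA\bPhi+\bN_1$ computed by the algorithm, so that by construction $\mathbf{M}_k=\bU\cdot(\bS\bU)^\dagger[\widetilde{\bU}\widetilde{\bU}^T(\bS\bA+\bN_2)]_k$. The proof splits into two halves. First, I would show that $\bU$ is a good approximate left-$k$-singular basis: ignoring $\bN_1$ momentarily and using that $\bU$ spans the columns of $\bA\bPhi$, one has $\min_{\mathsf{r}(\bX)\leq k}\|\bU\bX-\bA\|_2\leq\min_{\mathsf{r}(\bX)\leq k}\|\bA\bPhi\bX-\bA\|_2$; plugging in the rank-$k$ witness $\bX^\star=([\bV]_k(\bPhi^T[\bV]_k)^\dagger)^T$ and invoking Lemma~\ref{lem:mult} for $\bPhi$ applied to $[\bV]_k$ bounds the right-hand side by $(1-\alpha)^{-1}\Delta_k(\bA)$; reinstating $\bN_1$ adds an additive $O(\|\bN_1\|_2)=O(\rho(\sqrt{m}+\sqrt{t}))$ with high probability. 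Second, I would show that $\bS$ is an affine embedding for the rank-$k$ regression $\min_{\mathsf{r}(\bX)\leq k}\|\bU\bX-\bA\|_2$: the key observation is that for orthonormal $\bU$ the unconstrained optimum $\hat{\bX}=\bU^T\bA$ satisfies $\bU^T(\bU\hat{\bX}-\bA)=0$, and combining this orthogonality with the JLP of $\bS$, the near-isometry $\|\bS\|_2\approx 1$, and approximate-matrix-multiplication bounds (again via Lemma~\ref{lem:mult}) gives
\begin{align*}
\|\bU\mathbf{M}'-\bA\|_2 \leq \frac{1+\alpha}{1-\alpha}\min_{\mathsf{r}(\bX)\leq k}\|\bU\bX-\bA\|_2 + O\!\left(\|\bN_2\|_2\right),
\end{align*}
where $\mathbf{M}'=(\bS\bU)^\dagger[\widetilde{\bU}\widetilde{\bU}^T(\bS\bA+\bN_2)]_k$. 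Chaining this with the first half, using $\|\bN_2\|_2=O(\rho(\sqrt{v}+\sqrt{n}))$, and collecting the two factors of $(1-\alpha)^{-1}$ into the claimed $(1-\alpha)^{-2}$ produces exactly the inequality stated in the theorem.

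The hard part will be the second half of the correctness argument. The spectral norm lacks the Pythagorean identity, so the affine-embedding step cannot be reduced to a clean orthogonality decomposition as in the Frobenius analogue of~\cite{Upadhyay16}; instead, I will have to chain sub-multiplicativity of $\|\cdot\|_2$, the near-isometry $\|\bS\|_2\approx 1$, and careful triangle-inequality bookkeeping for the best-rank-$k$ truncation $[\widetilde{\bU}\widetilde{\bU}^T(\bS\bA+\bN_2)]_k$ (which itself costs a factor absorbed into $(1+\alpha)$). Balancing these three sources of loss --- the $\bPhi$-sketching error, the $\bS$-sketching error, and the rank-$k$ truncation of the perturbed projection --- simultaneously inside a single operator-norm inequality is the technical heart of the proof.
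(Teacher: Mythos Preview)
Your proposal is correct and follows essentially the same architecture as the paper's proof: the space and privacy arguments match Lemmas~\ref{lem:private} and the opening paragraph verbatim, and your two-step correctness argument (first show $\bU$ is a good left basis via the rank-$k$ witness $([\bV]_k(\bPhi^{\mathsf T}[\bV]_k)^\dagger)^{\mathsf T}$, then show $\bS$ approximately preserves the constrained regression using the orthogonality $\bU^{\mathsf T}(\bU\hat{\bX}-\bA)=0$) is exactly the decomposition into Claims~\ref{claim:firstcomplete} and~\ref{claim:thirdcomplete}. Two small refinements the paper makes that you will discover when filling in details: the $\bN_2$ noise is handled by working throughout with the surrogate $\bB=\bA+\bS^\dagger\bN_2$ rather than tacking on $O(\|\bN_2\|_2)$ at the end, and the $\bN_1$ contribution is bounded as $\|\bN_1([\bV]_k^{\mathsf T}\bPhi)^\dagger[\bV]_k^{\mathsf T}\|_2=O(\rho(\sqrt{m}+\sqrt{k}))$ via a rank-$k$ projection argument (Lemma~\ref{lem:naiveN_1}) rather than the cruder $O(\|\bN_1\|_2)=O(\rho(\sqrt{m}+\sqrt{t}))$---though your looser bound still suffices for the stated additive error since $\sqrt t\le\sqrt m$.
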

When $m <n$, we can get similar bounds with the roles of $m$ and $n$ reversed and by using $\bA^\sT$ in~\figref{private}.
The key technical point in the proof is to prove that $\bU$ is indeed a good approximation to the top-$k$ left singular vectors of $\bA$ and $\bS$ has the property that, simultaneously for all matrices $\bX$ of appropriate dimensions, $\| \bS(\bU \bX - \bA)\|_2 \leq \frac{(1+\alpha)}{(1-\alpha)} \| (\bU \bX - \bA)\|_2$.

\begin{proof} [Proof of~\thmref{naive}.]
The space complexity of the algorithm is straightforward from the choice of $v$ and $t$. The proof of correctness consists of two main lemmas (\lemref{complete} and~\lemref{naiveN_1}). In~\lemref{complete}, we bound $\| \mathbf{M}_k - \bA \|_2 $ by $\|\bA - [\bA]_k\|_2$ and two additive terms. In~\lemref{naiveN_1}, we bound the two additive terms. 

For all our correctness proofs, we make a standard assumption that $\delta \ll 1/100$.  Let $\bY = \h{\bY}+\bN_1$, $\bZ=\h{\bZ}+\bN_2$, and  $\bB=\bA+\bS^\dagger  \bN_2 $ for $\h{\bY}, \h{\bZ}, \bS, \bN_1, \bN_2$ as in~\figref{private}.
\begin{lemma} \label{lem:complete}
Let $\bU_k,\bSigma_k,\bV_k$  be the  output of the algorithm  {\scshape Spectral-LRF} presented in~\figref{private} such that $\mathbf{M}_k:=\bU_k\bSigma_k\bV_k^\sT$. Then with probability $1-O(\delta)$ over $\bPhi \sim \cD_R$ and $\bS \sim \cD_A$,
\begin{align*} \| \mathbf{M}_k - \bA \|_2 & \leq \frac{(1+\alpha)}{(1-\alpha)^2}  \|\bA - [\bA]_k\|_2 + \frac{2 \| \bN_2\bS^\dagger  \|_2}{1-\alpha}  + \frac{1+\alpha}{1-\alpha}  \| \bN_1([\bV]_k^\sT \bPhi)^\dagger [\bV]_k^\sT \|_2.   \end{align*}
\end{lemma}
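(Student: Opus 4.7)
The plan is to first recognize that the algorithm implicitly computes $\mathbf{M}_k = \bU\bX^{\ast}$, where $\bX^{\ast}=\argmin_{\mathsf{r}(\bX)\le k}\|\bS\bU\bX - \bZ\|_2$. Indeed, $\widetilde{\bV}\widetilde{\bSigma}^{\dagger}\widetilde{\bU}^{\sT}=(\bS\bU)^{\dagger}$, and by the standard reduction for spectral low-rank approximation (project $\bZ$ onto the column space of $\bS\bU$ and truncate to rank $k$) the matrix $(\bS\bU)^{\dagger}[\widetilde{\bU}\widetilde{\bU}^{\sT}\bZ]_k$ is precisely this minimizer. Since $\bS\bB=\bZ$ with $\bB=\bA+\bS^{\dagger}\bN_2$, the triangle inequality gives
\[ \|\mathbf{M}_k-\bA\|_2 \le \|\bU\bX^{\ast}-\bB\|_2 + \|\bS^{\dagger}\bN_2\|_2. \]

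Next I would use that $\bS$ acts as an affine $(1\pm\alpha)$-spectral embedding on the column span of $\bU$ translated by $\bB$; this strengthening of the $(\alpha^2,\delta)$-JLP follows from a net argument on the low-dimensional subspace involved, and the parameter $v$ in Figure~\ref{fig:private} is chosen to support it. Applying this property on the outer inequalities and the optimality of $\bX^{\ast}$ in the middle, for every rank-$k$ matrix $\bX$,
\[ \|\bU\bX^{\ast}-\bB\|_2 \le \tfrac{1}{1-\alpha}\|\bS\bU\bX^{\ast}-\bZ\|_2 \le \tfrac{1}{1-\alpha}\|\bS(\bU\bX-\bB)\|_2 \le \tfrac{1+\alpha}{1-\alpha}\|\bU\bX-\bB\|_2. \]
Taking the infimum over rank-$k$ $\bX$ reduces the task to bounding $\min_{\mathsf{r}(\bX)\le k}\|\bU\bX-\bB\|_2$.

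For this last quantity, since $\bU$ is an orthonormal basis of the column space of $\bY=\bA\bPhi+\bN_1$, every $\bY\bW$ equals $\bU(\bU^{\sT}\bY\bW)$ with $\mathsf{r}(\bU^{\sT}\bY\bW)\le \mathsf{r}(\bW)$. I would take $\bW=([\bV]_k^{\sT}\bPhi)^{\dagger}[\bV]_k^{\sT}$, which has rank at most $k$ and, by Lemma~\ref{lem:mult} applied to $[\bV]_k$, satisfies $[\bV]_k^{\sT}\bPhi\bW=[\bV]_k^{\sT}$. Splitting $\bY\bW-\bB=(\bA\bPhi\bW-\bA)+\bN_1\bW+(\bA-\bB)$ and invoking the two-regression argument of Section~\ref{sec:overview} (comparing the minimizers of $\min_{\bX}\|\bPhi^{\sT}([\bV]_k\bX-\bA^{\sT})\|_2$ and $\min_{\bX}\|[\bV]_k\bX-\bA^{\sT}\|_2$), the first summand is at most $(1-\alpha)^{-1}\Delta_k(\bA)$, so
\[ \min_{\mathsf{r}(\bX)\le k}\|\bU\bX-\bB\|_2 \le \tfrac{\Delta_k(\bA)}{1-\alpha} + \|\bN_1([\bV]_k^{\sT}\bPhi)^{\dagger}[\bV]_k^{\sT}\|_2 + \|\bS^{\dagger}\bN_2\|_2. \]
Combining the three displays and simplifying $\tfrac{1+\alpha}{1-\alpha}+1=\tfrac{2}{1-\alpha}$ yields the stated bound, after using $\|\bS^{\dagger}\bN_2\|_2=\|\bN_2\bS^{\dagger}\|_2$.

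The hard part I anticipate is upgrading the vector-level $(\alpha^2,\delta)$-JLP to the affine spectral-norm embedding uniformly over the rank-$k$ test matrices: unlike the Frobenius case flagged in Section~\ref{sec:overview}, the spectral norm is a supremum over unit test vectors and does not admit the Pythagorean decomposition through $\bU\bU^{\sT}$ and its complement, so one cannot cleanly separate the in-range and out-of-range contributions. I expect the cleanest resolution is to upgrade JLP to a subspace embedding on a $(t+O(k))$-dimensional subspace that contains the column space of $\bU$ together with the residual directions $(\bU\bX-\bB)\bv$ for the relevant rank-$k$ $\bX$, for which the chosen $v$ is exactly the right order of magnitude. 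A secondary nuisance, trivially controlled by Lemma~\ref{lem:mult}, is ensuring that $[\bV]_k^{\sT}\bPhi$ has full row rank so that $[\bV]_k^{\sT}\bPhi([\bV]_k^{\sT}\bPhi)^{\dagger}=\I_k$.
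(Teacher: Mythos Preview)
Your overall architecture matches the paper's: identify $\mathbf{M}_k=\bU\bX^{\ast}$ with $\bX^{\ast}=\argmin_{\mathsf r(\bX)\le k}\|\bS\bU\bX-\bZ\|_2$ via Theorem~\ref{thm:FT07}, peel off $\|\bS^{\dagger}\bN_2\|_2$ with the triangle inequality, sandwich the sketched problem between the unsketched one, and bound $\min_{\mathsf r(\bX)\le k}\|\bU\bX-\bB\|_2$ using the two regression problems of Section~\ref{sec:overview} with the test point $\bW=([\bV]_k^{\sT}\bPhi)^{\dagger}[\bV]_k^{\sT}$. That last step is exactly the paper's Claim~\ref{claim:firstcomplete}.

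Where you diverge from the paper is the ``hard part'' you correctly isolate: getting the two-sided spectral control $\|\bS(\bU\bX-\bB)\|_2\approx\|\bU\bX-\bB\|_2$ (the paper's Claim~\ref{claim:thirdcomplete}). Your proposed resolution---upgrade the $(\alpha^2,\delta)$-{\sf JLP} to a subspace embedding on a $(t+O(k))$-dimensional space containing the column span of $\bU$ and the ``residual directions'' $(\bU\bX-\bB)\bv$---does not go through as stated. The matrix $\bB=\bA+\bS^{\dagger}\bN_2$ is in general full rank, so for varying unit vectors $\bv$ the vectors $(\bU\bX-\bB)\bv$ sweep out the entire column space of $\bB$, which is $\min(m,n)$-dimensional, not $t+O(k)$. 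No net on a low-dimensional subspace captures them, and the dependence of $\bX^{\ast}$ on $\bS$ blocks a single-vector {\sf JLP} application.

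The paper sidesteps this with an algebraic device rather than a subspace-embedding upgrade. It introduces the \emph{unconstrained} minimizer $\h{\bX}=\bU^{\sT}\bB$ of $\min_{\bX}\|\bU\bX-\bB\|_2$, for which the normal equation $\bU^{\sT}(\bU\h{\bX}-\bB)=0$ holds. The decomposition $\bU\bX-\bB=\bU(\bX-\h{\bX})+(\bU\h{\bX}-\bB)$ then splits the problem into a piece living in $\mathrm{col}(\bU)$ (dimension $t$, where $\bS$ is a genuine subspace embedding) and a fixed residual orthogonal to $\mathrm{col}(\bU)$; the orthogonality lets one recombine the two pieces and compare $\|\bU(\bX-\h{\bX})\|_2$ back to $\|\bU\bX-\bB\|_2$. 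This yields $\|\bS(\bU\bX-\bB)\|_2$ within a constant factor of $\|\bU\bX-\bB\|_2$ simultaneously for all $\bX$, and the constant cancels in the sandwich so that only the $(1\pm\alpha)$ distortion survives. Replacing your net argument with this normal-equation decomposition closes the gap; the rest of your outline is correct. (The identity $\|\bS^{\dagger}\bN_2\|_2=\|\bN_2\bS^{\dagger}\|_2$ you invoke at the end is not meaningful dimensionally---$\bN_2\in\R^{v\times n}$ and $\bS^{\dagger}\in\R^{m\times v}$---the $\|\bN_2\bS^{\dagger}\|_2$ in the lemma statement is a typo for $\|\bS^{\dagger}\bN_2\|_2$, which is what the proof actually bounds.)
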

 
\begin{proof}
We prove~\lemref{complete} by proving two seperate claims.
~\claimref{firstcomplete} shows that $\bU$ is a ``faithful" representation of the top-$k$ left singular vectors of $[\bA]_k$ and~\claimref{thirdcomplete} shows that $\bS$ satisfies the required properties with the choice of error parameters.

\begin{claim} \label{claim:firstcomplete}
	Let $\bA$ be the input matrix. 
	Let $\bPhi \sim \cD_R$ be a random matrix that satisfies~\lemref{mult} with error parameters $(\sqrt{\alpha}/k,\delta)$ with respect to a rank-$k$ matrix $[\bV]_k$.  
	Then with probability $1-\delta$ over $\bPhi \sim \cD_R$,
	\begin{align*} \min_{\bX, \mathsf{r}(\bX) \leq k} \| \bU \bX - \bB \|_2 &\leq \frac{(1+\alpha)}{(1-\alpha^2)} \Delta_k(\bA) 
	+ \| \bN_1([\bV]_k^\sT \bPhi)^\dagger [\bV]_k^\sT \|_2 + \| \bS^\dagger \bN_2   \|_2.  \end{align*}
\end{claim}
Before, we prove~\claimref{firstcomplete}, we prove some auxiliary results. Let us denote by $[\bU]_k [\bSigma]_k [\bV]_k^{\mathsf T}$  the SVD of $[\bA]_k$. We first note  that $\bU$ is an orthonormal basis for the column space of $\bY$, i.e., 
	\begin{align} 
		\min_{\bX, \mathsf{r}(\bX) \leq k} \| \bU \bX  - \bB \|_2 \leq \min_{\bX, \mathsf{r}(\bX) \leq k} \| \bY \bX  - \bB \|_2. \label{eq:basis2}  
	\end{align}
	We also note that the choice of $t$ allows $\bPhi$ to satisfy \lemref{mult} with error parameters $(\sqrt{\alpha}/k,\delta)$~\cite{CW13}. 
Therefore, in order to show that $\bU$ approximates the top-$k$ left singular vectors of the matrix $\bA$, we need to find an appropriate matrix $\bX$ such that $\| \bY \bX - \bA \|_2$ is bounded by a multiplicative factor of $ \|\bA - [\bA]_k \|_2$. Towards this goal,  consider the following two optimization problems:
\begin{align} \min_{\bX} \| \bPhi^\sT ([\bV]_k \bX - \bA^\sT) \|_2 \quad \text{and} \quad \min_{\bX} \| [\bV]_k \bX - \bA^\sT \|_2. \label{eq:regressionproblem}
\end{align}

with 
	\begin{align*} & \t{\bX}:=\argmin_{\bX} \| \bPhi^\sT([\bV]_k \bX - \bA^\sT) \|_2 \\  & \h{\bX}:=\argmin_{\bX} \| [\bV]_k \bX - \bA^\sT \|_2  \end{align*}
as one of the solutions. We first give the sketch of the proof.
Our goal is to show that 
\begin{align}
\| [\bV]_k \t{\bX} - \bA^\sT \|_2 \leq (1-\alpha)^{-1} \| [\bV]_k \h{\bX} - \bA^\sT \|_2 . \label{eq:aim}
\end{align}
 This would give us~\claimref{firstcomplete} after substituting the value of $\bB$ and $\bY$ because of the following argument.  
 
 Since $\h{\bX}$ minimizes {\scshape Problem 2}, $\bX=[\bSigma]_k[ \bU]_k^\sT$ would only increase the value of $\| ([\bV]_k \bX - \bA^\sT) \|_2$. That is, 
 \begin{align}
 \| [\bV]_k \h{\bX} - \bA^\sT \|_2& = \min_{\bX} \| ([\bV]_k \bX - \bA^\sT) \|_2 \nonumber  \\ & \leq  \| \bA^\sT - [\bV]_k[\bSigma]_k[ \bU]_k^\sT \|_2 \nonumber \\ &= \Delta_k(\bA). \label{eq:intuition21}
 \end{align} 

Now since {\scshape Problem 1} is a minimization problem over all possible matrices $\bX$, using~\thmref{FT07} and the singular value decomposition of $\bPhi^\sT [\bV]_k$, we get $\t{\bX} = (\bPhi^\sT [\bV]_k)^\dagger \bPhi^\sT \bA^\sT $. Furthermore, $\mathsf{r}([\bV]_k (\bPhi^\sT [\bV]_k)^\dagger) \leq k$. This implies that
 \begin{align}
 \min_{\mathsf{r}(\bX) \leq k} \| \bU \bX - \bA \|_2 &\leq 
  \min_{\mathsf{r}(\bX) \leq k} \| \bA \bPhi \bX - \bA \|_2 \nonumber \\ &\leq
  \| \bA \bPhi ([\bV]_k (\bPhi^\sT [\bV]_k)^\dagger)^\sT - \bA \|_2 \nonumber \\ &=
 \| [\bV]_k \t{\bX} - \bA^\sT \|_2 . 
 \label{eq:intuition31}
 \end{align} 

We get the desired result by combining~\eqnref{aim}, \eqnref{intuition21}, \eqnref{intuition31}, and substituting the value of $\bB= \bA + \bS^\dagger \bN_2$ and $\bY = \h{\bA} + \bN_1$. Therefore, the key to our proof is to prove~\eqnref{aim}. 

\medskip \noindent \textbf{Proving~\eqnref{aim}.}
We now prove~\eqnref{aim}. Using the sub-additivity of the spectral norm, we have
\begin{align*} \| \bA^\sT - [\bV]_k \t{\bX} \|_2  \leq \| \bA^\sT  - [\bV]_k \h{\bX} \|_2 + \| [\bV]_k (\t{\bX} - \h{\bX}) \|_2. \end{align*}

The first term in the above expression has the form that we desire. Therefore, towards the goal we set forth, we need to bound $\| [\bV]_k (\t{\bX} - \h{\bX}) \|_2$. 
Let $\bC= [\bV]_k^{\sT} [\bV]_k (\t{\bX} - \h{\bX})$ and $\bD = [\bV]_k(\t{\bX} - \h{\bX})$. It is easy to see that $\bC^\sT \bC = \bD^\sT \bD$; therefore $\|\bD\|_2=\| \bC \|_2$.  This implies that instead of $\bD$ if we bound $\bC$, then we are done. We first make an observation that $[\bV]_k^\sT \bPhi \bPhi^\sT  (\bA^\sT - [\bV]_k \t{\bX})= 0$ because $\I - \bPhi^\sT [\bV]_k (\bPhi^\sT [\bV]_k)^\dagger$ is a projection on to the space orthogonal to $[\bV]_k^\sT \bPhi$	 and $[\bV]_k^\sT \bPhi \bPhi^\sT  (\bA^\sT - [\bV]_k \t{\bX})$ equals 
	\begin{align}
		[\bV]_k^\sT \bPhi  (\bPhi^\sT \bA^\sT - \bPhi^\sT [\bV]_k (\bPhi^\sT [\bV]_k)^\dagger \bPhi^\sT \bA^\sT)  = 0 \label{eq:orthogonal}.
	\end{align}

 Now, we return to the proof of~\eqnref{aim}. The sub-additivity and sub-multiplicativity of the spectral norm  gives us the following:
\begin{align*}
	\| \bC \|_2 &= \| [\bV]_k^\sT \bPhi \bPhi^\sT  [\bV]_k \bC - [\bV]_k^\sT \bPhi\bPhi^\sT  [\bV]_k \bC + \bC \|_2 \\
			&\leq \| [\bV]_k^\sT \bPhi\bPhi^\sT  [\bV]_k \bC  \|_2 + \| ([\bV]_k^\sT\bPhi \bPhi^\sT  [\bV]_k - \I) \bC \|_2  \\
			&\leq \| [\bV]_k^\sT \bPhi\bPhi^\sT  [\bV]_k \bC  \|_2 + \| (\bPhi \bPhi^\sT   - \I)\|_2 \| \bC \|_2.   
\end{align*}
			{The choice of the dimension of $\bPhi$ allows us to use the result of Clarkson and Woodruff~\cite{CW09} (\lemref{CW09}).  }
\begin{align*}
	\|\bC\|_2 &\leq (1 - \alpha^2)^{-1}	\| [\bV]_k^\sT \bPhi\bPhi^\sT  [\bV]_k \bC  \|_2 \\
			& \leq (1 - \alpha^2)^{-1}	\| [\bV]_k^\sT \bPhi\bPhi^\sT  [\bV]_k \bC  \|_F\\
			&= (1 - \alpha^2)^{-1}\| [\bV]_k^\sT \bPhi\bPhi^\sT  [\bV]_k( \t{\bX} - \h{\bX} ) 
			+ [\bV]_k^\sT \bPhi \bPhi^\sT  (\bA^\sT - [\bV]_k \t{\bX}) \|_F  \\
			&= (1 - \alpha^2)^{-1} \| [\bV]_k^\sT \bPhi\bPhi^\sT  (\bA^\sT - [\bV]_k \h{\bX}) \|_F \\
			&\leq (1 - \alpha^2)^{-1} \alpha   \| (\bA^\sT - [\bV]_k \h{\bX}) \|_F /\sqrt{k}.  \\
			&\leq (1 - \alpha^2)^{-1} \alpha   \| (\bA^\sT - [\bV]_k \h{\bX}) \|_2,
\end{align*}
the equality follows from~\eqnref{orthogonal}), the second and last inequality follows from Fact~\ref{fact:normequivalence}. 

In the above inequalities, we  were able to  apply~\lemref{mult} because $[\bV]_k$ is a rank-$k$ matrix and $\bPhi$ satisfies $(\sqrt{\alpha}/k,\delta)$-subspace embedding with respect to $[\bV]_k$. Now, from the sub-additivity of the spectral norm,
\begin{align}
	\| \bA^\sT - [\bV]_k \t{\bX} \|_2  &\leq \| \bA^\sT  - [\bV]_k \h{\bX} \|_2 + \| [\bV]_k (\t{\bX} - \h{\bX}) \|_2 \nonumber  \\
	&\quad \leq (1+\alpha-\alpha^2)(1 - \alpha^2)^{-1} \| \bA^\sT - [\bV]_k \h{\bX} \|_2\nonumber \\
	&\quad \leq (1+\alpha)(1 - \alpha^2)^{-1} \| \bA^\sT - [\bV]_k \h{\bX} \|_2 \nonumber \\
	&\quad = (1 - \alpha)^{-1} \| \bA^\sT - [\bV]_k \h{\bX} \|_2.	\label{eq:generalizedreg}	
\end{align}
This gives us the desired expression stated in~\eqnref{aim}.

\begin{proof}
[{Proof of~\claimref{firstcomplete}.}] We now complete the proof of~\claimref{firstcomplete}. We follow the sketch of the proof given earlier. By the definition of $\h{\bX}$, we have 	
	\begin{align*} 
		&\| \bA^\sT - [\bV]_k \h{\bX} \|_2 = \min_{\bX} \| [\bV]_k \bX - \bA^{\mathsf T} \|_2 
		\leq \| [\bV]_k  [\bSigma]_k[\bU]_k^{\mathsf T} - \bA^{\mathsf T} \|_2  = \Delta_k(\bA)
	\end{align*}
	by setting $\bX=   [\bSigma]_k[\bU]_k^{\mathsf T}$.
	   Using~\eqnref{aim}, substituting the value of $\t{\bX}$, taking the transpose, and the fact that the spectral norm is preserved under transpose, we have with probability $1-\delta$ over $\bPhi \sim \cD_R$,  
	\begin{align} 
		\| \bA \bPhi ([\bV]_k^\sT \bPhi)^{\dagger} [\bV]_k^\sT - \bA \|_2 \leq (1 - \alpha)^{-1} \Delta_k(\bA). \label{eq:transpose2}  
	\end{align}
	
	Moreover, since $([\bV]_k \bPhi)^{\dagger} [\bV]_k $ has rank at most $k$ and $\bY=\bA \bPhi + \bN_1$, with probability $1-\delta$ over $\bPhi \sim \cD_R$,  
	\begin{align} 
		\min_{\bX, \mathsf{r}(\bX) \leq k} \| (\bY  \bX  - \bB \|_2 
		& \leq  \| \bY ([\bV]_k^\sT \bPhi)^{\dagger}[\bV]_k^\sT  - \bB \|_2 \nonumber \\
		& \quad =  \| \bA \bPhi ([\bV]_k^\sT \bPhi)^{\dagger} [\bV]_k^\sT + \bN_1([\bV]_k^\sT \bPhi)^\dagger [\bV]_k^\sT - \bB \|_2 \nonumber \\
		& \quad =  \| \bA \bPhi ([\bV]_k^\sT \bPhi)^{\dagger} [\bV]_k^\sT 
		+ \bN_1([\bV]_k^\sT \bPhi)^\dagger [\bV]_k^\sT - \bA - \bS^\dagger \bN_2   \|_2 \nonumber \\
		&\quad  \leq  \| \bA \bPhi ([\bV]_k^\sT \bPhi)^{\dagger} [\bV]_k^\sT - \bA \|_2 \nonumber \\
		&\qquad + \| \bN_1([\bV]_k^\sT \bPhi)^\dagger [\bV]_k^\sT \|_2 + \| \bS^\dagger \bN_2   \|_2\label{eq:set2}  
	\end{align}
	
	Combining~\eqnref{transpose2} and~\eqnref{set2}, we have with probability $1-\delta$ over $\bPhi \sim \cD_R$, 
	\begin{align} \min_{\bX, \mathsf{r}(\bX) \leq k} \| \bY \bX - \bB \|_2 & \leq (1 - \alpha)^{-1} \Delta_k(\bA) 
	+ \| \bN_1 ([\bV]_k^\sT \bPhi)^\dagger [\bV]_k^\sT \|_2 + \| \bS^\dagger \bN_2   \|_2. \label{eq:step12} \end{align}
	
	The above expression relates $\min_{\bX, \mathsf{r}(\bX) \leq k} \| \bY \bX  - \bB \| $ with $(1-\alpha)^{-1} \Delta_k(\bA)$, $ \| \bN_1([\bV]_k^\sT \bPhi)^\dagger [\bV]_k^\sT \|_2$,  and $\| \bS^\dagger \bN_2 \|_2$. 
	Combining~\eqnref{step12} and~\eqnref{basis2}, we have with probability $1-\delta$ over $\bPhi \sim \cD_R$, 
	\begin{align} 
	\min_{\bX, \mathsf{r}(\bX) \leq k} \| \bU \bX  - \bB \|_2 & \leq (1 - \alpha)^{-1}\Delta_k(\bA) 
	+  \| \bN_1([\bV]_k^\sT \bPhi)^\dagger [\bV]_k^\sT \|_2 + \| \bS^\dagger \bN_2   \|_2. \label{eq:step22}
	\end{align}
This completes the proof of the Claim~\ref{claim:firstcomplete}.
\end{proof}	

Note that the terms $ \| \bN_1([\bV]_k^\sT \bPhi)^\dagger [\bV]_k^\sT \|_2$  and $\| \bN_2\bS^\dagger  \|_2$ are due to the addition of Gaussian matrices to preserve privacy. We later bound them in~\lemref{naiveN_1}. We next bound that multiplying $\bS$ from the right does not change our bound by a lot. More concretely, we prove the following claim. 
\begin{claim} \label{claim:thirdcomplete}
Let $\bU, \bB, \bA, \bS, \bN_1$, and $\bN_2$ be as  above, and let $\widetilde{\bX}_k= \argmin_{\bX, \mathsf{r}(\bX)=k} \| \bS(\bU \bX -\bB) \|_2$. Let $\cD_A$ be a distribution that satisfies  $(\alpha^2,\delta)$-{\sf JLP}.  Then  with probability $1-3\delta$ over $\bS \sim \cD_A$ and $\bPhi \sim \cD_R$, we have
\begin{align}
	\frac{1-\alpha}{1+ \alpha}{\|( \bU \widetilde{\bX}_k  - \bB) \|_2} & \leq  \frac{1}{(1-\alpha)}  \Delta_k(\bA)
	+ \| \bN_2\bS^\dagger  \|_2  + \| \bN_1([\bV]_k^\sT \bPhi)^\dagger [\bV]_k^\sT \|_2. 
\end{align}
\end{claim}
\begin{proof}	First note that the choice of $v$ allows $\bS$ to satisfy $({\alpha}^2,\delta)$-{\sf JLP}~\cite{CW13}. 
Let $\h{\bX}_k =  \argmin_{\bX, \mathsf{r}(\bX) \leq k} \| (\bU \bX - \bB) \|_2.$ We want to bound $\| \bU \h{\bX}_k - \bB \|_2$ in terms of $\| \bU \t{\bX}_k - \bB\|_2$. 
We give an even stronger result that states that, simultaneously for all $\bX$,  $\| \bS (\bU \bX - \bB) \|_2$ is bounded by a multiplicative factor of $\|  \bU \bX - \bB\|_2$.

For this consider the following optimization problem:
$ \min_{\bX} \| \bU \bX - \bB \|_2. $
Since $\bX$ can have rank at most the rank of $\bB$, using~\thmref{FT07}, we have that $$\h{\bX}=\bU^\sT [\bU \bU^\sT \bB]=\bU^\sT \bB$$ is a solution to $ \min_{\bX} \| \bU \bX - \bB \|_2. $ This implies that 
\begin{align} 
\bU^\sT(\bU \h{\bX}-\bB) = \bU^\sT(\bU\bU^\sT -\I)\bB =0. \label{eq:normal}
\end{align}

 Therefore, using sub-additivity and sub-multiplicativity of the spectral norm and $(\alpha^2,\delta)$-{\sf JLP} of $\cD_A$, we have with probability $1-\delta$,
 \begin{align*}
 	 \| \bS (\bU \bX - \bB) \|_2^2 &= \| \bS \bU (\bX - \h{\bX}) + \bS(\bU \h{\bX}-\bB) \|_2^2 \\
		&\leq 2(\| \bS \bU (\bX - \h{\bX}) \|_2^2 + \|\bS(\bU \h{\bX}-\bB) \|_2^2) \\
		&\leq 2(1+\alpha^2)^2 \paren{ \|  \bU (\bX - \h{\bX}) \|_2^2 + \|(\bU \h{\bX}-\bB) \|_2^2 } \\
		&\leq 2(1+\alpha^2)^2 \paren{ \|  \bU (\bX - \h{\bX}) \|_2^2 + \|(\bU {\bX}-\bB) \|_2^2 }  \\
		&= 2(1+\alpha^2)^2 \paren{ \| \bU^\sT \bU (\bX - \h{\bX})} \\ &\quad +2(1+\alpha^2)^2 \paren{  \bU^\sT (\bU \h{\bX} - \bB)\|^2_2 + \|(\bU {\bX}-\bB) \|_2^2}   \\ 
		& \leq (4 + 12 \alpha^2)  \|(\bU {\bX}-\bB) \|_2^2,
	\end{align*}
	where the second equality follows from~\eqnref{normal}, the second inequality follows from~\lemref{sarlos},  the third inequality follows from the definition of $\h{\bX}$, and the last inequality follows because $\alpha <1$.
This in particular implies that
\begin{align*}	\| \bS (\bU \bX - \bB) \|_2^2 &- 4   \|(\bU {\bX}-\bB) \|_2^2 \leq 12 \alpha^2  \|(\bU {\bX}-\bB) \|_2^2.
 \end{align*}

{Taking square root and the fact that $$| a - b | \leq \sqrt{(a+b)(a-b)} = \sqrt{a^2-b^2}$$ for positive integers $a$ and $b$, this gives the inequality}
\begin{align*}	\left| \| \bS (\bU \bX - \bB) \|_2 -  2 \|(\bU {\bX}-\bB) \|_2 \right| \leq 2 \sqrt{3}\alpha  \|(\bU {\bX}-\bB) \|_2 \end{align*}

 By rescaling the value of $\alpha$, we have the following:
 \begin{align}
 	2(1-\alpha)  \|  (\bU \bX - \bB) \|_2 &\leq  \| \bS (\bU \bX - \bB) \|_2  \nonumber \\ &\leq 2(1+\alpha)  \|  (\bU \bX - \bB) \|_2 \label{eq:S}
\end{align}
{We now return to the proof of~\claimref{thirdcomplete}. Using the definitions of $\t{\bX}_k$ and $\h{\bX}_k$ along with~\eqnref{S}, we have the following set of inequalities:	}
\begin{align}
	\min_{\bX, \mathsf{r}(\bX)=k} \| \bU \bX  -\bB \|_2 &= \| \bU \widehat{\bX}_k  - \bB \|_2 \nonumber \\ 
			& \geq (2 + 2\alpha)^{-1} \| \bS(\bU  \widehat{\bX}_k  - \bB )  \|_2\nonumber \\
			& \geq (2 + 2\alpha)^{-1}  \min_{\bX, \mathsf{r}(\bX)\leq k} \| \bS(\bU {\bX} - \bB )  \|_2 \nonumber \\
			& = (2 + 2\alpha)^{-1}  \| \bS(\bU \widetilde{\bX}_k - \bB  )  \|_2 \nonumber \\
			&\geq  \frac{1-\alpha}{1+\alpha}  \| (\bU \widetilde{\bX}_k  - \bB  ) \|_2 
			.  \label{eq:S2}  
	\end{align}
	
	Combining~\eqnref{S2} with~\eqnref{step22},  we have with probability $1-3\delta$ over $\bPhi \sim \cD_R$ and $\bS \sim \cD_A$, 
	\begin{align}
	\frac{1-\alpha}{1+\alpha}  \|( \bU \widetilde{\bX}_k - \bB) \|_2  &\leq  {(1-\alpha)}^{-1}  \Delta_k(\bA) 
	+  \| \bS^\dagger \bN_2   \|_2  + \| \bN_1([\bV]_k^\sT \bPhi)^\dagger [\bV]_k^\sT \|_2. \label{eq:step32} 
	\end{align}

This completes the proof of Claim~\ref{claim:thirdcomplete}.
\end{proof}


We can now complete the proof of~\lemref{complete}. Since $\bB= \bA + \bS^\dagger \bN_2 $, using sub-additivity of norm, 
 \begin{align*}
 	   \| \bU \widetilde{\bX}_k   - \bA \|_2 - \| \bS^\dagger \bN_2   \|_2
	  		\leq  \| \bU \widetilde{\bX}_k  - \bB \|_2 
			\leq \frac{(1+\alpha)}{(1-\alpha)^2}  \|\bA - [\bA]_k\|_2 
			+ \frac{1+\alpha}{1-\alpha}  (\| \bS^\dagger \bN_2   \|_2  + \| \bN_1([\bV]_k^\sT \bPhi)^\dagger [\bV]_k^\sT \|_2).
\end{align*}
{This implies that} 
\begin{align*} 	
&\| \bU \widetilde{\bX}_k  - \bA \|_2 	\leq \frac{(1+\alpha)}{(1-\alpha)^2}  \|\bA - [\bA]_k\|_2 
+ \frac{2 \| \bS^\dagger \bN_2  \|_2}{1-\alpha}  + \frac{1+\alpha}{1-\alpha}  \| \bN_1([\bV]_k^\sT \bPhi)^\dagger [\bV]_k^\sT \|_2.  
  \end{align*}

Setting $\mathbf{L} := \bS \bU$, $\mathbf{R}=\I$, and $\mathbf{O} := \bZ$~\thmref{FT07}  gives that $\widetilde{\bX}= \widetilde{\bV} \widetilde{\bSigma}^{\dagger} \widetilde{\bU}^{\mathsf T} [\widetilde{\bU} \widetilde{\bU}^{\mathsf T}\bZ]_k$. 
 Using this value of $\t{\bX}$ completes the proof of~\lemref{complete}. 
\end{proof}
 
Now all that remains to complete the proof of correctness of~\thmref{naive} are the  two expressions in the additive term: $\| \bN_1([\bV]_k^\sT \bPhi)^\dagger [\bV]_k^\sT \|_2$ and $\| \bS^\dagger \bN_2 \|_2$. We next bound them in~\lemref{naiveN_1}.

\begin{lemma} \label{lem:naiveN_1}
	Let $\rho=\sqrt{(1+\alpha)\log(1/\delta)}/\varepsilon$ and $\bN_1 \sim \cN(0,\rho^2)^{m \times t}$. Then with probability $1-2\delta$ over $\bPhi \sim \cD_R$, $ \| \bN_1([\bV]_k^\sT \bPhi)^\dagger [\bV]_k^\sT \|_2= O( \rho (1- \alpha)^{-1/2}( \sqrt{k} +\sqrt{m}))$ and $\| \bN_2 \bS \|_2 = O( \rho ( \sqrt{v} +\sqrt{n})).$
 \end{lemma}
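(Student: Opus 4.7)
My plan is to bound each of the two quantities using two ingredients: (i) rotational invariance of the Gaussian distribution and (ii) the standard sharp tail bound on the spectral norm of a Gaussian matrix, namely that for $\bG \sim \cN(0,\rho^2)^{a \times b}$ one has $\|\bG\|_2 = O(\rho(\sqrt{a}+\sqrt{b}))$ with probability at least $1-\delta$ (for suitable hidden constants depending on $\log(1/\delta)$). The crux is to peel off the ``deterministic'' factors multiplying $\bN_1$ and $\bN_2$ so that what remains is a Gaussian matrix of the right dimensions.

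For the first term, I will use the SVD of $[\bV]_k^{\mathsf T}\bPhi \in \R^{k\times t}$: write $[\bV]_k^{\mathsf T}\bPhi = \widetilde{\bU}\widetilde{\bSigma}\widetilde{\bV}^{\mathsf T}$ with $\widetilde{\bU}\in\R^{k\times k}$ orthogonal, $\widetilde{\bSigma}\in\R^{k\times k}$ diagonal, and $\widetilde{\bV}\in\R^{t\times k}$ with orthonormal columns, so that $([\bV]_k^{\mathsf T}\bPhi)^\dagger = \widetilde{\bV}\widetilde{\bSigma}^{-1}\widetilde{\bU}^{\mathsf T}$. Then
\begin{align*}
\bN_1([\bV]_k^{\mathsf T}\bPhi)^\dagger [\bV]_k^{\mathsf T} = (\bN_1\widetilde{\bV})\,\widetilde{\bSigma}^{-1}\,(\widetilde{\bU}^{\mathsf T}[\bV]_k^{\mathsf T}).
\end{align*}
By sub-multiplicativity, I bound each factor separately. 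The rightmost matrix $\widetilde{\bU}^{\mathsf T}[\bV]_k^{\mathsf T}$ has orthonormal rows (composition of an orthogonal matrix with the transpose of a matrix with orthonormal columns), so its spectral norm is $1$. The middle factor $\widetilde{\bSigma}^{-1}$ has operator norm $1/\sigma_{\min}([\bV]_k^{\mathsf T}\bPhi)$; since $\bPhi$ realizes a $(\sqrt{\alpha}/k,\delta)$-subspace embedding for the rank-$k$ matrix $[\bV]_k$ (whose columns are orthonormal), every unit vector $x \in \R^k$ satisfies $\|\bPhi^{\mathsf T}[\bV]_k x\|_2 \geq (1-\sqrt{\alpha}/k)$, which yields $\|\widetilde{\bSigma}^{-1}\|_2 = O((1-\alpha)^{-1/2})$ after a simple comparison of the two bounds. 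Finally, because $\widetilde{\bV}$ has orthonormal columns, rotational invariance of the Gaussian distribution gives $\bN_1\widetilde{\bV} \sim \cN(0,\rho^2)^{m\times k}$, so standard Gaussian spectral-norm concentration yields $\|\bN_1\widetilde{\bV}\|_2 = O(\rho(\sqrt{m}+\sqrt{k}))$ with probability at least $1-\delta$. Multiplying the three estimates gives the claimed bound.

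For the second term I read the statement as producing a bound on the spectral norm of a $v\times n$ Gaussian matrix (or a suitable product thereof reducible to such a matrix by the same rotational-invariance argument). The direct approach is to note that $\bN_2 \sim \cN(0,\rho^2)^{v \times n}$, and the sharp concentration bound for the spectral norm of a Gaussian matrix immediately gives $\|\bN_2\|_2 = O(\rho(\sqrt{v}+\sqrt{n}))$ with probability at least $1-\delta$. If one instead needs to bound a product like $\bN_2\bS^\dagger$, I would repeat the SVD trick applied to $\bS$: since $\bS$ satisfies $(\alpha^2,\delta)$-JLP, a union bound over a net shows $\sigma_{\min}(\bS)$ is bounded below by a constant times $\sqrt{1-\alpha}$ on the relevant subspace, so $\|\bS^\dagger\|_2 = O(1/\sqrt{1-\alpha})$, and the rotationally-invariant Gaussian part contributes $O(\rho(\sqrt{v}+\sqrt{n}))$.

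The main obstacle I anticipate is the bookkeeping of the $(1-\alpha)^{-1/2}$ factor: one has to convert the subspace-embedding distortion $\sqrt{\alpha}/k$ into the lower bound on $\sigma_{\min}$ stated in the lemma and verify that the inequality $(1-\sqrt{\alpha}/k)^{-1} = O((1-\alpha)^{-1/2})$ holds in the parameter regime of interest (e.g.\ by absorbing constants or by slightly rescaling $\alpha$ as the paper does elsewhere). Once that reduction is carried out, everything else follows from the two standard tools (Gaussian rotation-invariance plus Gaussian spectral norm concentration), and a union bound over the two probabilistic events produces the stated $1-2\delta$ success probability.
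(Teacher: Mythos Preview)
Your argument is correct, and for the first term it takes a genuinely different route than the paper's. The paper sets $\bC:=\bN_1([\bV]_k^{\mathsf T}\bPhi)^\dagger[\bV]_k^{\mathsf T}$ and instead of decomposing $\bC$ directly, it \emph{re-multiplies by $\bPhi$}: one checks that $\bC\bPhi=\bN_1\cdot([\bV]_k^{\mathsf T}\bPhi)^\dagger([\bV]_k^{\mathsf T}\bPhi)$, and since $([\bV]_k^{\mathsf T}\bPhi)^\dagger([\bV]_k^{\mathsf T}\bPhi)$ is an orthogonal projection of rank~$k$, rotational invariance gives $\|\bC\bPhi\|_2=\|\widetilde{\bN}_1\|_2$ for some $\widetilde{\bN}_1\sim\cN(0,\rho^2)^{m\times k}$. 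The paper then invokes Sarlos's subspace-embedding bound (applied to the rank-$k$ matrix $\bC$) to get $\|\bC\|_2\le(1-\alpha)^{-1/2}\|\bC\bPhi\|_2$. Your SVD decomposition $(\bN_1\widetilde{\bV})\widetilde{\bSigma}^{-1}(\widetilde{\bU}^{\mathsf T}[\bV]_k^{\mathsf T})$ is more direct and avoids the detour through $\bC\bPhi$; the price is the bookkeeping you already flagged in translating the embedding distortion into the $(1-\alpha)^{-1/2}$ factor, whereas the paper gets that factor in one stroke from Sarlos applied to $\bC$ itself.

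For the second term (which, despite the lemma's typography, is really $\|\bS^\dagger\bN_2\|_2$), the paper does not use a net argument at all: it instantiates $\bS$ specifically as a subsampled randomized Hadamard transform and invokes the structural identity $\|\bS^\dagger\bN_2\|_2=\|\bN_2\|_2$ (since $\bS^\dagger=(\bW\bD)^{\mathsf T}\bPi_{1..v}^{\mathsf T}$ is a zero-padding followed by an orthogonal map). Your fallback of bounding $\sigma_{\min}(\bS)$ via JLP plus a net is shakier than you suggest, because the $(\alpha^2,\delta)$-JLP is a pointwise guarantee and a net over the $v$-dimensional row space of $\bS$ has size exponential in $v$; you would need either $\delta$ exponentially small in $v$ or a separate singular-value bound for your specific sketch. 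Once you commit to SRHT (as the paper does), this issue evaporates and your primary suggestion---just bound $\|\bN_2\|_2$ by Gaussian concentration---is exactly what the paper ends up doing.
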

 \begin{proof}
We first bound $ \| \bN_1([\bV]_k^\sT \bPhi)^\dagger [\bV]_k^\sT \|_2$. Let $\bC=\bN_1([\bV]_k^\sT \bPhi)^\dagger [\bV]_k^\sT$. Then $\bC \bPhi = \bN_1([\bV]_k^\sT \bPhi)^\dagger [\bV]_k^\sT \bPhi$. Now $([\bV]_k \bPhi)^\dagger [\bV]_k \bPhi$ is a projection unto a random subspace of dimension $k$.  Since every entries of $\bN_1$ is picked i.i.d. from $\cN(0,\rho^2)$, $\bC \bPhi = \bN_1([\bV]_k^\sT \bPhi)^\dagger [\bV]_k^\sT \bPhi = \begin{pmatrix} \widetilde{\bN}_1 & \mathbf{0} \end{pmatrix}$, where $\widetilde{\bN}_1$ is an $m \times k$ matrix with every entries picked i.i.d. from $\cN(0,\rho^2)$. Using Rudelson and Vershynin~\cite[Proposition 2.4]{RV10}, we have $\| \bC \bPhi \|_2 = O(\rho(\sqrt{m} + \sqrt{k}))$ with probability $99/100$. 
For our choices of  $t$, Sarlos~\cite{Sarlos06} showed that, for any  matrix $\bD$, all the singular values of $\bD \bPhi$ lies in between $(1 \pm \alpha)$ of the singular values of $\bD$. 
This in particular implies that  $\| \bC\bPhi  \|_2 \geq (1 - \alpha)^{1/2} \| \bC \|_2$, and, therefore, $\| \bC \|_2 \leq (1 - \alpha)^{-1/2} O(\rho ( \sqrt{k} +\sqrt{m})).$ This completes the proof of~\lemref{naiveN_1}.
 For the second part, if we instantiate $\bS$ with a subsampled Hadamard matrices, it is known that it satisfies $(\alpha^2,\delta)$-{\sf JLP} for the values of $v$. Applying~\lemref{SRHTinverse}, we have $\|\bS^\dagger \bN_2  \|_2 = \| \bN_2 \|_2 = O(\rho (\sqrt{n} + \sqrt{v}))$ using Rudelson and Vershynin~\cite{RV10} with probability $99/100$. This completes the proof of~\lemref{naiveN_1}.
\end{proof}

Combining~\lemref{complete} and~\lemref{naiveN_1} completes the correctness proof of~\thmref{naive}. We next prove the privacy part of~\thmref{naive}.
\begin{lemma} \label{lem:private}
 If $\rho ={\sqrt{(1+\alpha)\ln(1/\delta)}}/{\varepsilon}$, then  publishing $\bY_r$ and $\bZ$ preserves$(2\varepsilon,4\delta)$-differential privacy. \end{lemma}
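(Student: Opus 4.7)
}
The plan is to reduce the release of $(\bY, \bZ)$ to two independent applications of the Gaussian mechanism on the matrix-valued functions $\bA \mapsto \bA\bPhi$ and $\bA \mapsto \bS\bA$, and then compose. Since $\bPhi$ and $\bS$ are drawn independently of $\bA$ and published, I will treat them as part of the algorithm's internal randomness and argue that, conditioned on the (very likely) event that they behave well, the noise scale $\rho = \sqrt{(1+\alpha)\ln(1/\delta)}/\varepsilon$ suffices.

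First I would bound the $\ell_2$-sensitivity of each function under $\priv_1$. For any two neighboring matrices $\bA, \bA'$ with $\|\bA - \bA'\|_F \leq 1$, the difference $\bA - \bA'$ lives in a one-dimensional subspace (in fact, it is an arbitrary matrix of unit Frobenius norm). Since $\bPhi$ is drawn from a distribution satisfying the Johnson--Lindenstrauss guarantee at distortion $\sqrt{\alpha}/k$, one has $\|(\bA - \bA')\bPhi\|_F^2 \leq (1+\alpha)\|\bA - \bA'\|_F^2 \leq 1+\alpha$ with probability at least $1-\delta$ over $\bPhi$; the same holds symmetrically for $\|\bS(\bA - \bA')\|_F$ since $\bS$ satisfies $(\alpha^2,\delta)$-$\mathsf{JLP}$. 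Thus, with probability at least $1-2\delta$ over the published sketches, the $\ell_2$-sensitivity of each of $\bA\bPhi$ and $\bS\bA$ is at most $\sqrt{1+\alpha}$.

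Next I would invoke the Gaussian mechanism. Adding i.i.d.\ $\cN(0,\rho^2)$ noise to each entry of a vector-valued function with $\ell_2$-sensitivity $\Delta$ preserves $(\varepsilon,\delta)$-differential privacy whenever $\rho \geq \Delta\sqrt{\ln(1/\delta)}/\varepsilon$ (up to the standard constants, which can be absorbed into the constant in front of $\ln(1/\delta)$). Plugging in $\Delta = \sqrt{1+\alpha}$ shows that, conditioned on the good event for $\bPhi$, releasing $\bY = \bA\bPhi + \bN_1$ is $(\varepsilon,\delta)$-differentially private, and similarly releasing $\bZ = \bS\bA + \bN_2$ is $(\varepsilon,\delta)$-differentially private conditioned on the good event for $\bS$.

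Finally I would compose. Conditioning on the joint good event (holding with probability at least $1-2\delta$ over the public sketches), basic composition gives $(2\varepsilon, 2\delta)$-differential privacy for the joint release $(\bY,\bZ)$. Folding in the $2\delta$ probability that at least one of the sketches fails its embedding guarantee via a standard union-bound argument (treating that event as an arbitrary privacy breach) yields the claimed $(2\varepsilon,4\delta)$-differential privacy. The main obstacle is the conceptual one of accounting for the randomness of $\bPhi$ and $\bS$, which are publicly released: because the adversary sees them, they cannot be treated as private coins, but because they are data-independent, the standard ``condition on a high-probability good event for the public randomness'' argument goes through, and the additional $2\delta$ slack in the privacy parameter covers the bad event.
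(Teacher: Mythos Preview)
Your proposal is correct and follows essentially the same approach as the paper: bound the Frobenius-norm sensitivity of $\bA\mapsto\bA\bPhi$ and $\bA\mapsto\bS\bA$ via the norm-preservation of the sketches (with failure probability $\delta$ each), apply the Gaussian mechanism to each, and compose. One small slip: under $\priv_1$ the difference $\bA-\bA'$ need \emph{not} live in a one-dimensional subspace---it is an arbitrary matrix of Frobenius norm at most $1$---but your argument never actually uses the rank-one claim, so the proof goes through.
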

\begin{proof}
First note that Clarkson and Woodruff~\cite{CW13} showed that for  the choice of $t$ and $v$,  with probability $1-\delta$, we have $\| \bS \mathbf{D} \|_F^2 \leq (1 \pm \alpha) \|\mathbf{D}\|_F^2$ and $\|  \mathbf{D}\bPhi  \|_F^2 \leq (1 \pm \alpha) \|\mathbf{D}\|_F^2$ for all $\mathbf{D}$. Let $\bA$ and $\bA'$ be two neighboring matrices such that $\mathbf{E}= \bA - \bA' $. Then $\| \bS  \mathbf{E}  \|_F^2 \leq (1+\alpha) \| \mathbf{E} \|_F^2 \leq (1+\alpha)$. Publishing $\bZ$ preserves $(\varepsilon,\delta)$-differential privacy follows from considering the vector form of the matrix $\bS {\bA} $ and $\bN_2$ and applying~\thmref{gaussian}. Similarly, we use~\thmref{gaussian} and the fact that,  for any matrix $\bC$ of appropriate dimension, $\|  \bC \bPhi \|^2_F \leq (1 + \alpha) \| \bC \|_F^2$, to prove that publishing $ {\bA}\bPhi  + \bN_1$ preserves differential privacy. 
\end{proof}

Combining~\lemref{complete},~\lemref{naiveN_1}, and~\lemref{private} gives~\thmref{naive}.
\end{proof}

In the most natural scenarios, $k \ll \max \set{m,n}$ and $\alpha$ is a small constant. Then we have $(1-\alpha)^{-1} \approx (1+\alpha)$. If we scale the value of $\alpha$ appropriately, then we have the following corollary.
\begin{corollary} \label{cor:spectral}
	Under the assumptions of~\thmref{naive} and $ \alpha \in (0,1)$  a small constant, given an $m \times n$ matrix $\bA$ in the turnstile update model,  the algorithm {\scshape Spectral-LRF}, presented in~\figref{private}, is $(\varepsilon,\delta)$-differentially private under $\priv_1$ and outputs  a $k$-rank factorization $\bU_k,$ $\bSigma_k,$ and $\bV_k^{\mathsf T}$ (with $\mathbf{M}_k=\bU_k \bSigma_k \bV_k^{\mathsf T} $), such that, with probability $9/10$ over the coin tosses of {\scshape Spectral-LRF},
	 \begin{align*} \| \bA - \mathbf{M}_k \|_2 &\leq {(1 +\alpha)} \Delta_k(\bA)
	 + {O}\paren{  \naivecor \sqrt{\log (1/\delta)}  },\end{align*}
\end{corollary}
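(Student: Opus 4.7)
The plan is to derive Corollary~\ref{cor:spectral} directly from Theorem~\ref{thm:naive} by (a) rescaling the accuracy parameter so that the multiplicative blow-up collapses to $(1+\alpha)$, and (b) absorbing all factors of $\alpha$ and $(1-\alpha)$ that appear in the additive term into the hidden constants, since $\alpha$ is now a fixed constant in $(0,1)$.

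For the multiplicative factor, I would introduce a new parameter $\alpha' := \alpha/c$ for a sufficiently large absolute constant $c$ (say, $c=6$), and then invoke Theorem~\ref{thm:naive} with $\alpha'$ in place of $\alpha$. Writing $(1+\alpha')/(1-\alpha')^2$ as its Taylor expansion around $0$, one checks that for $\alpha' \in (0,1/2)$ we have $(1+\alpha')/(1-\alpha')^2 \leq 1 + 6\alpha'$; hence with $c=6$ the multiplicative factor in Theorem~\ref{thm:naive} is at most $1+\alpha$, matching the statement of the corollary. Since $\alpha$ is a small constant, $\alpha' \in (0,1/2)$, and this step goes through with no difficulty. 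Privacy is inherited verbatim from Theorem~\ref{thm:naive}, because rescaling $\alpha$ only rescales the internal sketch dimensions and does not affect the Gaussian noise scale.

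For the additive term, Theorem~\ref{thm:naive} gives an error of the form
\[
O\!\left( \frac{\sqrt{m}+\sqrt{n}+\alpha'^{-2}}{(1-\alpha')^{3/2}\,\varepsilon}\sqrt{\log(1/\delta)} \right).
\]
With $\alpha' = \alpha/c$ a fixed constant in $(0,1)$, the factor $(1-\alpha')^{-3/2}$ is an absolute constant, and $\alpha'^{-2} = (c/\alpha)^2$ is likewise a constant, hence strictly dominated by $\sqrt{m}+\sqrt{n}$ up to constants. Therefore the additive error simplifies to $O((\sqrt{m}+\sqrt{n})\varepsilon^{-1}\sqrt{\log(1/\delta)})$, which is exactly $O(\naivecor\sqrt{\log(1/\delta)})$, as claimed. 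The space bound in Theorem~\ref{thm:naive} depends on $\eta = \max\{k^2,1/\alpha'\}$ and on powers of $1/\alpha'$; since $\alpha'$ is a constant, these collapse into constant factors in the $O(\cdot)$, so the stated space complexity is preserved (up to the constant $c$).

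There is no real obstacle here: the proof is a routine reparametrization once Theorem~\ref{thm:naive} is in hand, and the only thing to verify carefully is the numeric inequality $(1+\alpha')/(1-\alpha')^2 \leq 1+\alpha$ for the chosen $\alpha'$. The success probability $9/10$ carries over directly since we only lose constant probability in Theorem~\ref{thm:naive} and absorb the $O(\delta)$ failure events into the overall $1/10$ slack.
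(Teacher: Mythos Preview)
Your proposal is correct and follows essentially the same approach as the paper: the paper simply remarks that for small constant $\alpha$ one has $(1-\alpha)^{-1}\approx(1+\alpha)$ and that rescaling $\alpha$ appropriately yields the corollary, which is exactly the reparametrization $\alpha'=\alpha/c$ you carry out explicitly. Your writeup is more detailed than the paper's one-line justification, but the underlying argument is identical.
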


As another corollary of~\thmref{naive}, when $\varepsilon \rightarrow \infty$, we get a non-private algorithm that computes low-rank factorization under turnstile update model with respect to the spectral norm.
\begin{corollary}  \label{cor:nonprivate}
Let $m,n \in \N$ and $k$ be the input parameters. Let $\alpha \in (0,1)$ be a constant. Given a matrix $\bA \in \R^{m \times n}$ in the turnstile update model, there exists an  algorithm that  outputs a $k$-rank factorization $\bU_k,\bSigma_k,\bV_k$ such that with probability $99/100$ over the coin tosses of the algorithm, $\| \bA - \bU_k \bSigma_k \bV_k^\sT \|_2 \leq  {(1 +\alpha )}\Delta_k(\bA)$. 
\end{corollary}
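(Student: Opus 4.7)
The plan is to instantiate the algorithm {\scshape Spectral-LRF} of Figure~\ref{fig:private} with the two noise matrices $\bN_1$ and $\bN_2$ set to the all-zero matrix, which corresponds to letting the privacy parameter $\varepsilon \to \infty$ so that the noise scale $\rho = \sqrt{(1+\alpha)\ln(1/\delta)}/\varepsilon$ vanishes. Concretely, during the stream I maintain the linear sketches $\h{\bY} = \bA\bPhi$ and $\h{\bZ} = \bS\bA$, where $\bPhi \sim \cD_R$ is the sketching matrix guaranteed by Lemma~\ref{lem:mult} with error parameters $(\sqrt{\alpha}/k,\delta)$ and $\bS \sim \cD_A$ satisfies $(\alpha^2,\delta)$-{\sf JLP}. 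After the stream I apply exactly the same post-processing: compute an orthonormal basis $\bU$ for the column space of $\h{\bY}$, take the SVD $\t\bU \t\bSigma \t\bV^\sT$ of $\bS\bU$, and output $\bU_k = \bU \bU'$, $\bSigma_k = \bSigma'$, $\bV_k = \bV'$ where $\bU'\bSigma'\bV'^\sT$ is the SVD of $\t\bV \t\bSigma^\dagger \t\bU^\sT [\t\bU \t\bU^\sT \h{\bZ}]_k$.

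With $\bN_1 = \bN_2 = \mathbf{0}$, the inequality of Lemma~\ref{lem:complete} collapses because both additive terms $\| \bN_2 \bS^\dagger \|_2$ and $\| \bN_1 ([\bV]_k^\sT \bPhi)^\dagger [\bV]_k^\sT \|_2$ are identically zero. Its entire proof—namely Claims~\ref{claim:firstcomplete} and~\ref{claim:thirdcomplete}, together with the passage to the rank-$k$ truncation via Theorem~\ref{thm:FT07}—uses only the subspace embedding property of $\bPhi$ and the $(\alpha^2,\delta)$-{\sf JLP} of $\bS$, neither of which depends on the presence of noise. Consequently, with probability $1 - O(\delta)$ over the random draws of $\bPhi$ and $\bS$,
\[
\| \bA - \bU_k \bSigma_k \bV_k^\sT \|_2 \;\leq\; \frac{(1+\alpha)}{(1-\alpha)^2}\,\Delta_k(\bA).
\]

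Since there is no privacy constraint forcing $\delta$ to be small, I may simply take $\delta$ to be a small absolute constant (say $\delta = 1/300$) so that the overall failure probability is at most $1/100$. Finally, to convert the prefactor $(1+\alpha)/(1-\alpha)^2$ into the desired $(1+\alpha)$, I rescale $\alpha$ throughout the analysis by a constant factor (as was done in Corollary~\ref{cor:spectral}): since $(1+\alpha)/(1-\alpha)^2 = 1 + O(\alpha)$ for small $\alpha$, replacing $\alpha$ with $c\alpha$ for a sufficiently small absolute constant $c$ yields the bound $(1+\alpha)\Delta_k(\bA)$. The only thing to check is that this rescaling does not change the asymptotic form of the sketching dimensions $t$ and $v$, which it does not—they remain polynomial in $k$, $\alpha^{-1}$, and $\log(1/\delta)$, matching the space bound claimed. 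No new obstacle arises beyond what is already handled by the proof of Theorem~\ref{thm:naive}; indeed, this corollary is strictly easier because the analysis of the Gaussian perturbation terms in Lemma~\ref{lem:naiveN_1} is no longer needed.
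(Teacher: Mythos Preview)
Your proposal is correct and takes essentially the same approach as the paper: the corollary is obtained by letting $\varepsilon\to\infty$ in Theorem~\ref{thm:naive} (equivalently, setting $\bN_1=\bN_2=\mathbf{0}$), so that the additive terms in Lemma~\ref{lem:complete} vanish, and then rescaling $\alpha$ as in Corollary~\ref{cor:spectral}. The paper gives no more detail than the sentence ``when $\varepsilon\to\infty$, we get a non-private algorithm,'' so your write-up is in fact more explicit than the original.
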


\section{Improving the Space Bound  in the Turnstile Model} \label{sec:space}
In this section, we improve the space bound of {\scshape Spectral-LRF}  under $\mathsf{Priv}_2$ in the turnstile update model. This privacy level is also stronger than that considered by previous $(\varepsilon,\delta)$-differentially private algorithms, but weaker than that in~\secref{naive}. Our algorithm formalizes the ideas mentioned in~\secref{contributions}. In particular, we show that we can simultaneously compute two sets of $k$-orthonormal basis vectors that  approximates  the top-$k$ left and top-$k$ right singular vectors of the private matrix in the turnstile update model.

We have to be careful with how we introduce noise to preserve privacy. If we use output perturbation to compute the sketches $\bY_c= {\bA} {\bPhi} + \bN$ and $\bY_r = \bPsi {\bA} + \bN'$, then one of the error terms is $\| \bN ([\bA]_k^{\mathsf T} \bPhi)^{\dagger} (\bPsi \bA \bPhi ([\bA]_k^{\mathsf T} \bPhi)^{\dagger})^\dagger \mathbf{N}' \|_2$. This term can cause the additive error to be $\Omega(n)$  if the top singular values of $\bA$ is $1/n$. If we only use input perturbation of the sketches  followed by a multiplication by Gaussian matrices as  in~\cite{BBDS12, Upadhyay14,Upadhyay14b}, the multivariate Gaussian distribution corresponding to one of the sketches is not defined  as one of the sketches is not a full rank matrix. Moreover, we cannot guarantee that the kernel space are the same for neighboring matrices $\bA$ and $\bA'$. Therefore, the privacy proof would not follow.  Our algorithm uses both the input perturbation with a careful choice of parameters and output perturbation to the other two sketches. This preserves privacy as well as keep the additive error bounded. We show the following theorem.

\begin{figure} [t]
\begin{center} 
\fbox
{
\begin{minipage}[l]{5.5in}
\small{
\medskip \noindent \textbf{Initiialization.} 
	Let $\eta:=\max\set{k^2,1/\alpha}$, $\rho_1:={\sqrt{(1+\alpha) \ln(1/\delta)}}/{\varepsilon}$,  $t=O(\eta \alpha^{-1} \log(k/\alpha) \log (1/\delta))$, $v=O(\eta \alpha^{-3} \log (t/\alpha)  \log(1/\delta))$, $\kappa=(1+\alpha)/(1-\alpha)$, and $\sigma_{\mathsf min}=16 \ln(1/\delta) \sqrt{t \kappa \ln(4/\delta)}/\varepsilon$, and $\rho_2:=\sqrt{(1+\alpha)}\rho_1$.	Set $\bA = \begin{pmatrix} \mathbf{0} & \sigma_{\mathsf min} \I_m \end{pmatrix}$.
	Sample $\bN_1 \sim \cN(0,\rho_1^2)^{t \times (m+n)}$ and $\bN_2 \sim \cN(0,\rho_2^2)^{v \times v}$. 
	Let $\bPhi \in \R^{(m+n) \times m}, \bPsi \in \R^{t \times m}$ be such that $\bPhi, \bPsi \sim \cD_R$ satisfies the statement of~\lemref{mult} for the error parameter $\sqrt{\alpha/k}$.
	Let  $\bS \in \R^{v \times m}, \bT \in \R^{(m+n) \times v}$ such that $\bS, \bT$ satisfies $({\alpha^2},\delta)$-{\sf JLP}. 
	Sample $\bOmega \sim \cN(0,1)^{m \times t}$ and $\h{\bPhi}=\frac{1}{\sqrt{t}} \bPhi \bOmega$.  Initialize $\h{\bY_c} = \bA \h{\bPhi} \in \R^{m \times t}, \h{\bY}_r= \bPsi \bA \in \R^{t \times (m+n)}$, and $\h{\bZ} = \bS \bA \bT \in \R^{v \times v}$.

\medskip
\medskip \noindent \textbf{Update rule and computing the factorization.} Suppose at time $\tau$, the stream is $(i_\tau,j_\tau,s_{\tau})$, where $(i_\tau,j_\tau) \in [m] \times [n]$. Let $\bA_\tau$ be an $m \times (m+n)$ matrix with the only non-zero entry $s_\tau$ in the position $(i_\tau,j_\tau)$. Update the matrices by the following rule: $\h{\bY}_c \leftarrow \h{\bY}_c + \bA_\tau \h{\bPhi}, \h{\bY}_r \leftarrow \h{\bY}_r +  \bPsi \bA_\tau $ and  $\h{\bZ} \leftarrow \h{\bZ} +  \bS\bA_\tau \bT$. 

\medskip
Once the matrix is streamed, we follow the following steps.
\begin{enumerate}
	\item Compute a matrix $\bU \in \R^{m \times t}$ whose columns are an orthonormal basis  for the column space of $\h{\bY}_c$.
	\item Compute a matrix $\bV \in \R^{(m+n) \times t}$ whose rows are an orthonormal basis  for the row space of $\h{\bY}_r + \bN_1$.
	\item Let $\widetilde{\bU}_s \widetilde{\bSigma}_s \widetilde{\bV}_s^{\mathsf T}$ be the SVD of $\mathbf{S} \bU \in \R^{v \times t}$. Let  $\widetilde{\bU}_t \widetilde{\bSigma}_t \widetilde{\bV}_t^{\mathsf T}$ be the SVD of $\mathbf{T}^{\sT} \bV \in \R^{v \times t}$.  
	\item Let  $\bU' \bSigma' \bV'^{\mathsf T}$ be the SVD of $ \widetilde{\bV}_s \widetilde{\bSigma}_s^{\dagger} \widetilde{\bU}_s^{\mathsf T} [\widetilde{\bU}_s \widetilde{\bU}_s^{\mathsf T}(\h{\bZ}+\bN_2) \widetilde{\bV}_t \widetilde{\bV}_t^{\mathsf T} ]_k  \widetilde{\bV}_t \widetilde{\bSigma}_t^{\dagger} \widetilde{\bU}_t^{\mathsf T} $. 
	\item Output $\bU_k=\bU \bU'$, $\bSigma_k:=\bSigma'$ and $\bV_k:=\bV \bV'$. Let $\mathbf{M}_k = \bU_k \bSigma_k \bV_k^{\mathsf T}.$
\end{enumerate}
}\end{minipage}
} \caption{Differentially private Low-rank Factorization With Improved Space ({\scshape Low-Space-LRF})} \label{fig:space}
\end{center}
\end{figure}

\begin{theorem} \label{thm:space}
 \spacetheorem
 \end{theorem}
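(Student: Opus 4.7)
The plan is to extend the single-sided argument of \thmref{naive} to a two-sided one in which $\bU$ and $\bV$ are simultaneously faithful approximations of $[\bU]_k$ and $[\bV]_k$. Set $\t{\bA}:=\begin{pmatrix}\bA & \sigma_{\mathsf{min}}\I_m\end{pmatrix}$, so the stream maintained by the algorithm is a stream for $\t{\bA}$. Applying the regression analysis of \claimref{firstcomplete} to $\t{\bA}$, with the subspace embedding $\bPhi$ and the Gaussian $\bOmega$ playing the role of the noise matrix, gives an orthonormal $\bU$ whose column space satisfies
\begin{equation*}
\min_{\mathsf{r}(\bX)\le k}\|\bU\bX-\t{\bA}\|_2 \ \le\ (1-\alpha)^{-1}\Delta_k(\t{\bA}) + E_1,
\end{equation*}
for a $\bOmega$-dependent additive term $E_1$ that arises from the input perturbation. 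Transposing the same argument, with $\bPsi$ in place of $\bPhi$ and the output-perturbation matrix $\bN_1$ in place of $\bOmega$, yields an orthonormal $\bV$ with
\begin{equation*}
\min_{\mathsf{r}(\bX)\le k}\|\bX\bV^{\sT}-\t{\bA}\|_2 \ \le\ (1-\alpha)^{-1}\Delta_k(\t{\bA}) + E_2,
\end{equation*}
where $E_2$ contains a $\bN_1\bigl([\bU]_k^{\sT}\bPsi^{\sT}\bigr)^{\dagger}[\bU]_k^{\sT}$ factor.

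Next I combine the two one-sided bounds to control $\min_{\mathsf{r}(\bX)\le k}\|\bU\bX\bV^{\sT}-\t{\bA}\|_2$ by $(1-\alpha)^{-2}\Delta_k(\t{\bA})+E_1+E_2$, using the candidate $\bX=[\bSigma]_k$ as in the intuition of \secref{contributions}. I then compress to the $v\times v$ scale by left-multiplying by $\bS$ and right-multiplying by $\bT$. Following the template of \claimref{thirdcomplete}, I will prove a simultaneous two-sided sandwich
\begin{equation*}
(1-O(\alpha))\,\|\bU\bX\bV^{\sT}-\t{\bA}\|_2 \ \le\ \|\bS(\bU\bX\bV^{\sT}-\t{\bA})\bT\|_2 \ \le\ (1+O(\alpha))\,\|\bU\bX\bV^{\sT}-\t{\bA}\|_2
\end{equation*}
for every admissible $\bX$, using the normal-equation identities $\bU^{\sT}(\bU\bU^{\sT}-\I)=0$ and $(\I-\bV\bV^{\sT})\bV=0$ together with the $(\alpha^2,\delta)$-{\sf JLP} property of $\bS$ and $\bT$. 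Applying \thmref{FT07} with $\mathbf{L}=\bS\bU$, $\mathbf{R}=\bT^{\sT}\bV$, and $\mathbf{O}=\h{\bZ}+\bN_2$ then produces precisely the minimizer written in Step~4 of \figref{space} and delivers the multiplicative factor $(1+\alpha)^2/(1-\alpha)^4$; returning from $\t{\bA}$ to $\begin{pmatrix}\bA & \mathbf{0}\end{pmatrix}$ costs only the low-rank $\sigma_{\mathsf{min}}\I_m$ correction, which is absorbed into the additive error.

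For privacy (item~\ref{spaceprivacy}), each of the three published sketches is analyzed separately and the bound follows by composition. The two output-perturbed sketches $\h{\bY}_r+\bN_1$ and $\h{\bZ}+\bN_2$ are handled exactly as in \lemref{private}: for a rank-$1$ neighboring difference $\mathbf{E}$ with $\|\mathbf{E}\|_F\le 1$ the embedding properties give $\|\bPsi\mathbf{E}\|_F^2\le 1+\alpha$ and $\|\bS\mathbf{E}\bT\|_F^2\le (1+\alpha)^2$, so the choices of $\rho_1,\rho_2$ combined with \thmref{gaussian} yield $(\varepsilon,\delta)$-privacy for each. The delicate case is $\h{\bY}_c=\bA\h{\bPhi}$, which uses only input perturbation: conditional on $\bPhi$, the matrix $\bA\h{\bPhi}=t^{-1/2}(\bA\bPhi)\bOmega$ is matrix-variate Gaussian with row covariance $t^{-1}(\bA\bPhi)(\bA\bPhi)^{\sT}$, and the prepended $\sigma_{\mathsf{min}}\I_m$ guarantees this covariance dominates $t^{-1}\sigma_{\mathsf{min}}^2\I_m$; the density-ratio argument of \cite{Upadhyay16,BBDS12} then delivers $(\varepsilon,\delta)$-privacy for the stated $\sigma_{\mathsf{min}}$. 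Finally, the additive noise contributions are bounded exactly as in \lemref{naiveN_1} via rotational invariance of the Gaussian, the Rudelson--Vershynin spectral bound, and the $(1-\alpha)^{-1/2}$ singular-value preservation of $\bPhi$ and $\bPsi$; summing with the scalings $\rho_1,\rho_2,\sigma_{\mathsf{min}}$ gives the claimed additive error. The main obstacle is the privacy proof for $\h{\bY}_c$: because $\h{\bPhi}$ is a product of a subspace embedding and a Gaussian rather than a pure Gaussian, I must argue that on the high-probability event that $\bPhi$ is a subspace embedding the standard density-ratio bound still applies, and that the complementary failure event can be absorbed into the $\delta$ slack.
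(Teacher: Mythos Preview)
Your overall scaffolding—reduce to $\t\bA$, establish that $\bU,\bV$ capture the top-$k$ structure, apply a two-sided {\sf JLP} sandwich with $\bS,\bT$, invoke \thmref{FT07}, and handle privacy by composition with the density-ratio argument for $\h\bY_c$—matches the paper. Two steps, however, do not go through as you describe them.

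\textbf{The combination step.} You propose to prove two \emph{independent} one-sided bounds and then combine them via the candidate $\bX=[\bSigma]_k$ to obtain a $(1-\alpha)^{-2}$ multiplicative factor. This fails: $\bU$ and $\bV$ are orthonormal bases of the sketch column/row spaces, not close to $[\bU]_k,[\bV]_k$ as matrices, so $\bU[\bSigma]_k\bV^{\sT}$ has no useful relation to $[\t\bA]_k$. The only clean way to merge two one-sided bounds (take $\bX=\bU^{\sT}\bX_2$ where $\bX_2$ realizes the row-side minimum and use $\|\bU\bU^{\sT}\t\bA-\t\bA\|_2\le c_1$) gives an \emph{additive} $c_1+c_2\approx 2(1-\alpha)^{-1}\Delta_k(\t\bA)$, not the product $(1-\alpha)^{-2}\Delta_k(\t\bA)$ the theorem states. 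The paper instead runs the two regressions \emph{sequentially}: first set $\bC:=\t\bA\bPhi([\bV]_k^{\sT}\bPhi)^{\dagger}$ and obtain $\|\bC[\bV]_k^{\sT}-\t\bA\|_2\le(1-\alpha)^{-1}\Delta_k(\t\bA)$ from the $\bPhi$-side; then apply the regression lemma again with $\bPsi$ to $\min_\bX\|\bPsi(\bC\bX-\t\bA)\|_2$ (so $\bPsi$ must satisfy \lemref{mult} with respect to the rank-$k$ matrix $\bC$, not $[\bU]_k$), yielding $\|\bC\t\bX'-\t\bA\|_2\le(1-\alpha)^{-1}\|\bC[\bV]_k^{\sT}-\t\bA\|_2\le(1-\alpha)^{-2}\Delta_k(\t\bA)$. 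The feasible point is $\bY_c\bD\bY_r$ with $\bD=([\bV]_k^{\sT}\bPhi)^{\dagger}(\bPsi\bC)^{\dagger}$, and the $\bN_1$-contribution appears as $\|\t\bA\bPhi\,\bD\,\bN_1\|_2$ rather than your transposed one-sided term.

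\textbf{The term $E_1$.} There is no additive noise on $\h\bY_c$: the Gaussian $\bOmega$ is \emph{multiplicative} and is there solely so that, conditional on $\bPhi$, each column of $\h\bY_c$ is Gaussian with covariance $t^{-1}(\t\bA\bPhi)(\t\bA\bPhi)^{\sT}$ for the privacy proof. It contributes nothing to the approximation error—your $E_1$ should be zero, and the only sketch-noise additive terms are the $\bN_1$-term above and $\|\bS^{\dagger}\bN_2\bT^{\dagger}\|_2$.
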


  For the sake of simplicity, we present the algorithm when $m \leq n$. When $m >n$, we change the algorithm in~\figref{space} as follows: (i) use $\bA^{\mathsf T}$ instead of $\bA$, (ii) use  $\begin{pmatrix} \mathbf{0} & \sigma \I_n \end{pmatrix}^\sT$ instead of $\begin{pmatrix} \mathbf{0} & \sigma \I_m \end{pmatrix}$ to initialize $\h{\bY}_c,\h{\bY}_r$ and $\h{\bZ}$, (iii) the matrices $\bS, \bT, \bPhi,$ and $\bPsi$ are picked using the same distribution but with appropriate dimensions to allow matrix multiplication. 

The space complexity of the algorithm is straightforward from the choice of $t$ and $v$. 
We first prove part~\ref{spacecorrectness} of~\thmref{space}. Our proof consists of few main lemmas (\lemref{first}, \lemref{space}, \lemref{third}, and \lemref{spaceN_1}).
~\lemref{first} bounds  $\| \mathbf{M}_k - \begin{pmatrix} \bA & \mathbf{0} \end{pmatrix} \|_2$ by $ \| \mathbf{M}_k - \widehat{\bA} \|_2$ and  a fixed additive term.
~\lemref{space}  bounds $ \| \mathbf{M}_k - \widehat{\bA} \|_2$ by $\| \h{\bA} - [\h{\bA}]_k \|_2$ and two additive terms.
~\lemref{third}  bounds $\| \h{\bA} - [\h{\bA}]_k \|_2$ in the terms of $\| {\bA} - [{\bA}]_k \|_2$ and a fixed additive term. 
~\lemref{spaceN_1}  bounds the two additive terms in~\lemref{space} completing the proof. 

For all our correctness proofs, we make a standard assumption that $\delta \ll 1/100$.  Let $\bZ:=\h{\bZ}+\bN_2$, $\bY_c:=\h{\bY}_c$, $\bY_r:=\h{\bY}_r+\bN_1$, and $\bB={\h{\bA}}+\bS^\dagger  \bN_2 \bT^\dagger$, where all the variables on the right hand side of the expressions are as in~\figref{space}.

We start by proving a bound on   $\| \mathbf{M}_k - \bA \|_2$ by $ \| \mathbf{M}_k - \widehat{\bA} \|_2$ and  a small additive term. The following lemma provides such a bound.
\begin{lemma} \label{lem:first}
 Let $\bA$ be an $m \times n$ input matrix, and let $\widehat{\bA}=  \begin{pmatrix} \bA & \sigma_\mathsf{min} \I_m \end{pmatrix}$ for $\sigma_\mathsf{min}$ defined in~\figref{space}. Let $\mathbf{M}_k= \bU_k \bSigma_k \bV_k^{\mathsf T}$, where $\bU_k, \bSigma_k, \bV_k$ is the factorization outputted by  {\scshape Low-Space-LRF}. Then 
\begin{align*} \| \mathbf{M}_k - \begin{pmatrix} \bA & \mathbf{0} \end{pmatrix}\|_2  \leq  \| \mathbf{M}_k - \widehat{\bA} \|_2 + \sigma_{\min}.\end{align*}
\end{lemma}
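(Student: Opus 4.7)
The plan is to prove this by a single application of the triangle inequality for the spectral norm, combined with a direct computation of the spectral norm of the padding matrix. Since $\widehat{\bA} := \begin{pmatrix} \bA & \sigma_{\mathsf{min}} \I_m \end{pmatrix}$ while the target we actually want to approximate is $\begin{pmatrix} \bA & \mathbf{0} \end{pmatrix}$, the two differ only in the trailing block, and the idea is to pay exactly $\sigma_{\mathsf{min}}$ for the substitution.

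Concretely, I would first write
\begin{align*}
\| \mathbf{M}_k - \begin{pmatrix} \bA & \mathbf{0} \end{pmatrix}\|_2
&\leq \| \mathbf{M}_k - \widehat{\bA} \|_2 + \| \widehat{\bA} - \begin{pmatrix} \bA & \mathbf{0} \end{pmatrix}\|_2,
\end{align*}
which is immediate from sub-additivity of the spectral norm. Next, I would observe that
\begin{align*}
\widehat{\bA} - \begin{pmatrix} \bA & \mathbf{0} \end{pmatrix} = \begin{pmatrix} \mathbf{0} & \sigma_{\mathsf{min}} \I_m \end{pmatrix},
\end{align*}
and then compute its spectral norm directly. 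For any unit vector $\bx \in \R^{n+m}$, partition $\bx = (\bx_1, \bx_2)$ with $\bx_2 \in \R^m$; then $\begin{pmatrix} \mathbf{0} & \sigma_{\mathsf{min}} \I_m \end{pmatrix}\bx = \sigma_{\mathsf{min}} \bx_2$, whose Euclidean norm is at most $\sigma_{\mathsf{min}} \|\bx_2\|_2 \leq \sigma_{\mathsf{min}}$, with equality achieved by any $\bx$ supported on the second block. Hence $\| \widehat{\bA} - \begin{pmatrix} \bA & \mathbf{0} \end{pmatrix}\|_2 = \sigma_{\mathsf{min}}$.

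Substituting back yields the claimed inequality
\begin{align*}
\| \mathbf{M}_k - \begin{pmatrix} \bA & \mathbf{0} \end{pmatrix}\|_2
\leq \| \mathbf{M}_k - \widehat{\bA} \|_2 + \sigma_{\mathsf{min}}.
\end{align*}
There is no real obstacle here; this is a purely mechanical reduction whose purpose is to let subsequent lemmas (namely \lemref{space} and \lemref{third}) reason about the padded matrix $\widehat{\bA}$ throughout the analysis, with the $\sigma_{\mathsf{min}}$ term folded harmlessly into the final additive error of \thmref{space}.
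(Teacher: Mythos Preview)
Your proof is correct and follows essentially the same approach as the paper: both use sub-additivity of the spectral norm together with the observation that $\widehat{\bA} - \begin{pmatrix} \bA & \mathbf{0} \end{pmatrix} = \begin{pmatrix} \mathbf{0} & \sigma_{\mathsf{min}} \I_m \end{pmatrix}$ has spectral norm $\sigma_{\mathsf{min}}$. The only cosmetic difference is that the paper writes the triangle inequality in the subtracted form $\|X\|_2 - \|Y\|_2 \leq \|X - Y\|_2$, whereas you write it in the added form $\|X\|_2 \leq \|X - Y\|_2 + \|Y\|_2$.
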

\begin{proof}
Since $\widehat{\bA}=  \begin{pmatrix} \bA & \sigma_\mathsf{min} \I_m \end{pmatrix}$, the lemma is immediate from the following.
\begin{align*} 
\| \mathbf{M}_k - \begin{pmatrix} \bA & \mathbf{0} \end{pmatrix} \|_2 - \sigma_{\min} \| \I_m  \|_2 
\leq  \| \mathbf{M}_k - \begin{pmatrix} \bA & \mathbf{0} \end{pmatrix} - \begin{pmatrix}  \mathbf{0} & \sigma_{\min} \I \end{pmatrix} \|_2 
=  \| \mathbf{M}_k - \widehat{\bA} \|_2, \end{align*}
where the first inequality follows from the sub-additivity of the spectral norm.
\end{proof}

We next prove a bound on $\| \mathbf{M}_k - \h{\bA} \|_2 $ by  $ \| \h{\bA} - [\h{\bA}]_k\|_2$ and some additive terms. 

\begin{lemma} \label{lem:space}
Let $\mathbf{M}_k := \bU_k \bSigma_k \bV_k^{\mathsf T}$, where $\bU_k, \bSigma_k, \bV_k$ is the factorization outputted by  {\scshape Low-Space-LRF} presented in~\figref{space}. Then with probability $1-O(\delta)$ over $\bPhi ,\bPsi\sim \cD_R$ and $\bS,\bT \sim \cD_A$,
\begin{align*}& \| \mathbf{M}_k - \h{\bA} \|_2  \leq \frac{(1+\alpha)^2}{(1-\alpha)^4}  \| \h{\bA} - [\h{\bA}]_k\|_2 
+ \frac{2 (1+\alpha^2) \| \bN_2\bS^\dagger  \|_2}{(1-\alpha)^2}  + \paren{\frac{1+\alpha}{1-\alpha}}^2  \| \bN_1([\bV]_k^\sT \bPhi)^\dagger [\bV]_k^\sT \|_2.   \end{align*}
\end{lemma}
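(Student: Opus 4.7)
My plan is to mirror the structure of the proof of \lemref{complete}, but adapted to the two-sided sketching scheme used by {\scshape Low-Space-LRF}. The essential reason the factor $(1+\alpha)^2/(1-\alpha)^4$ appears (instead of $(1+\alpha)/(1-\alpha)^2$) is that we now carry out both a right-sketch argument (as in {\scshape Spectral-LRF}) \emph{and} a left-sketch argument, each contributing a $(1+\alpha)/(1-\alpha)^2$ factor.

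First I would establish the two-sided analog of~\claimref{firstcomplete}. Let $[\bU]_k[\bSigma]_k[\bV]_k^{\sT}$ be an SVD of $[\h{\bA}]_k$. By construction, $\bU$ is an orthonormal basis for the column space of $\h{\bY}_c = \h{\bA}\h{\bPhi}$ and $\bV$ is an orthonormal basis for the row space of $\h{\bY}_r + \bN_1 = \bPsi \h{\bA} + \bN_1$. Since $\h{\bPhi}$ and $\bPsi$ are sampled from $\cD_R$ with parameters satisfying~\lemref{mult} for a rank-$k$ matrix at error level $(\sqrt{\alpha}/k,\delta)$, I would first run the argument of~\claimref{firstcomplete} on the column side to obtain
\begin{align*}
\min_{\mathsf{r}(\bX)\le k} \| \bU \bX - \bB\|_2 \;\le\; \frac{1}{1-\alpha}\,\| \h{\bA} - [\h{\bA}]_k\|_2 \;+\; \mathcal{E}_c,
\end{align*}
where $\mathcal{E}_c$ collects the noise contributions (in particular $\|\bS^{\dagger}\bN_2\bT^{\dagger}\|_2$). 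Then I would run the symmetric argument on the row side, replacing the role of $\bPhi, [\bV]_k$ with $\bPsi, [\bU]_k$, to conclude that there is a rank-$k$ matrix $\bX^\star$ (essentially $[\bU]_k([\bU]_k^{\sT}\bPsi^{\sT})^{\dagger\sT}[\bSigma]_k([\bV]_k^{\sT}\bPhi)^{\dagger}[\bV]_k^{\sT}$ suitably decomposed against $\bU$ and $\bV$) for which
\begin{align*}
\min_{\mathsf{r}(\bX)\le k} \|\bU \bX \bV^{\sT} - \bB\|_2 \;\le\; \frac{1}{(1-\alpha)^2}\,\|\h{\bA} - [\h{\bA}]_k\|_2 \;+\; \mathcal{E}_c + \mathcal{E}_r,
\end{align*}
where $\mathcal{E}_r$ collects the $\bN_1$-term $\|\bN_1([\bV]_k^{\sT}\bPhi)^{\dagger}[\bV]_k^{\sT}\|_2$. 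The technical ingredient is the same $[\bV]_k^{\sT}\bPhi\bPhi^{\sT}(\h{\bA}^{\sT} - [\bV]_k\t{\bX}) = 0$ normal-equation identity from~\eqnref{orthogonal}, applied on each side independently.

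Second, I would establish the two-sided analog of~\claimref{thirdcomplete}. The key observation is that if $\h{\bX} := \bU^{\sT}\bB\bV$ denotes the optimal rank-$t$ two-sided regression solution for $\min_{\bX}\|\bU\bX\bV^{\sT}-\bB\|_2$, then both $\bU^{\sT}(\bU\h{\bX}\bV^{\sT}-\bB)\bV = 0$ hold because $\bU, \bV$ are orthonormal (so $\bU^{\sT}\bU = \I$ and $\bV^{\sT}\bV = \I$). Using the JLP of $\bS$ and $\bT$ from $\cD_A$, together with the Pythagorean-style splitting
\begin{align*}
\|\bS(\bU\bX\bV^{\sT} - \bB)\bT\|_2 \;\le\; \|\bS\bU(\bX - \h{\bX})\bV^{\sT}\bT\|_2 + \|\bS(\bU\h{\bX}\bV^{\sT} - \bB)\bT\|_2,
\end{align*}
and applying $(\alpha^2,\delta)$-JLP to both $\bS$ and $\bT$ exactly as in~\eqnref{S}, I would obtain (after rescaling $\alpha$)
\begin{align*}
(2(1-\alpha))^2 \|\bU\bX\bV^{\sT}-\bB\|_2 \;\le\; \|\bS(\bU\bX\bV^{\sT}-\bB)\bT\|_2 \;\le\; (2(1+\alpha))^2 \|\bU\bX\bV^{\sT}-\bB\|_2
\end{align*}
for every $\bX$ simultaneously. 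Combined with the characterization that $\mathbf{M}_k = \bU\bU'\bSigma'\bV'^{\sT}\bV^{\sT}$ is precisely $\bU\cdot\argmin_{\mathsf{r}(\bX)\le k}\|\bS(\bU\bX\bV^{\sT} - \bZ)\bT\|_2\cdot\bV^{\sT}$ by~\thmref{FT07} applied with $\bL = \bS\bU$, $\bR = \bT^{\sT}\bV$, $\bO = \bZ$, this yields $\frac{1-\alpha}{1+\alpha}\|\mathbf{M}_k-\bB\|_2 \le \min_{\mathsf{r}(\bX)\le k}\|\bU\bX\bV^{\sT}-\bB\|_2$, picking up another $(1+\alpha)/(1-\alpha)$ factor.

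Chaining the two bounds and paying the sub-additivity cost $\|\mathbf{M}_k - \h{\bA}\|_2 \le \|\mathbf{M}_k - \bB\|_2 + \|\bS^{\dagger}\bN_2\bT^{\dagger}\|_2$ to replace $\bB$ by $\h{\bA}$ gives the stated bound. The main obstacle I anticipate is keeping track of where the noise terms $\bN_1$ and $\bN_2$ enter: unlike in {\scshape Spectral-LRF}, $\bN_1$ is added \emph{inside} the sketch used to build $\bV$ while $\bN_2$ is added to the doubly-sketched object $\h{\bZ}$, so care is needed to ensure these perturbations propagate as $\|\bN_1([\bV]_k^{\sT}\bPhi)^{\dagger}[\bV]_k^{\sT}\|_2$ and $\|\bN_2\bS^{\dagger}\|_2$ rather than with extra factors of $\bT^{\dagger}$ that would blow up the additive error (this is exactly the subtlety flagged in the paragraph preceding the lemma about why pure output perturbation would fail).
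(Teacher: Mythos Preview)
Your overall two-claim architecture (a regression bound for $\min_{\mathsf{r}(\bX)\le k}\|\bU\bX\bV^{\sT}-\bB\|_2$, followed by a two-sided JLP sandwich for $\bS,\bT$) matches the paper, and your treatment of the second claim is essentially what the paper does in \claimref{thirdspace} (note the factor should be $\bigl(\tfrac{1-\alpha}{1+\alpha}\bigr)^{2}$, not $\tfrac{1-\alpha}{1+\alpha}$, since you apply \eqnref{S} once for $\bS$ and once for $\bT$).

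The gap is in your first claim. You propose to ``run the symmetric argument on the row side, replacing the role of $\bPhi,[\bV]_k$ with $\bPsi,[\bU]_k$'' and combine the two one-sided bounds via a candidate $\bX^{\star}$ built from $[\bU]_k(\bPsi[\bU]_k)^{\dagger}$ and $([\bV]_k^{\sT}\bPhi)^{\dagger}[\bV]_k^{\sT}$. This does not work as stated: your $\bX^{\star}$ is dimensionally inconsistent (the middle factor $[\bSigma]_k$ cannot be placed between an $m\times t$ block and a $k\times(m+n)$ block), and more importantly it does not factor through the \emph{actual} sketches $\bY_c=\h{\bA}\h{\bPhi}$ and $\bY_r=\bPsi\h{\bA}+\bN_1$. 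The reduction $\min_{\mathsf{r}(\bX)\le k}\|\bU\bX\bV^{\sT}-\bB\|_2\le\min_{\mathsf{r}(\bX)\le k}\|\bY_c\bX\bY_r-\bB\|_2$ only helps if you can exhibit a rank-$k$ matrix of the form $\bY_c\,(\cdot)\,\bY_r$ that is close to $\h{\bA}$; an expression involving $[\bU]_k(\bPsi[\bU]_k)^{\dagger}$ does not lie in the column space of $\bY_c$ in general.

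The paper resolves this by a \emph{nested}, not symmetric, regression: after the first step produces $\bC:=\h{\bA}\bPhi([\bV]_k^{\sT}\bPhi)^{\dagger}$ with $\|\bC[\bV]_k^{\sT}-\h{\bA}\|_2\le(1-\alpha)^{-1}\Delta_k(\h{\bA})$, the second regression is run with $\bPsi$ against $\bC$ (not $[\bU]_k$), yielding $\t{\bX}'=(\bPsi\bC)^{\dagger}\bPsi\h{\bA}$. The point is that $\bC\t{\bX}'=\h{\bA}\bPhi\cdot\bD\cdot\bPsi\h{\bA}$ with $\bD=([\bV]_k^{\sT}\bPhi)^{\dagger}(\bPsi\bC)^{\dagger}$, so it \emph{does} factor as $\bY_c\,\bD\,(\bY_r-\bN_1)$, and the $\bN_1$ contribution separates out as $\|\h{\bA}\bPhi\,\bD\,\bN_1\|_2$ (which is the form actually carried through the paper's proof, later bounded in \lemref{spaceN_1}). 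Replacing your independent row step with this nested construction is the missing ingredient.
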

 
\begin{proof}
Our proof consists of two main claims (\claimref{firstspace} and~\claimref{thirdspace}). ~\claimref{firstspace} shows that $\bU$ and $\bV$ are ``faithful" representations of the top-$k$ left and right singular vectors of $[\bA]_k$ and~\claimref{thirdspace} shows that $\bS$ and $\bT$ satisfies the required properties with the choice of error parameters.
\begin{claim} \label{claim:firstspace}
	Let $\bA$ be the input matrix. Let $\bPhi \sim \cD_R$ be a random matrix that satisfies~\lemref{mult} with error parameters $(\sqrt{\alpha}/k,\delta)$ with respect to $[\bV]_k$ and $\bPsi \sim \cD_R$ satisfies~\lemref{mult} with error parameters $(\sqrt{\alpha}/k,\delta)$ with respect to $\h{\bA} \bPhi( [\bV]_k^\sT  \bPhi )^\dagger$.  
	Then with probability $1-2\delta$ over the choices of $\bPhi$ and $\bPsi$,
	\begin{align*} 	
	 \min_{\bX, \mathsf{r}(\bX) \leq k} \| \bU \bX \bV^\sT  - \bB \|_2& \leq (1-\alpha)^{-2} \| \h{\bA} - [\h{\bA}]_k \|_2  
	 + \| \h{\bA} \bPhi   \bD   \bN_1 \|_2  + \| \bS^\dagger \bN_2  \bT^\dagger \|_2.
 \end{align*}
\end{claim}
\begin{proof}
Let us denote by $[\bU]_k [\bSigma]_k [\bV]_k^{\mathsf T}$  the SVD of $[\h{\bA}]_k$. By construction, $\bU$ is an orthonormal basis for the column space of $\bY_c$ and $\bV$ is an orthonormal basis for the row space of $\bY_r$, i.e., 
	\begin{align} 
		\min_{\bX, \mathsf{r}(\bX) \leq k} \| \bU \bX \bV^\sT  - \bB \|_2 \leq  \min_{\bX, \mathsf{r}(\bX) \leq k} \| \bY_c \bX \bY_r - \bB \|_2. \label{eq:spacebasis2}  
	\end{align}
	
	We also note that the choice of $t$ allows $\bPhi$ and $\bPsi$ to satisfy~\lemref{mult} with error parameters $(\sqrt{\alpha}/k,\delta)$~\cite{CW13}. 
Therefore, in order to show that the matrix $\bU$ approximates the top-$k$ left singular vectors of the matrix $\h{\bA}$ and $\bV$ approximates the top-$k$ right singular vectors of the matrix $\h{\bA}$, we need to find a matrix $\bX$ such that $\| \bY_c \bX \bY_r - \h{\bA} \|_2$ is bounded by $(1- \alpha)^{-2} \|\h{\bA} - [\h{\bA}]_k \|_2$ up to some additive terms. Towards this goal, first consider the following two optimization problems:
\begin{align} \min_{\bX} \| \bPhi^\sT ([\bV]_k \bX - \h{\bA}^\sT) \|_2 \quad \text{and} \quad \min_{\bX} \| ([\bV]_k \bX - \h{\bA}^\sT) \|_2. \label{eq:regressionproblem2}
\end{align}

with
	\begin{align*} 
	& \t{\bX}:=\argmin_{\bX} \| \bPhi^\sT([\bV]_k \bX - \h{\bA}^\sT) \|_2 \\
	& \h{\bX}:=\argmin_{\bX} \| ([\bV]_k \bX - \h{\bA}^\sT) \|_2  \end{align*}
being one of the solutions of~\eqnref{regressionproblem2}.

Recall that $\bPhi$ satisfies~\lemref{mult} with respect to $[\bV]_k$. Therefore, if we substitute $\bA$ by $\h{\bA}$ in~\eqnref{aim}, then $\| [\bV]_k \t{\bX} - \h{\bA}^\sT \|_2 \leq  (1-\alpha)^{-1} \| ([\bV]_k \h{\bX} - \h{\bA}^\sT) \|_2 $, where $\t{\bX} = (\bPhi^\sT [\bV]_k)^\dagger \bPhi^\sT \h{\bA}^\sT.$ This is the same as 
\begin{align*}
\|  \h{\bA} \bPhi ([\bV]_k^\sT \bPhi)^\dagger & [\bV]_k^\sT - \h{\bA} \|_2 =  \| \t{\bX}^\sT [\bV]_k^\sT  - \h{\bA} \|_2 
\leq  (1-\alpha)^{-1} \| ([\bV]_k \h{\bX} - \h{\bA}^\sT) \|_2 \end{align*}

Define a matrix $\mathbf{C}: =\h{\bA} \bPhi( [\bV]_k^\sT  \bPhi )^\dagger .$ The matrix $\bC$ has a rank at most $k$ and the above equation can be rewritten as
\begin{align}
	\| \bC [\bV]_k^{\mathsf T} - \h{\bA} \|_2 \leq (1-\alpha)^{-1}  \| ([\bV]_k \h{\bX} - \h{\bA}^\sT) \|_2. \label{eq:relate}
\end{align}
 
 Now  consider the following optimization problem:
\begin{align} \min_{\bX} \| \bPsi  (\bC \bX - \h{\bA}) \|_2  \quad \text{and} \quad \min_{\bX} \| (\bC \bX - \h{\bA}) \|_2. \label{eq:regressionproblem1}
\end{align}

Setting $\mathbf{L}:=\bPhi \bC$ and $\bA = \h{\bA}$~\thmref{FT07} and a singular value decomposition of $ \bPsi \bC$, we have
	\begin{align*} \t{\bX}':= (\bPsi \bC)^\dagger \bPsi \h{\bA} =\argmin_{\bX} \| \bPsi (\bC \bX - \h{\bA}^\sT)   \|_2 . \end{align*}

Recall that $\bPsi$ satisfies~\lemref{mult} with respect to $\bC$. Therefore,  using~\eqnref{aim} on problems~(\ref{eq:regressionproblem1}), we have
\begin{align}
	\| \bC \t{\bX}' - \h{\bA} \|_2 
			&\leq   (1-\alpha)^{-1}\min_{\bX} \| \bC {\bX} - \h{\bA} \|_2 \nonumber  \\
			& \leq  (1-\alpha)^{-1} \|  \bC [\bV]_k^\sT - \h{\bA} \|_2 \nonumber  \\
			& \leq (1-\alpha)^{-2} \| ([\bV]_k \h{\bX} - \h{\bA}^\sT) \|_2 \nonumber \\
			&=(1-\alpha)^{-2} \min_\bX \| ([\bV]_k {\bX} - \h{\bA}^\sT) \|_2  \nonumber  \\
			&\leq (1-\alpha)^{-2} \| [\bU]_k  [\bSigma]_k [\bV]_k^{\mathsf T} - \h{\bA} \|_2  \nonumber  \\
			& = (1-\alpha)^{-2} \| [\h{\bA}]_k - \h{\bA} \|_2. \label{eq:spacetranspose2}  
\end{align}
In the above, the third inequality follows from using~\eqnref{relate} and fourth follows from the definition of $\h{\bX}$.

	Let $\bD=  ([\bV]_k^\sT \bPhi)^\dagger (\bPsi \h{\bA} \bPhi ([\bV]_k \bPhi)^\dagger )^\dagger $, i.e., $\bC \t{\bX}' = \h{\bA} \bPhi \bD   \bPsi \h {\bA} $. Since $([\bV]_k \bPhi)^{\dagger} [\bV]_k $ has rank at most $k$ and $\bY= \bPsi \h{\bA} + \bN_1$, with probability $1-\delta$ over $\bPhi \sim \cD_R$,  
	\begin{align} 
		 \min_{\bX, \mathsf{r}(\bX) \leq k} \| \bY_c  \bX \bY_r  - \bB \|_2 
		&\leq  \| \h{\bA} \bPhi  \bD  \bPsi \h {\bA} +  \h{\bA} \bPhi   \bD   \bN_1 - \bB \|_2 \nonumber \\
		&\quad =  \| \h{\bA} \bPhi  \bD   \bPsi \h {\bA}  +  \h{\bA} \bPhi   \bD   \bN_1 - \h{\bA} - \bS^\dagger \bN_2    \bT^\dagger \|_2 \nonumber \\
		&\quad \leq  \| \h{\bA} \bPhi \bD   \bPsi \h {\bA} - \h{\bA} \|_2 + \| \h{\bA} \bPhi   \bD   \bN_1 \|_2 + \| \bS^\dagger \bN_2   \bT^\dagger  \|_2 \nonumber \\
		&\quad \leq  \| \bC \t{\bX}' - \h{\bA} \|_2 + \| \h{\bA} \bPhi   \bD   \bN_1 \|_2 + \| \bS^\dagger \bN_2   \bT^\dagger  \|_2\label{eq:spaceset2}  
	\end{align}
	
	Combining~\eqnref{spacetranspose2} and~\eqnref{spaceset2}, we have with probability $1-2\delta$ over the choice of $\bPhi $ and $\bPsi$, 
	\begin{align}  \min_{\bX, \mathsf{r}(\bX) \leq k} & \| \bY_c  \bX \bY_r  - \bB \|_2 \leq (1-\alpha)^{-2}  \| \h{\bA} - [\h{\bA}]_k \|_2 
	+ \| \h{\bA} \bPhi   \bD   \bN_1 \|_2  + \| \bS^\dagger \bN_2     \bT^\dagger\|_2. \label{eq:spacestep12} \end{align}
	
	Combining~\eqnref{spacestep12} and~\eqnref{spacebasis2}, we have with probability $1-3\delta$ over $\bPhi $ and $\bPsi$, 
	\begin{align} 
	\min_{\bX, \mathsf{r}(\bX) \leq k} & \| \bU \bX \bV^\sT  - \bB \|_2 \leq (1-\alpha)^{-2} \| \h{\bA} - [\h{\bA}]_k \|_2 
	+ \| \h{\bA} \bPhi   \bD   \bN_1 \|_2  + \| \bS^\dagger \bN_2  \bT^\dagger \|_2. \label{eq:step22space}
	\end{align}
This completes the proof of Claim~\ref{claim:firstspace}.
\end{proof}	

We would have been done if we had not generated a noisy version of the sketch $\bS \h{\bA} \bT$. The next claim bounds the effect of generating this sketch. 

\begin{claim} \label{claim:thirdspace}
Let $\bU, \bB, \h{\bA}, \bS, \bN_1$, and $\bN_2$ be as  above, and let $\widetilde{\bX}_k= \argmin_{\bX, \mathsf{r}(\bX)=k} \| \bS(\bU \bX \bV^\sT -\bB) \bT \|_2$. Let $\cD_A$ be a distribution that satisfies  $(\alpha^2,\delta)$-{\sf JLP}.  Then  with probability $1-6\delta$ over $\bS, \bT \sim \cD_A$ and $\bPhi, \bPsi \sim \cD_R$, we have
\begin{align*}	\paren{\frac{1-\alpha}{1+ \alpha}}^2 & {\|( \bU \widetilde{\bX}_k \bV^\sT  - \bB) \|_2} \leq (1-\alpha)^{-2}  \| \h{\bA} - [\h{\bA}]_k \|_2 
+ \|   \bS^\dagger \bN_2\bT^\dagger  \|_2  + \| \h{\bA} \bPhi   \bD   \bN_1 \|_2 . 
\end{align*}
\end{claim}
\begin{proof}
	Note that the choice of $v$ allows $\bS$ and $\bT$ to satisfy $({\alpha}^2,\delta)$-{\sf JLP}. 

Let $\h{\bX}_k =  \argmin_{\bX, \mathsf{r}(\bX) \leq k} \| (\bU \bX  \bV^\sT- \bB) \|_2.$ We want to bound $\| \bU \h{\bX}_k  \bV^\sT - \bB \|_2$ in terms of $\| \bU \t{\bX}_k  \bV^\sT- \bB\|_2$. 
{Using the definitions of $\t{\bX}_k$ and $\h{\bX}_k$ along with~\eqnref{S}, we have	}
\begin{align}	\min_{\bX, \mathsf{r}(\bX)=k}  \| \bU \bX \bV^\sT  -\bB \|_2 &= \| \bU \widehat{\bX}_k \bV^\sT  - \bB \|_2 \nonumber \\
			& \geq (2 + 2\alpha)^{-1} \| \bS(\bU  \widehat{\bX}_k \bV^\sT - \bB )  \|_2\nonumber \\
			& \geq (2 + 2\alpha)^{-2} \| \bS(\bU  \widehat{\bX}_k \bV^\sT - \bB ) \bT \|_2\nonumber \\
			& \geq (2 + 2\alpha)^{-2}  \min_{\bX, \mathsf{r}(\bX)\leq k} \| \bS(\bU {\bX}\bV^\sT - \bB ) \bT \|_2 \nonumber \\
			& = (2 + 2\alpha)^{-1}  \| \bS(\bU \widetilde{\bX}_k \bV^\sT - \bB  ) \bT \|_2 \nonumber\\
			& \geq  \paren{\frac{1-\alpha}{1+\alpha}}^2  \| (\bU \widetilde{\bX}_k \bV^\sT - \bB  ) \|_2 
			.  \label{eq:S2space}  
	\end{align}
	
	Combining~\eqnref{S2space} with~\eqnref{step22space},  we have with probability $1-3\delta$ over $\bPhi \sim \cD_R$ and $\bS \sim \cD_A$, 
	\begin{align}
	\paren{\frac{1-\alpha}{1+\alpha}}^2 & \|( \bU \widetilde{\bX}_k \bV^\sT- \bB) \|_2 \leq (1-\alpha)^{-2}  \| \h{\bA} - [\h{\bA}]_k \|_2 
		+  \| \bS^\dagger \bN_2   \bT^\dagger  \|_2  + \| \h{\bA} \bPhi   \bD   \bN_1 \|_2 . \label{eq:step32} 
	\end{align}
This completes the proof of Claim~\ref{claim:thirdspace}.
\end{proof}


We can now complete the proof of~\lemref{space}. Since $\bB= \h{\bA} + \bS^\dagger \bN_2  \bT^\dagger$
 \begin{align*}
 	   \| \bU \widetilde{\bX}_k \bV^\sT  - \h{\bA} \|_2  - \| \bS^\dagger \bN_2 \bT^\dagger  \|_2
	  		& \leq  \| \bU \widetilde{\bX}_k\bV^\sT  - \bB \|_2 
			\\ &\leq \frac{(1+\alpha)^2}{(1-\alpha)^4}  \|\h{\bA} - [\h{\bA}]_k\|_2 
			+  \paren{\frac{1+\alpha}{1-\alpha}}^2  (\| \bS^\dagger \bN_2  \bT^\dagger  \|_2  +\| \h{\bA} \bPhi   \bD   \bN_1 \|_2 ).
  \end{align*}
{This implies that} 
 \begin{align*}
& \| \bU \widetilde{\bX}_k \bV^\sT    - \h{\bA} \|_2 
 			\leq \frac{(1+\alpha)^2}{(1-\alpha)^4}  \|\h{\bA} - [\h{\bA}]_k\|_2 
			+ \frac{2 (1+\alpha^2) \| \bN_2\bS^\dagger \bT^\dagger \|_2}{(1-\alpha)^2}  + \paren{\frac{1+\alpha}{1-\alpha}}^2 \| \h{\bA} \bPhi   \bD   \bN_1 \|_2 .  
  \end{align*}

Setting $\mathbf{O}:=\bZ$, $\mathbf{L}:=\bS \bU$ and $\mathbf{R}:=\bV^\sT \bT$ in the statement of~\thmref{FT07}, we have \begin{align*}\t{\bX}_k= \widetilde{\bV}_s \widetilde{\bSigma}_s^{\dagger} \widetilde{\bU}_s^{\mathsf T} [\widetilde{\bU}_s \widetilde{\bU}_s^{\mathsf T}\bZ \widetilde{\bV}_t \widetilde{\bV}_t^{\mathsf T} ]_k  \widetilde{\bV}_t \widetilde{\bSigma}_t^{\dagger} \widetilde{\bU}_t^{\mathsf T}\end{align*} 
is one of the candidate solutions to  $$\argmin_{\bX, \mathsf{r}(\bX)=k} \| \bS(\bU \bX \bV^\sT -\bB) \|_2.$$
 Using this value of $\t{\bX}_k$ completes the proof of~\lemref{space}. 
\end{proof}
 
 \lemref{space} contains two additive terms $\| \bS^\dagger \bN_2  \bT^\dagger  \|_2$ and $\| \h{\bA} \bPhi   \bD   \bN_1 \|_2$. We next bound both these terms. 


\begin{lemma} \label{lem:spaceN_1}
	Let $\rho=\sqrt{(1+\alpha)\log(1/\delta)}/\varepsilon$ and $\bN_1 \sim \cN(0,\rho^2)^{(m+n) \times t}$. Then with probability $1-2\delta$ over $\bPhi \sim \cD_R$, $ \| \h{\bA} \bPhi   \bD   \bN_1 \|_2= O( \rho_1 (1-\alpha)^{-1/2}( \sqrt{k} +\sqrt{m+n}))$ and $\|\bS^\dagger \bN_2 \bT^\dagger \|_2 =  O(\rho_2 \sqrt{v}).$
 \end{lemma}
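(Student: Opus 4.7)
The plan is to reduce each bound to the spectral norm of a Gaussian matrix with independent entries and then apply the Rudelson--Vershynin estimate already used in the proof of Lemma~\ref{lem:naiveN_1}. For the first bound, set $\mathbf{M} := \h{\bA}\,\bPhi\,([\bV]_k^\sT \bPhi)^\dagger \in \R^{m \times k}$; a direct computation from the definition of $\bD$ in Claim~\ref{claim:firstspace} gives $\h{\bA}\,\bPhi\,\bD = \mathbf{M}(\bPsi \mathbf{M})^\dagger$, so the object to bound is the rank-$k$ matrix $\bC := \mathbf{M}(\bPsi \mathbf{M})^\dagger \bN_1$, whose column space sits inside the $k$-dimensional column space of $\mathbf{M}$.

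Mirroring the trick in Lemma~\ref{lem:naiveN_1}, I would first estimate $\|\bPsi \bC\|_2$ and then transfer the bound back using the fact that $\bPsi$ is a subspace embedding for $\mathbf{M}$, which is already granted by the hypothesis of Claim~\ref{claim:firstspace}. Observe that $\bPsi \mathbf{M}(\bPsi \mathbf{M})^\dagger = \mathbf{P}\mathbf{P}^\sT$ is the orthogonal projection onto the column space of $\bPsi \mathbf{M}$, where $\mathbf{P}\in\R^{t\times k}$ is an orthonormal basis for that space. Hence $\bPsi \bC = \mathbf{P}\mathbf{P}^\sT \bN_1$, and the rotational invariance of the Gaussian makes $\mathbf{P}^\sT \bN_1$ a $k\times(m+n)$ matrix with iid $\cN(0,\rho_1^2)$ entries, so~\cite{RV10} gives $\|\bPsi \bC\|_2 = \|\mathbf{P}^\sT \bN_1\|_2 = O(\rho_1(\sqrt{k}+\sqrt{m+n}))$ with constant probability. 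To transfer this to $\|\bC\|_2$, for any unit $\by$ write $\bC \by = \mathbf{M}\bz$ with $\bz = (\bPsi \mathbf{M})^\dagger \bN_1 \by$; the subspace embedding for $\mathbf{M}$ gives $\|\mathbf{M}\bz\|_2 \le (1-\alpha)^{-1/2}\|\bPsi \mathbf{M}\bz\|_2 = (1-\alpha)^{-1/2}\|\bPsi \bC \by\|_2$, and taking the supremum over $\by$ yields the claimed bound on $\|\h{\bA}\,\bPhi\,\bD\,\bN_1\|_2$.

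For the second quantity $\|\bS^\dagger \bN_2 \bT^\dagger\|_2$, I would instantiate both $\bS$ and $\bT$ as suitably scaled subsampled Hadamard matrices, which satisfy the $(\alpha^2,\delta)$-{\sf JLP} required for the chosen $v$. Applying the invariance property invoked at the end of the proof of Lemma~\ref{lem:naiveN_1} once on each side (using transposition to move one pseudoinverse across) yields $\|\bS^\dagger \bN_2 \bT^\dagger\|_2 = \|\bN_2\|_2$, and since $\bN_2$ is a $v\times v$ matrix with iid $\cN(0,\rho_2^2)$ entries, another application of Rudelson--Vershynin gives $\|\bN_2\|_2 = O(\rho_2\sqrt{v})$, matching the stated bound. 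The only delicate step in the whole argument is the first one: the matrix $\mathbf{M}(\bPsi \mathbf{M})^\dagger \bN_1$ entangles the sketch $\bPsi$ with the Gaussian $\bN_1$ in a way that is not obviously a Gaussian projection, and the premultiplication-by-$\bPsi$ trick combined with the rank-$k$ subspace embedding is precisely what produces the $\sqrt{k}+\sqrt{m+n}$ dependence in the additive error rather than the weaker $\sqrt{t}+\sqrt{m+n}$ one would obtain by naive submultiplicativity.
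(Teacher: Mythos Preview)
Your proposal is correct and follows essentially the same route as the paper: premultiply by $\bPsi$ so that $\bPsi\h{\bA}\bPhi\bD$ becomes an orthogonal projection onto a rank-$k$ subspace, use rotational invariance of the Gaussian to reduce $\bPsi\bC$ to a $k\times(m+n)$ Gaussian matrix handled by Rudelson--Vershynin, and then transfer the bound back to $\|\bC\|_2$ via the subspace-embedding property of $\bPsi$ for $\mathbf{M}=\h{\bA}\bPhi([\bV]_k^\sT\bPhi)^\dagger$; the second bound is obtained identically by two applications of the SRHT pseudoinverse lemma. If anything, your write-up is slightly more careful than the paper's, since you invoke the subspace embedding specifically for the rank-$k$ matrix $\mathbf{M}$ (which is exactly the hypothesis granted in Claim~\ref{claim:firstspace}) rather than appealing to a Sarlos-type statement ``for any matrix $\bE$''.
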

 \begin{proof}
Let $\bC= \h{\bA} \bPhi   \bD   \bN_1$. Then $$\bPsi \bC  = \bPsi \h{\bA} \bPhi  ([\bV]_k^\sT \bPhi)^\dagger (\bPsi \h{\bA} \bPhi ([\bV]_k \bPhi)^\dagger )^\dagger  \bN_1 .$$ Now $\bPsi \h{\bA} \bPhi   \bD$ is a projection unto a random subspace of dimension at most $k$.  Since every entries of $\bN_1$ is picked i.i.d. from $\cN(0,\rho^2)$, $\bPsi\bC  =  \widetilde{\bN}_1$, where $\widetilde{\bN}_1$ is an $m \times k$ matrix with every entries picked i.i.d. from $\cN(0,\rho^2)$. Using Rudelson and Vershynin~\cite[Proposition 2.4]{RV10}, we have $\| \bPsi \bC  \|_2 = O(\rho(\sqrt{m+n} + \sqrt{k}))$ with probability $99/100$. 
For our choices of  $t$, Sarlos~\cite{Sarlos06} showed that, for any  matrix $\bE$, all the singular values of $\bPsi \bE $ lies in between $(1 \pm \alpha)$ of the singular values of $\bE$. 
This in particular implies that $\| \bPsi \bC  \|_2 \geq (1 - \alpha)^{1/2} \| \bC \|_2$. That is,  $\| \bC \|_2 = (1-\alpha)^{-1/2} O(\rho_1 ( \sqrt{k} +\sqrt{m+n})).$ 
For the second part, if we instantiate $\bS$ with a subsampled Hadamard matrices, it is known that it satisfies $(\alpha^2,\delta)$-{\sf JLP} for the values of $v$. Applying~\lemref{SRHTinverse} twice, we have $\|\bS^\dagger \bN_2 \bT^\dagger \|_2 = \| \bN_2 \|_2 = O(\rho_2 \sqrt{v})$ using Rudelson and Vershynin~\cite{RV10} with probability $99/100$. This completes the proof of~\lemref{spaceN_1}.
 \end{proof}

 We finally compute an upper bound on $\| \widehat{\bA} - [\widehat{\bA}]_k \|_2$ in terms of $\| {\bA} - [{\bA}]_k \|_2$. 

\begin{lemma} \label{lem:third}
Let $d$ be the maximum of the rank of $\bA$ and $\widehat{\bA}$. Let $\sigma_1,\cdots, \sigma_d$ be the singular values of $\bA$ and $\h{\sigma}_1,\cdots, \h{\sigma}_d$ be the singular values of $\widehat{\bA}$. Then $\| \h{\bA} - [\h{\bA}]_k\|_2 \leq \| {\bA} - [{\bA}]_k\|_2 + \sigma_\mathsf{min}$.
\end{lemma}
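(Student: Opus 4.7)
The plan is to exploit the specific structure $\widehat{\bA} = \begin{pmatrix} \bA & \sigma_{\min} \I_m \end{pmatrix}$ to write down the singular values of $\widehat{\bA}$ explicitly in terms of those of $\bA$, and then conclude by a simple subadditive inequality for square roots. Since $\| \bY - [\bY]_k \|_2$ is just $\sigma_{k+1}(\bY)$ for any matrix $\bY$, the lemma reduces to showing $\h{\sigma}_{k+1} \leq \sigma_{k+1} + \sigma_{\min}$.

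The key observation I would use is that
\begin{align*}
\widehat{\bA} \widehat{\bA}^\sT = \begin{pmatrix} \bA & \sigma_{\min}\I_m \end{pmatrix} \begin{pmatrix} \bA^\sT \\ \sigma_{\min}\I_m \end{pmatrix} = \bA \bA^\sT + \sigma_{\min}^2 \I_m.
\end{align*}
Since adding $\sigma_{\min}^2 \I_m$ shifts every eigenvalue of $\bA \bA^\sT$ by exactly $\sigma_{\min}^2$, and since the singular values of a matrix $\bM$ are the square roots of the eigenvalues of $\bM \bM^\sT$, this immediately gives $\h{\sigma}_i = \sqrt{\sigma_i^2 + \sigma_{\min}^2}$ for every $i$.

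To conclude, I would apply the elementary inequality $\sqrt{a^2 + b^2} \le a+b$ valid for all $a,b \ge 0$ (which follows from squaring both sides and observing $2ab \ge 0$) to $a = \sigma_{k+1}$ and $b = \sigma_{\min}$:
\begin{align*}
\| \widehat{\bA} - [\widehat{\bA}]_k \|_2 = \h{\sigma}_{k+1} = \sqrt{\sigma_{k+1}^2 + \sigma_{\min}^2} \le \sigma_{k+1} + \sigma_{\min} = \| \bA - [\bA]_k \|_2 + \sigma_{\min}.
\end{align*}
There is no real obstacle here: the entire argument hinges on recognizing that the appended block $\sigma_{\min} \I_m$ contributes a clean rank-preserving shift to $\widehat{\bA}\widehat{\bA}^\sT$, after which the result is a one-line manipulation. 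The only minor care needed is to pad the singular-value sequences of $\bA$ with zeros (so that both sequences have $d$ entries, consistent with the statement), which does not affect the computation because $\sqrt{0 + \sigma_{\min}^2} = \sigma_{\min}$ is still bounded by $0 + \sigma_{\min}$.
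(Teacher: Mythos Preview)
Your proof is correct but takes a different route from the paper. The paper writes $\widehat{\bA} = \begin{pmatrix} \bA & \mathbf{0} \end{pmatrix} + \begin{pmatrix} \mathbf{0} & \sigma_{\min}\I_m \end{pmatrix}$ and invokes Weyl's perturbation theorem (\thmref{Weyl}) to conclude $|\h{\sigma}_i - \sigma_i| \leq \sigma_{\min}$ for all $i$, then specializes to $i = k+1$. You instead exploit the block structure directly by computing $\widehat{\bA}\widehat{\bA}^{\mathsf T} = \bA\bA^{\mathsf T} + \sigma_{\min}^2 \I_m$, which pins down the singular values \emph{exactly} as $\h{\sigma}_i = \sqrt{\sigma_i^2 + \sigma_{\min}^2}$, and then finish with the scalar inequality $\sqrt{a^2 + b^2} \leq a + b$. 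Your argument is more elementary (no external perturbation lemma needed) and in fact yields sharper information about the $\h{\sigma}_i$ than Weyl alone gives; the paper's approach has the virtue of being a one-line appeal to a standard tool and would generalize to perturbations that are not of this clean block-diagonal form.
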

\begin{proof}
We first prove that $| \sigma_i - \sigma_i' | \leq \sigma_\mathsf{min}$ for all $1 \leq i \leq d$. We can write $\widehat{\bA} =\begin{pmatrix} \bA & \mathbf{0} \end{pmatrix} + \begin{pmatrix}  \mathbf{0} & \sigma_\mathsf{min} \I_m \end{pmatrix}.$ By construction, all the singular values of $ \begin{pmatrix}  \mathbf{0} &  \sigma_\mathsf{min} \I_m \end{pmatrix}$ are $ \sigma_\mathsf{min}$; therefore,~\thmref{Weyl} implies that $| \sigma_i - \sigma_i' | \leq \sigma_\mathsf{min}$ for all $1 \leq i \leq d$. 
In particular, since $\| \h{\bA} - [\h{\bA}]_k\|_2 = \h{\sigma}_{k+1}$ and $\| {\bA} - [{\bA}]_k\|_2 = {\sigma}_{k+1}$, this implies that $|\h{\sigma}_{k+1} - \sigma_{k+1}| \leq \sigma_\mathsf{min}$. The lemma follows.
\end{proof}

Combining~\lemref{first},~\lemref{space},~\lemref{third}, and \lemref{spaceN_1} completes the proof of part~\ref{spacecorrectness} of~\thmref{space} after noting that $\alpha \in (0,1)$. We next prove part~\ref{spaceprivacy} of~\thmref{space}. This is essentially the same as in Upadhyay~\cite{Upadhyay16}. We include it for the sake of completion.

 \begin{lemma} \label{lem:low_space_private}
 If $\sigma_\mathsf{min}, \rho_1$ and $\rho_2$ be as in~\thmref{space}, then the algorithm presented in~\figref{space}, is $(3\varepsilon,6\delta)$-differentially private.
 \end{lemma}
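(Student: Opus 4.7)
The plan is to show that each of the three sketches $\h{\bY}_c$, $\h{\bY}_r$, and $\h{\bZ}$ is differentially private with parameters $(\varepsilon,\delta)$ (after accounting for a $\delta$ failure event for the JL condition), and then invoke the basic composition theorem together with the post-processing property of differential privacy to conclude. Note that the final output $(\bU_k,\bSigma_k,\bV_k)$ depends on the input matrix $\bA$ only through these three sketches, so bounding the privacy loss of the sketches suffices.

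The easier components are the two output-perturbed sketches. For $\h{\bY}_r+\bN_1 = \bPsi\bA + \bN_1$, I would condition on the high-probability event that $\bPsi\sim\cD_R$ satisfies $\|\bPsi\bD\|_F^2 \leq (1+\alpha)\|\bD\|_F^2$ for every $\bD$, which occurs with probability $1-\delta$ by the choice of $t$ (as already used in the proof of \lemref{private} via \cite{CW13}). For any two $\priv_2$-neighbors $\bA,\bA'$ with $\bA-\bA'=\bu\bv^\sT$ of Frobenius norm at most $1$, the sensitivity in Frobenius norm is $\|\bPsi(\bA-\bA')\|_F\le \sqrt{1+\alpha}$. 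Applying the Gaussian mechanism (\thmref{gaussian}) to the vectorization of $\bPsi\bA$ with noise of variance $\rho_1^2=(1+\alpha)\ln(1/\delta)/\varepsilon^2$ yields $(\varepsilon,\delta)$-differential privacy. The argument for $\h{\bZ}+\bN_2 = \bS\bA\bT + \bN_2$ is identical, except that we now condition on the JL-type conditions for \emph{both} $\bS$ and $\bT$: the Frobenius-norm sensitivity is at most $\|\bS(\bA-\bA')\bT\|_F\le (1+\alpha)\|\bA-\bA'\|_F\le (1+\alpha)$, so the choice $\rho_2=\rho_1\sqrt{1+\alpha}$ again yields $(\varepsilon,\delta)$-differential privacy by \thmref{gaussian}.

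The main obstacle is the first sketch, $\h{\bY}_c=\bA\h{\bPhi}=\tfrac{1}{\sqrt{t}}\bA\bPhi\bOmega$, which is handled by \emph{input} perturbation rather than output perturbation. Here the privacy must come from the Gaussian matrix $\bOmega$ together with the spectral floor enforced by the augmentation $\bA\leftarrow\begin{pmatrix}\mathbf{0} & \sigma_{\mathsf{min}}\I_m\end{pmatrix}$. Treating $\bPhi$ as fixed and public, each of the $t$ columns of $\h{\bY}_c$ is an i.i.d.\ draw from $\cN(\mathbf{0},\,\tfrac{1}{t}\bA\bPhi\bPhi^\sT\bA^\sT)$, and the augmentation guarantees that $\bA\bA^\sT\succeq\sigma_{\mathsf{min}}^2\,\I_m$, so the covariance has no vanishing directions even under rank-$1$ perturbations of $\bA$. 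I would then apply the multivariate-Gaussian-mechanism analysis of Blocki \emph{et al.} / Upadhyay~\cite{BBDS12,Upadhyay14,Upadhyay16}: for neighboring $\bA,\bA'$ differing by a rank-$1$ unit-norm perturbation, bound the KL/R\'enyi divergence between the two column distributions by analyzing the eigenvalues of $(\bA\bPhi\bPhi^\sT\bA^\sT)^{-1/2}(\bA'\bPhi\bPhi^\sT\bA'^\sT)(\bA\bPhi\bPhi^\sT\bA^\sT)^{-1/2}$, use the conditioning on the JL property of $\bPhi$ to replace $\bPhi\bPhi^\sT$ by a $(1\pm\alpha)$-approximate isometry on the relevant subspace, and then use the Weyl-type bound together with $\sigma_{\mathsf{min}}=16\ln(1/\delta)\sqrt{t\kappa\ln(4/\delta)}/\varepsilon$ to control the per-column privacy loss. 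Composing across the $t$ independent columns (via advanced composition, which the choice of $\sigma_{\mathsf{min}}$ is already tuned for) yields $(\varepsilon,\delta)$-differential privacy for $\h{\bY}_c$; the $\sqrt{t\kappa\ln(4/\delta)}$ factor in $\sigma_{\mathsf{min}}$ is precisely what advanced composition demands.

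Finally, combining the three $(\varepsilon,\delta)$-DP releases by basic composition gives $(3\varepsilon,3\delta)$-differential privacy, and adding the (at most) $3\delta$ failure probability across the three JL/subspace-embedding conditioning events (one per sketch) on which the sensitivity bounds were derived produces the stated $(3\varepsilon,6\delta)$-differential privacy. Post-processing then transfers privacy to $(\bU_k,\bSigma_k,\bV_k)$, completing the proof.
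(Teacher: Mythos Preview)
Your proposal is correct and follows essentially the same architecture as the paper: split the privacy argument into the two output-perturbed sketches $\bY_r$ and $\bZ$ (handled by the Gaussian mechanism after conditioning on the JL property, exactly as you describe and as the paper does in \claimref{Y_r}) and the input-perturbed sketch $\bY_c$ (handled via the multivariate-Gaussian analysis \`a la Blocki \emph{et al.}, as in \claimref{Y_c}), then compose and post-process.

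The one place where your sketch diverges slightly from the paper is the technical execution of the $\bY_c$ step. You propose to bound a KL/R\'enyi divergence via the eigenvalues of $(\bC\bC^\sT)^{-1/2}(\widetilde\bC\widetilde\bC^\sT)(\bC\bC^\sT)^{-1/2}$ and a Weyl-type perturbation bound. The paper instead bounds the log-PDF ratio directly, splitting it into two pieces: the log-determinant ratio $\tfrac12\log\bigl(\det(\bC\bC^\sT)/\det(\widetilde\bC\widetilde\bC^\sT)\bigr)$, controlled via Lidskii's inequality (\thmref{lidskii}) on the singular values of the rank-one perturbation, and the Mahalanobis difference $\bx^\sT(\bC\bC^\sT)^{-1}\bx-\bx^\sT(\widetilde\bC\widetilde\bC^\sT)^{-1}\bx$, controlled by an explicit expansion into four Gaussian linear forms $\tau_1,\ldots,\tau_4$ whose norms are bounded using $\sigma_{\mathsf{min}}(\bC)\ge(1-\alpha)^{1/2}\sigma_{\mathsf{min}}$. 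Both routes lead to the same per-column $(\varepsilon_0,\delta_0)$ guarantee that is then composed over $t$ columns via \thmref{DRV10}; your R\'enyi framing is a legitimate alternative, while the paper's direct-ratio argument is somewhat more elementary and avoids invoking R\'enyi machinery.
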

We prove the lemma when $m \leq n$. The case for $m \geq n$ is analogous after inverting the roles of $\widehat{\bPhi}$ and $\bPsi$. Let $\bA$ and $\bA'$ be two neighboring matrices, i.e., $\mathbf{E}= \bA - \bA' = \mathbf{u} \mathbf{v}^{\mathsf T}$. Then $\widehat{\bA}$ and $\widehat{\bA}'$, constructed by {\scshape Low-Space-}$\lrf$, has the following property: $\widehat{\bA}' = \widehat{\bA} + \begin{pmatrix}   \mathbf{E} & \mathbf{0}  \end{pmatrix}$. 
 
\begin{claim} \label{claim:Y_r}
 If $\rho_1 ={\sqrt{(1+\alpha)\ln(1/\delta)}}/{\varepsilon}$, then  publishing $\bY_r$ preserves$(\varepsilon,2\delta)$-differential privacy. If $\rho_2 ={(1+\alpha)\sqrt{\ln(1/\delta)}}/{\varepsilon}$, then  publishing  $\bZ$ preserves$(\varepsilon,2\delta)$-differential privacy. \end{claim}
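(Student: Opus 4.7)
The plan is to handle the two releases separately, in each case bounding the vectorized $L_2$-sensitivity of the data-dependent sketch by a high-probability event on the (publicly-drawn) sketching matrices and then invoking the standard Gaussian mechanism, with the total $\delta$ budget split between the sketching failure event and the Gaussian-mechanism failure.

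First I would set up the sensitivity. Under $\priv_2$ two neighboring inputs differ by $\bE = \mathbf{u}\mathbf{v}^{\mathsf T}$ with $\|\bE\|_F \leq 1$, and the corresponding padded matrices differ by $\widehat{\bA} - \widehat{\bA}' = \begin{pmatrix} \bE & \mathbf{0} \end{pmatrix}$, since the $\sigma_{\mathsf{min}} \I_m$ block is data-independent. For $\bY_r = \bPsi \widehat{\bA} + \bN_1$ the neighboring difference of the data-dependent part is $\bPsi\begin{pmatrix}\bE & \mathbf{0}\end{pmatrix}$, whose Frobenius norm equals $\|\bPsi\mathbf{u}\|_2\cdot\|\mathbf{v}\|_2$ because this factor is rank one. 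Since $\cD_R$ satisfies \lemref{mult}, it also obeys the Johnson--Lindenstrauss property on any fixed single vector with the same parameters, giving $\|\bPsi\mathbf{u}\|_2 \leq \sqrt{1+\alpha}\,\|\mathbf{u}\|_2$ except on a failure event of probability at most $\delta$ over $\bPsi \sim \cD_R$. Conditioned on this event the vectorized $L_2$-sensitivity is at most $\sqrt{1+\alpha}$, and the standard Gaussian mechanism with per-coordinate variance $\rho_1^2 = (1+\alpha)\ln(1/\delta)/\varepsilon^2$ yields $(\varepsilon,\delta)$-differential privacy; un-conditioning on the good draw of $\bPsi$ (which is independent of $\bN_1$) contributes an additional additive $\delta$, giving the claimed $(\varepsilon, 2\delta)$-DP for $\bY_r$.

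For $\bZ = \bS\widehat{\bA}\bT + \bN_2$ I would run the same program but with two JL events. Writing the neighboring difference as $\bS\begin{pmatrix}\bE & \mathbf{0}\end{pmatrix}\bT = (\bS\mathbf{u})\bigl(\bT^{\mathsf T}\begin{pmatrix}\mathbf{v} \\ \mathbf{0}\end{pmatrix}\bigr)^{\mathsf T}$, its Frobenius norm factors as $\|\bS\mathbf{u}\|_2 \cdot \|\bT^{\mathsf T}\begin{pmatrix}\mathbf{v} \\ \mathbf{0}\end{pmatrix}\|_2$. Applying the $(\alpha^2, \delta/2)$-{\sf JLP} of $\cD_A$ to each of these two fixed vectors and union-bounding bounds this product by $1+\alpha^2 \leq 1+\alpha$ with probability at least $1-\delta$ jointly over $\bS,\bT$. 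Feeding this sensitivity into the Gaussian mechanism with per-coordinate variance $\rho_2^2 = (1+\alpha)^2\ln(1/\delta)/\varepsilon^2$ gives $(\varepsilon,\delta)$-DP on the conditioned event, and the conditioning loss contributes the remaining $\delta$ to yield $(\varepsilon,2\delta)$-DP for $\bZ$.

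The main obstacle in the write-up is the independence bookkeeping: one must argue that restricting to the ``good-sketch'' event on $\bPsi$ (respectively on $\bS,\bT$) may legitimately be conditioned upon without breaking the Gaussian-mechanism analysis. This holds because the sketches are drawn once at initialization and published, i.e., they are part of the public randomness of the release and are independent of the per-release Gaussian noise $\bN_1, \bN_2$; the Gaussian mechanism applied to the conditional-on-sketch algorithm therefore gives $(\varepsilon,\delta)$-DP for every fixed good realization, and the usual additivity of failure probabilities produces the final $(\varepsilon,2\delta)$ guarantee in both cases.
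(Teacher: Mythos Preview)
Your proposal is correct and follows essentially the same approach as the paper: bound the Frobenius sensitivity of the linear sketch on a high-probability ``good sketch'' event, apply the Gaussian mechanism (\thmref{gaussian}) to the vectorized sketch, and absorb the conditioning failure into the additive $\delta$. The only notable stylistic difference is that you exploit the rank-$1$ structure of $\bE = \mathbf{u}\mathbf{v}^{\mathsf T}$ under $\priv_2$ to factor $\|\bPsi\bE\|_F = \|\bPsi\mathbf{u}\|_2\|\mathbf{v}\|_2$ and $\|\bS\bE\bT\|_F = \|\bS\mathbf{u}\|_2\|\bT^{\mathsf T}(\mathbf{v};\mathbf{0})\|_2$ and then invoke {\sf JLP} on single vectors, whereas the paper invokes the matrix-level Frobenius preservation $\|\bPsi\bD\|_F^2 \leq (1+\alpha)\|\bD\|_F^2$ and $\|\bS\bD\bT\|_F^2 \leq (1+\alpha)^2\|\bD\|_F^2$ directly (which does not use the rank-$1$ structure and hence also covers $\priv_1$). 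Your independence bookkeeping is more explicit than the paper's, which is a plus; the $\delta$ accounting in both arguments is loose by the same small constant factor.
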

\begin{proof}
First note that by the choice of $t$ and $v$, using Clarkson and Woodruff~\cite{CW13} gives that  $\| \bS \mathbf{D} \|_F^2 \leq (1 \pm \alpha) \|\mathbf{D}\|_F^2$ and $\| \bPsi \mathbf{D}  \|_F^2 \leq (1 \pm \alpha) \|\mathbf{D}\|_F^2$ for all $\mathbf{D}$ with probability $1-\delta$. Let $\bA$ and $\bA'$ be two neighboring matrices such that $\mathbf{E}= \bA - \bA' = \mathbf{u} \mathbf{v}^{\mathsf T}$. Then $\| \bS \begin{pmatrix} \mathbf{E} & \mathbf{0} \end{pmatrix} \bT^{\mathsf T} \|_F^2 \leq (1+\alpha) \|  \begin{pmatrix} \mathbf{E} & \mathbf{0} \end{pmatrix}  \bT^{\mathsf T}\|_F^2 \leq (1+\alpha)^2$. Publishing $\bZ$ preserves $(\varepsilon,\delta)$-differential privacy follows from considering the vector form of the matrix $\bS \widehat{\bA} \bT^{\mathsf T}$ and $\bN_2$ and applying~\thmref{gaussian}. Similarly, we use~\thmref{gaussian} and the fact that,  for any matrix $\bC$ of appropriate dimension, $\| \bPsi \bC  \|^2 \leq (1 + \alpha) \| \bC \|_F^2$, to prove that publishing $\bPsi \widehat{\bA}  + \bN_1$ preserves differential privacy. 
\end{proof}

We next give a proof sketch that $\bY_c$ is $(\varepsilon,\delta)$-differentially private. This would complete the proof of~\lemref{low_space_private} as the lemma would follow by combining~\lemref{post} and~\thmref{DRV10}. 
\begin{claim} \label{claim:Y_c}
Let $\sigma_{\mathsf min}$ be as in~\thmref{space}. Then  publishing $\bY_c$ preserves$(\varepsilon,2\delta)$-differential privacy. 
 \end{claim}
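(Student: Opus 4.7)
The plan is to view
$\bY_c = \h{\bY}_c = \bA \h{\bPhi} = \tfrac{1}{\sqrt{t}} (\bA \bPhi)\bOmega$
as the output of a Gaussian-sketch mechanism whose private randomness is supplied by $\bOmega \sim \cN(0,1)^{m \times t}$. Conditioned on the publicly released $\bPhi$, the $t$ columns of $\bY_c$ are i.i.d.\ draws from the multivariate Gaussian $\cN(\mathbf{0}, \mathbf{M}\mathbf{M}^\sT)$ with $\mathbf{M} := (\bA \bPhi)/\sqrt{t}$, so the differential privacy question reduces to comparing this joint density on $\R^{m \times t}$ for neighboring inputs $\bA$ and $\bA'$.

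First I would handle two events on $\bPhi$. The padding block $\sigma_{\mathsf{min}} \I_m$ forces the smallest singular value of $\bA$ to be at least $\sigma_{\mathsf{min}}$; applying the subspace-embedding property of $\bPhi$ to the $m$-dimensional column space of $\bA^\sT$ then yields that the smallest singular value of $\bA \bPhi$ is at least $(1-\alpha)\sigma_{\mathsf{min}}$ with probability at least $1-\delta$. For a $\priv_2$-neighboring pair with $\bE := \bA - \bA'$ a unit-Frobenius rank-one matrix, the same embedding gives $\|\bE \bPhi\|_F^2 \leq 1+\alpha$ with probability $1-\delta$, exactly as in the proof of~\claimref{Y_r}. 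A union bound places both events at probability $1-2\delta$.

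Next, I would invoke the Gaussian-projection privacy template of Blocki \emph{et al.}~\cite{BBDS12}, re-used in the present streaming context by Upadhyay~\cite{Upadhyay16}: if the input matrix has smallest singular value of order $\sqrt{t \ln(1/\delta)}/\varepsilon$ relative to the allowable Frobenius-norm perturbation, then right-multiplication by a standard Gaussian matrix with $t$ columns is $(\varepsilon,\delta)$-differentially private. The value $\sigma_{\mathsf{min}} = 16 \ln(1/\delta) \sqrt{t \kappa \ln(4/\delta)}/\varepsilon$ with $\kappa = (1+\alpha)/(1-\alpha)$ is calibrated so that, after absorbing the $(1-\alpha)$ distortion of $\bPhi$ on the signal side and the $\sqrt{1+\alpha}$ amplification on the perturbation side, the template's threshold is met by $\mathbf{M}$ and by $\bE \bPhi/\sqrt{t}$. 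Combining with the $2\delta$ failure probability over $\bPhi$ yields the claimed $(\varepsilon, 2\delta)$-differential privacy of $\bY_c$.

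The main obstacle will be invoking the Gaussian-projection template cleanly, since the $t$ columns of $\bY_c$ must be analyzed jointly rather than column-by-column. The crucial observation is that although $t$ independent Gaussian vectors are used, they share the common covariance $\mathbf{M}\mathbf{M}^\sT$, so the privacy loss is that of a single multivariate Gaussian on $\R^{m \times t}$ with no $t$-fold advanced-composition penalty; the template therefore applies in a single shot. What remains is careful bookkeeping of the $\sqrt{1 \pm \alpha}$ factors from $\bPhi$ to verify that the chosen $\sigma_{\mathsf{min}}$ is precisely sufficient for the stated guarantee.
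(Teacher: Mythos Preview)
Your approach is essentially the paper's: view $\bY_c = (\widehat{\bA}\bPhi)\bOmega/\sqrt{t}$ as a Gaussian-projection mechanism applied to $\bC := \widehat{\bA}\bPhi$, control the minimum singular value of $\bC$ and the size of the perturbation $\bE\bPhi$ via the subspace-embedding property of $\bPhi$, and then invoke the Blocki~\emph{et al.}\ analysis. The paper in fact unpacks that analysis column by column---bounding the determinant ratio $\det(\bC\bC^\sT)/\det(\widetilde{\bC}\widetilde{\bC}^\sT)$ and the quadratic-form difference $\bx^\sT\bigl((\bC\bC^\sT)^{-1}-(\widetilde{\bC}\widetilde{\bC}^\sT)^{-1}\bigr)\bx$---precisely because the neighbouring relation here (an arbitrary rank-$1$ perturbation $\mathbf{u}\mathbf{v}^\sT$) is not the single-row perturbation of~\cite{BBDS12}. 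You should make explicit that $\bE\bPhi$ remains a rank-$1$ matrix $\widetilde{\mathbf{u}}\widetilde{\mathbf{v}}^\sT$ with $\|\widetilde{\mathbf{u}}\|_2\|\widetilde{\mathbf{v}}\|_2\leq\sqrt{1+\alpha}$, since the quadratic-form bound exploits that rank-$1$ structure and not merely the Frobenius norm.

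Your remark that the $t$ columns can be handled ``in a single shot with no $t$-fold advanced-composition penalty'' is the one genuine mis-step. The $t$ columns of $\bY_c$ are $t$ \emph{independent} samples from $\cN(\mathbf{0},\bC\bC^\sT/t)$, so the log-likelihood ratio of the joint output is the sum of the $t$ per-column ratios. The paper proves each column is $(\varepsilon_0,\delta_0)$-DP with $\varepsilon_0 = \varepsilon/\bigl(\sqrt{4t\ln(1/\delta)}\log(1/\delta)\bigr)$ and $\delta_0=\delta/2t$, and then applies advanced composition (\thmref{DRV10}) over the $t$ columns. The $\sqrt{t}$ factor in $\sigma_{\mathsf{min}}$ is there precisely to pay for this composition, not because viewing the output as one Gaussian on $\R^{m\times t}$ somehow avoids it. Citing the template as a black box would give the right final parameters, but your stated reason for why it applies without composition is incorrect.
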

\begin{proof}
Our proof uses an analysis of multivariate Gaussian distribution. The multivariate Gaussian distribution is a generalization of univariate Gaussian distribution. Let $\mu$ be an $N$-dimensional vector. An $N$-dimensional multivariate random variable, $\bx \sim \cN(\mathbf{\mu}, \bLambda)$, where $\bLambda = \E [(\bx - \mu)(\bx -\mu)^{\mathsf T}]$ is the $N \times N$ covariance matrix, has the probability density function given by 
$\PDF_\mathbf{X} (\bx) := \frac{e^{- \bx^{\mathsf T}    \bLambda^\dagger   \bx/2}}{\sqrt{(2 \pi)^{\mathsf{r}(\bSigma)} \det(\bLambda)}}   \label{eq:pdfmultivariate},$ where $\det(\cdot)$ denotes the determinant of the matrix. 
 If $\bLambda$ has a non-trivial kernel space, then the multivariate distribution is undefined. However, in this proof, all our covariance matrices have only trivial kernel.
Multivariate Gaussian distributions is invariant under affine transformation, 
i.e., if $\by = \bA\bx+\mathbf{b}$, where $\bA \in \R^{M \times N}$ is a rank-$M$ matrix and $\mathbf{b} \in \R^M$, then $\by \sim \cN (\bA\mu+\mathbf{b},\bA \bLambda \bA^{\mathsf T}   )$. 

Let $\bA - \bA' = \mathbf{E}=\mathbf{u} \mathbf{v}^{\mathsf T}$ and let $\widehat{\mathbf{v}} = \begin{pmatrix} \mathbf{v} & \mathbf{0}^m \end{pmatrix}$. Then $\widehat{\bA} - \widehat{\bA}' =  \mathbf{u}\widehat{\mathbf{v}}^{\mathsf T}$. Since $\bPhi^{\mathsf T}$ is sampled from $\cD_R$, using~\cite{CW13}, we have $\| \bPhi^{\mathsf T} \mathbf{W} \|_F^2 = (1+\alpha) \| \mathbf{W} \|_F^2$ for any matrix $\bW$ with probability $1-\delta$. Therefore, $\mathbf{u} \mathbf{v}^{\mathsf T}\bPhi =  (1+\alpha)^{1/2}\mathbf{u} \widetilde{\mathbf{v}}^{\mathsf T} = \widetilde{\mathbf{u}} \widetilde{\mathbf{v}}^{\mathsf T}$ for some unit vectors $\mathbf{u}$, $\widetilde{\mathbf{v}}$ and $ \widetilde{\mathbf{u}} = (1+\alpha)^{1/2}\mathbf{u}$. 
We now show that   $ \widehat{\bA}\bPhi \bOmega_1$ preserves privacy. 
We prove that each row of the published matrix preserves $(\varepsilon_0, \delta_0)$-differential privacy for some appropriate $\varepsilon_0,\delta_0$, and then invoke~\thmref{DRV10} to prove that the published matrix preserves $(\varepsilon,\delta)$-differential privacy. 

It may seem that the privacy of $\bY_c$ follows from the result of Blocki {\it et al.}~\cite{BBDS12}, but this is not the case because of the following reasons.
\begin{enumerate}
	\item The definition of neighboring matrices considered in this paper is different from that of Blocki {\it et al.}~\cite{BBDS12}. To recall, Blocki {\it et al.}~\cite{BBDS12} considered two matrices neighboring if they differ in at most one row by a unit norm. In our case, we consider two matrices are neighboring if they have the form $\mathbf{u} \mathbf{v}^{\mathsf T}$ for unit vectors $\mathbf{u},\mathbf{v}$. 
	\item We multiply the Gaussian matrix to a random projection of $\widehat{\bA}$ and not to $\bA$ as in the case of Blocki {\it et al.}~\cite{BBDS12}, i.e., to $\widehat{\bA} \bPhi$ and not to $\widehat{\bA}$. 
\end{enumerate}

We first give a brief overview of how to deal with these issues. The first issue is resolved by analyzing $(\widehat{\bA} - \widehat{\bA}')\bPhi.$ We observe that this expression can be represented in the form of $\widetilde{\mathbf{u}} \widetilde{\mathbf{v}}^{\mathsf T}$, where $ \widetilde{\mathbf{u}} = (1+\alpha)^{1/2}\mathbf{u}$  for some  $\|\mathbf{u}\|_2=1$, $\|\widetilde{\mathbf{v}}\|_2=1$. 
The second issue can be resolved by observing that $\bPhi$ satisfies $(\alpha,\delta)$-{\scshape JLP} because of the choice of $t$. Since the rank of $\widehat{\bA}$ and $\widehat{\bA} \bPhi$ are the same, the singular values of $\widehat{\bA} \bPhi$ are within a multiplicative factor of $(1 \pm \alpha)^{1/2}$ of the singular values of $\bPhi$ with probability $1-\delta$  due to Sarlos~\cite{Sarlos06}. Therefore, our proof goes through if we scale the singular values of $\widehat{\bA}$ appropriately.
\end{proof}

In practical scenarios, $k \ll \max \set{m,n}$ and $\alpha$ is a small constant. This implies that $(1-\alpha)^{-1} \approx (1+\alpha)$. If we scale the value of $\alpha$ appropriately, then we have the following corollary:
\begin{corollary} \label{cor:space}
	Under the assumptions of \thmref{space} and $0 < \alpha <1$  a small constant, given an $m \times n$ matrix $\bA$ in the turnstile update model,  the algorithm {\scshape Spectral-LRF}, presented in~\figref{private}, is $(\varepsilon,\delta)$ differentially private  under $\priv_2$  and outputs  a $k$-rank factorization $\bU_k, \bSigma_k, \bV_k^{\mathsf T}$ (with $\mathbf{M}_k=\bU_k \bSigma_k \bV_k^{\mathsf T}$, such that, with probability $9/10$ over the coin tosses of {\scshape Low-Space-LRF},
	 \begin{align*} \|  \begin{pmatrix} \bA & \mathbf{0} \end{pmatrix} &- \mathbf{M}_k \|_2 \leq {(1 +\alpha)} \Delta_k(\bA) 
	 + O \paren{ \spacecor\sqrt{\ln(1/\delta)}} .\end{align*} 
\end{corollary}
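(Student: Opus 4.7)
The privacy guarantee is immediate: by part~\ref{spaceprivacy} of \thmref{space}, {\scshape Low-Space-LRF} is $(\varepsilon,\delta)$-differentially private under $\priv_2$, regardless of the value of the accuracy parameter $\alpha$. So no additional work is required there.

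For the correctness bound the plan is to invoke the full guarantee of \thmref{space} and then reparametrize. That theorem yields, with probability at least $9/10$,
\begin{align*}
\| \mathbf{M}_k - \begin{pmatrix} \bA & \mathbf{0} \end{pmatrix} \|_2 \leq \frac{(1+\alpha)^2}{(1-\alpha)^4}\, \Delta_k(\bA) + O\paren{ \spaceadditive \sqrt{\ln(1/\delta)} }.
\end{align*}
To turn the multiplicative factor $(1+\alpha)^2/(1-\alpha)^4$ into $(1+\alpha)$, I plan to use the elementary estimate $(1-\alpha)^{-1} \leq 1 + 2\alpha$, valid for all sufficiently small $\alpha$, which yields $(1+\alpha)^2(1-\alpha)^{-4} \leq (1+\alpha)^2(1+2\alpha)^4 \leq 1 + c\alpha$ for some absolute constant $c$ (say $c=20$). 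Re-running the algorithm with parameter $\alpha' := \alpha/c$ in place of $\alpha$ then produces a multiplicative factor at most $1+\alpha$, which is the desired form.

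For the additive term, under the small-constant regime for $\alpha$ (and hence for $\alpha'$), the factors $(1-\alpha')^{-3}$ and $\alpha'^{-1}$ only inflate absolute constants, and the $k\alpha^{-1}\ln(1/\delta)$ summand inside $\spaceadditive$ is absorbed into the $O(\cdot)$ once $k \ll \max\{m,n\}$, which is the standard assumption made throughout the paper (exactly as in the derivation of \corref{spectral} from \thmref{naive}). What remains is $O((\sqrt m + \sqrt n)/\varepsilon)$, times $\sqrt{\ln(1/\delta)}$, which is precisely $O(\spacecor \sqrt{\ln(1/\delta)})$. I must also check that the space bound is unharmed by the reparametrization: since $c$ is an absolute constant, replacing $\alpha$ with $\alpha/c$ only changes $\eta = \max\{k^2,1/\alpha\}$ and the $\log(k/\alpha)$ factor by multiplicative constants, both of which are absorbed into the big-$O$ notation.

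The only mildly non-trivial step is bookkeeping: verifying that the constant $c$ obtained from $(1+\alpha)^2(1-\alpha)^{-4} \leq 1+c\alpha$ is genuinely absolute and does not hide any dependence on $k,m,n$ or $\alpha$. This is a direct calculation by expanding the two factors and bounding each cross-term by $\alpha$ times a constant for $\alpha \leq 1/2$, and I do not anticipate any real obstacle beyond careful arithmetic. Once this is done, substituting $\alpha \mapsto \alpha/c$ into the conclusion of \thmref{space} gives exactly the statement of \corref{space}.
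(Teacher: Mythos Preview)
Your proposal is correct and follows essentially the same approach as the paper: the paper's entire justification is the sentence ``$k \ll \max\{m,n\}$ and $\alpha$ is a small constant, so $(1-\alpha)^{-1} \approx (1+\alpha)$; if we scale the value of $\alpha$ appropriately, then we have the following corollary,'' and you have simply spelled out that rescaling argument with more care. Your additional bookkeeping (bounding $(1+\alpha)^2(1-\alpha)^{-4}\le 1+c\alpha$, checking the space bound absorbs the constant $c$) is exactly what the paper leaves implicit.
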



\section{A Generic Transformation for Continual Release}
Our algorithms for the continual release use the binary tree mechanism~\cite{CSS,DNPR10} to store the sketches of matrix generated at various time epochs. We present the algorithm in~\figref{continual} for the sake of completion. 
 We give an intuition of the algorithm by visualizing  the data-structure that stores all the sketches in the form of a binary tree. Every leaf node $\tau$  stores the sketches of $\bA_\tau$, the root note stores the sketch of the entire matrix streamed in $[0,T]$, and every other node $\mathsf{n}$ stores the sketch corresponding to the updates in  a time range represented by the leaves of the subtree rooted at $\mathsf{n}$, i.e., $\h{\bY}_i$ and $\h{\bZ}_i$ stores sketches involving $2^i$ updates to $\bA$. If a query is to compute the low-rank factorization of the matrix from a particular time range $[1,\tau]$, we find the nodes that uniquely cover the time range $[1,\tau]$. We then use the value of $\bY(\tau)$ and $\bZ(\tau)$ formed using those nodes to compute the low-rank factorization. From the binary tree construction, every time epoch appears in exactly $O(\log T)$ nodes (from the leaf to the root node). Moreover, every range $[1,\tau]$ appears in at most $O(\log T)$  nodes of the tree (including leaves and root node). 
A straightforward application of the analysis of Chan {\it et al.}~\cite{CSS} to the proof of~\corref{spectral} gives us the following result.
\begin{figure} [t]
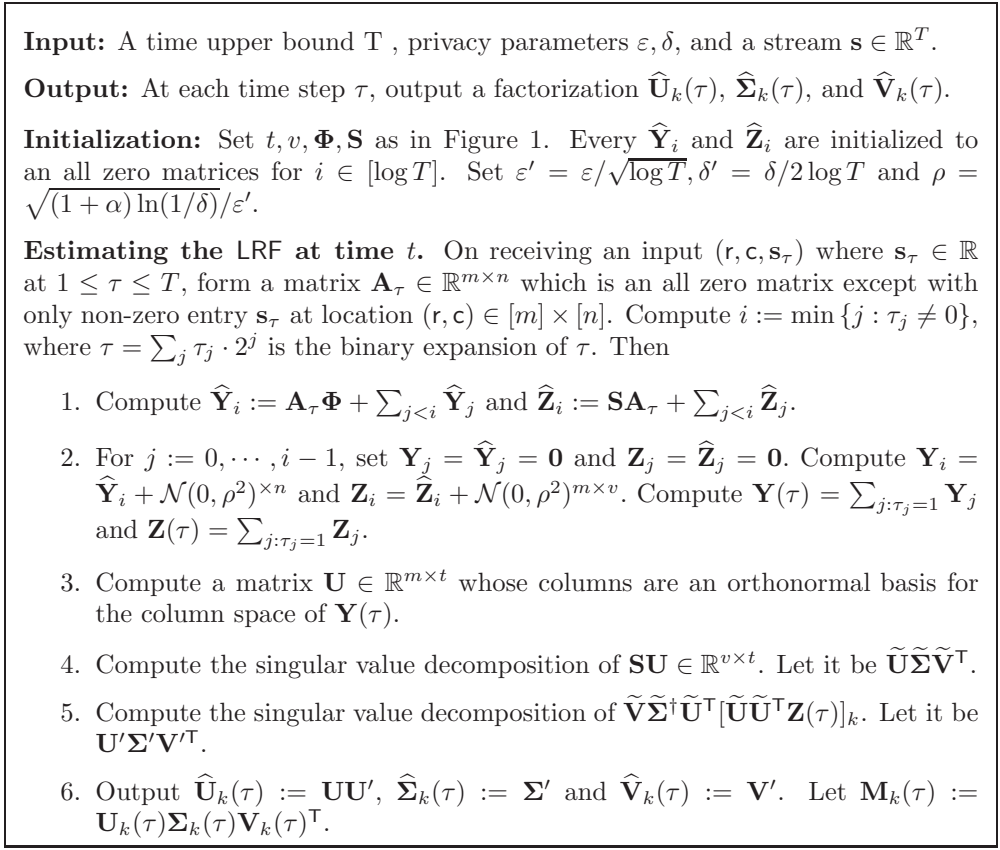

\begin{center} 
\fbox
{
\begin{minipage}[l]{5in}
\small{
\medskip \noindent \textbf{Input:} A time upper bound T , privacy parameters $\varepsilon,\delta$, and a stream $\mathbf{s} \in \R^T$. 

\medskip \noindent \textbf{Output:} At each time step $\tau$, output a factorization $\h{\bU}_k(\tau)$, $\h{\bSigma}_k(\tau)$, and $\h{\bV}_k(\tau)$.

\medskip \noindent \textbf{Initialization:} Set $t,v, \bPhi, \bS$ as in~\figref{private}.  Every $\h{\bY}_i$ and $\h{\bZ}_i$ are initialized to an all zero matrices for  $i \in [\log T]$. Set $\varepsilon' = \varepsilon/\sqrt{\log T}, \delta'=\delta/2 \log T$ and $\rho={\sqrt{(1+\alpha) \ln(1/\delta)}}/{\varepsilon'}$.

\medskip \noindent \textbf{Estimating the $\lrf$ at time $t$.}  
On receiving an input $(\mathsf{r},\mathsf{c},\mathbf{s}_\tau)$ where $\mathbf{s}_\tau  \in \R$ at $1\leq \tau \leq T$, form a matrix $\bA_\tau \in \R^{m \times n}$ which is an all zero matrix except with only non-zero entry $ \mathbf{s}_\tau$ at location $(\mathsf{r},\mathsf{c}) \in [m] \times [n]$. Compute $i := \min \set{j : \tau_j \neq 0}$, where  $\tau =\sum_j \tau_j \cdot 2^j$  is the binary expansion of $\tau$. Then 
\begin{enumerate}
	\item Compute $\h{\bY}_i := \bA_\tau \bPhi + \sum_{j <i} \h{\bY}_j$ and $\h{\bZ}_i := \bS \bA_\tau  + \sum_{j <i} \h{\bZ}_j.$ 
	\item For $j:= 0, \cdots, i-1$, set $ \bY_j = \h{\bY}_j  = \mathbf{0}$ and $\bZ_j= \h{\bZ}_j= \mathbf{0}.$ Compute $ {\bY}_i = \h{\bY}_i + \cN(0,\rho^2)^{ \times n}$ and ${\bZ}_i =\h{ \bZ}_i + \cN(0,\rho^2)^{m \times v}.$ \label{item:update}
 Compute $  {\bY}(\tau) = \sum_{j: \tau_j=1} {\bY}_j $ and $ {\bZ}(\tau) = \sum_{j: \tau_j=1} {\bZ}_j .$ \label{item:estimate}
		\item Compute a matrix $\bU \in \R^{m \times t}$ whose columns are an orthonormal basis  for the column space of $\bY(\tau)$.
	\item Compute the singular value decomposition of $\mathbf{S} \bU \in \R^{v \times t}$. Let it be $\widetilde{\bU} \widetilde{\bSigma} \widetilde{\bV}^{\mathsf T}.$ 
	\item Compute the singular value decomposition of $ \widetilde{\bV} \widetilde{\bSigma}^{\dagger} \widetilde{\bU}^{\mathsf T} [\widetilde{\bU} \widetilde{\bU}^{\mathsf T}\bZ(\tau)]_k$. 
	Let it be $\bU' \bSigma' \bV'^{\mathsf T}$. 
	\item Output $\h{\bU}_k(\tau):=\bU \bU'$, $\h{\bSigma}_k(\tau):= \bSigma'$ and $\h{\bV}_k(\tau):=\bV'$. Let $\mathbf{M}_k(\tau) := \bU_k(\tau)  \bSigma_k(\tau) \bV_k(\tau)^{\mathsf T}.$
\end{enumerate}	
}\end{minipage}
} \caption{Differentially private Low-rank Factorization Under Continual Release ({\scshape Continual-LRF})} \label{fig:continual}
\end{center}
\end{figure}

\begin{theorem} \label{thm:continual}
 	Let  $m, n \in \N$ and $\varepsilon,\delta$ be the input parameters.  Let $0 < \alpha <1$ be an arbitrary constant. Let $\bA(t) \in \R^{m \times n}$ be the matrix formed until time $t$. Then  the algorithm {\scshape Continual-LRF}, presented in~\figref{continual}, outputs  an $(\varepsilon,\delta)$-differentially private $k$-rank approximation $\mathbf{M}_k(t)$ under continual release and under $\priv_1$, such that, with probability at least $9/10$ over the coin tosses of {\scshape Continual-LRF},
	 \begin{align*} \| \bA(t) &- \mathbf{M}_k (t)\|_2 \leq \frac{(1+\alpha)}{(1-\alpha)^2} \Delta_k( \bA (t))  
	 + O \paren{ \naivecontinual \sqrt{\log (1/\delta)}  }  .\end{align*} 
\end{theorem}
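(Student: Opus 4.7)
The plan is to reduce the continual release result to the one-shot guarantee of Theorem~\ref{thm:naive} by exploiting the linearity of the sketches $\bA \bPhi$ and $\bS \bA$, combined with the standard binary tree mechanism of Chan {\it et al.} and Dwork {\it et al.} First, I would observe that in \figref{continual}, every node $i$ of the implicit binary tree stores the sketch of the (unique) dyadic interval of updates it represents, and that retrieving the sketches $\bY(\tau)$ and $\bZ(\tau)$ for a query time $\tau$ amounts to summing the noisy sketches at the $O(\log T)$ nodes whose dyadic intervals partition $[1,\tau]$ (determined by the binary expansion of $\tau$). Because both $\bA \mapsto \bA \bPhi$ and $\bA \mapsto \bS \bA$ are linear, the resulting vectors are exactly $\bA(\tau) \bPhi$ and $\bS \bA(\tau)$ perturbed by the sum of the Gaussian noise matrices sampled at those $O(\log T)$ nodes.

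For privacy, I would note that a single turnstile update $(i_\tau, j_\tau, s_\tau)$ propagates into at most $\lceil \log T \rceil$ nodes of the tree (the leaf and its ancestors). Each node independently publishes two perturbed sketches with noise scale $\rho$ instantiated with parameters $(\varepsilon', \delta') = (\varepsilon/\sqrt{\log T},\, \delta/(2\log T))$, so Lemma~\ref{lem:private} applied to each node gives $(\varepsilon', 2\delta')$-differential privacy per node. Since a pair of neighboring streams (under $\priv_1$) can differ in at most one update and that update touches at most $\log T$ nodes, advanced composition of Dwork, Rothblum, and Vadhan yields overall $(\varepsilon, \delta)$-differential privacy. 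Post-processing (Lemma~\ref{lem:post}) then preserves privacy under the derivation of $\mathbf{M}_k(\tau)$ from $\bY(\tau), \bZ(\tau)$.

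For correctness, I would plug the noisy retrieved sketches into the analysis of \thmref{naive} verbatim, treating $\bY(\tau) = \bA(\tau)\bPhi + \bN_1(\tau)$ and $\bZ(\tau) = \bS \bA(\tau) + \bN_2(\tau)$, where $\bN_1(\tau)$ and $\bN_2(\tau)$ are sums of at most $\log T$ i.i.d.\ Gaussian matrices of variance $\rho^2$. By Gaussian rotational invariance, each entry of $\bN_1(\tau)$ (resp.\ $\bN_2(\tau)$) is distributed as $\cN(0, (\log T)\rho^2)$, so the effective noise scale becomes $\rho \sqrt{\log T} = \sqrt{(1+\alpha)\log(1/\delta)}\,\log T/\varepsilon$. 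Substituting this effective scale into Lemma~\ref{lem:naiveN_1} (whose bounds depend only on the per-entry variance) produces exactly the $\log T$ inflation of the additive error claimed in the statement, while the multiplicative factor $(1+\alpha)/(1-\alpha)^2$ is inherited unchanged from Lemma~\ref{lem:complete} because that bound is insensitive to the noise scale.

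The main obstacle is the privacy accounting step: one needs to verify that the neighboring-streams analogue of $\priv_1$ interacts correctly with the tree---specifically that a single Frobenius-norm-$1$ perturbation to the underlying matrix at any time epoch induces a perturbation of Frobenius norm at most $1$ at each of the $O(\log T)$ affected nodes (so that Lemma~\ref{lem:private} applies at each node with the stated parameters) and that the composition bound of Dwork, Rothblum, and Vadhan can be invoked on these $O(\log T)$ dependent releases. Both facts are standard once one observes that the tree nodes aggregate disjoint sets of updates and thus any single update contributes identically to each of its ancestors, but they should be spelled out carefully because this is where the $\sqrt{\log T}$ scaling in $\varepsilon'$ is justified and hence where the overall $\log T$ factor in the additive error ultimately comes from.
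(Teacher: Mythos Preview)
Your proposal is correct and follows precisely the approach the paper takes. The paper itself does not spell out a proof of this theorem at all; it merely states that the result is ``a straightforward application of the analysis of Chan {\it et al.}~\cite{CSS} to the proof of~\corref{spectral},'' and your write-up is exactly a careful instantiation of that sentence: linearity of the sketches, the binary-tree decomposition into $O(\log T)$ dyadic intervals, per-node Gaussian mechanism with rescaled parameters $(\varepsilon',\delta')$, advanced composition over the $O(\log T)$ affected nodes, and then feeding the aggregated Gaussian noise (whose per-entry standard deviation picks up a $\sqrt{\log T}$ factor, on top of the $\sqrt{\log T}$ already built into $\rho$ via $\varepsilon'$) back into \lemref{complete} and \lemref{naiveN_1}.
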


We can also convert the algorithm {\scshape Low-Space-LRF} to one that outputs a low-rank factorization under continual release by using less space than {\scshape Continual-LRF} and secure under $\priv_2$. We make the following changes to {\scshape Continual-LRF}: 
(i) Initialize $(\h{\bY}_c)_i, (\h{\bY}_r)_i,$ and $(\h{\bZ})_i$ as we initialize $\bY_c, \h{\bY}_r$ and $\h{\bZ}$ in~\figref{space}  for all $i \in [\log T]$,
(ii) we maintain $ (\bY_c)_j, (\h{\bY_c})_j $, $ (\bY_r)_j, (\h{\bY_r})_j$,  $\bZ_j,$ and $\h{\bZ}_j.$ 

\begin{theorem} \label{thm:spacecontinual}
 	Let  $m, n \in \N$ and $\varepsilon,\delta$ be the input parameters.  Let $k$ be the desired rank of the factorization.  Let $0 < \alpha <1$ be an arbitrary constant. Let $\bA(t) \in \R^{m \times n}$ be the matrix formed until time $t$. Then  there is an efficient algorithm that outputs  an $(\varepsilon,\delta)$-differentially private $k$-rank approximation $\mathbf{M}_k(t)$ under continual release and under $\priv_2$, such that, with probability at least $9/10$,
	 \begin{align*} \| \bA(t) - \mathbf{M}_k (t)\|_2 & \leq \frac{(1+\alpha)}{(1-\alpha)^2} \Delta_k( \bA (t))  
	 + O \paren{ \spacecontinual {\sqrt{\log (1/\delta)} }  } .\end{align*} 
\end{theorem}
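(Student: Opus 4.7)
The plan is to mirror the proof of Theorem~\ref{thm:continual} but working with the three linear sketches $\bY_c$, $\bY_r$, and $\bZ$ of {\scshape Low-Space-LRF} instead of the two sketches of {\scshape Spectral-LRF}. Crucially, the augmented matrix $\h{\bA} = \begin{pmatrix} \bA & \sigma_{\mathsf{min}} \I_m \end{pmatrix}$ contains a static component $\sigma_{\mathsf{min}} \I_m$ that does not depend on the stream, so the input perturbation is handled once at initialization and the three sketches of the stream-dependent part $\begin{pmatrix} \bA & \mathbf{0} \end{pmatrix}$ are linear. Linearity is what lets the binary tree mechanism go through.

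First I would formalize the modified algorithm. For each level $i \in [\log T]$ of a binary tree over time steps, maintain partial sketches $(\h{\bY}_c)_i, (\h{\bY}_r)_i, \h{\bZ}_i$ initialized as in~\figref{space} (each incorporating its own copy of the static initialization of~\figref{space}, summed at query time in the natural way). On receiving update $(\mathsf{r}, \mathsf{c}, \mathbf{s}_\tau)$ at time $\tau$, compute the smallest $i$ with $\tau_i \neq 0$ in the binary expansion of $\tau$, set $(\h{\bY}_c)_i := \bA_\tau \h{\bPhi} + \sum_{j<i} (\h{\bY}_c)_j$ and analogously for the other two families, reset the levels below $i$, and release the noisy versions $(\bY_c)_i, (\bY_r)_i, \bZ_i$ obtained by adding fresh Gaussian matrices scaled by $\rho_1' = \rho_1 \sqrt{\log T}$ and $\rho_2' = \rho_2 \sqrt{\log T}$. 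To answer at time $t$, sum the nodes covering $[1,t]$ and run steps 3--6 of~\figref{space}.

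For privacy, each update affects exactly $O(\log T)$ nodes of the binary tree. By Claim~\ref{claim:Y_r} and Claim~\ref{claim:Y_c}, each node preserves $(\varepsilon', \delta')$-differential privacy for $\varepsilon' = \varepsilon/\sqrt{\log T}$ and $\delta' = \delta/(2 \log T)$. Advanced composition (\cite{DRV10}) over the $O(\log T)$ affected nodes yields overall $(\varepsilon, \delta)$-differential privacy under $\priv_2$. For correctness at time $t$, linearity gives
\begin{align*}
\bY_c(t) &= \h{\bA}(t) \h{\bPhi}, \\
\bY_r(t) &= \bPsi\,\h{\bA}(t) + \sum_{j : \tau_j = 1} (\bN_1)_j, \\
\bZ(t) &= \bS\,\h{\bA}(t)\,\bT + \sum_{j : \tau_j = 1} (\bN_2)_j,
\end{align*}
and the sums of independent Gaussian matrices are Gaussian matrices with variance $O(\log T)$ times the per-node variance. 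Feeding these into the proof of~\thmref{space} (in place of $\bY_c, \bY_r, \bZ$) preserves~\lemref{first},~\lemref{space}, and~\lemref{third} verbatim, while~\lemref{spaceN_1} acquires an extra $\sqrt{\log T}$ factor from the noise aggregation on top of the $\sqrt{\log T}$ factor absorbed into $\rho_1', \rho_2'$, giving the stated $\log T$ amplification of the additive error.

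The main obstacle is checking that the binary tree amplification of noise, together with the rescaling $\varepsilon \to \varepsilon/\sqrt{\log T}$, still lets the privacy argument for $\bY_c$ (via Claim~\ref{claim:Y_c}, which relies on a multivariate Gaussian argument sensitive to the singular values of $\h{\bA}\bPhi$) go through at each node. This is resolved by also rescaling $\sigma_{\mathsf{min}}$ in the initialization by $\sqrt{\log T}$, which preserves the condition used in Claim~\ref{claim:Y_c} at every node while contributing only an additional $\sqrt{\log T}$ factor in the final bound, consistent with the claimed additive error $\widetilde O(\spacecontinual)$.
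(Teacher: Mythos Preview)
Your proposal is correct and follows the same approach as the paper, which itself gives no proof beyond a two-line description of the algorithmic modification (maintain the three sketch families $(\h{\bY}_c)_i,(\h{\bY}_r)_i,(\h{\bZ})_i$ and initialize each as in \figref{space}); your write-up is in fact more detailed than the paper's. One small wording slip: you say the noisy versions $(\bY_c)_i,(\bY_r)_i,\bZ_i$ are obtained ``by adding fresh Gaussian matrices scaled by $\rho_1',\rho_2'$,'' but $\bY_c$ receives no additive noise in {\scshape Low-Space-LRF}---its privacy comes from the multiplicative Gaussian $\bOmega$ and the singular-value floor $\sigma_{\mathsf{min}}$. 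You implicitly correct this two paragraphs later (writing $\bY_c(t)=\h{\bA}(t)\h{\bPhi}$ with no noise sum, and handling its privacy via the $\sigma_{\mathsf{min}}$ rescaling), so the argument is sound, but the earlier sentence should be tightened to say the fresh additive Gaussians are applied only to the $\bY_r$- and $\bZ$-families.
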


 
\bibliographystyle{plain}
{ \bibliography{spectral}}

\begin{appendix}

\section{Proof of~\claimref{Y_c}} 
We now return to the proof. Denote by $\widehat{\bA}=  \begin{pmatrix} \bA & \sigma_\mathsf{min} \I_m \end{pmatrix}$ and by $\widehat{\bA}'= \begin{pmatrix} {\bA}' &  \sigma_\mathsf{min} \I_m \end{pmatrix}$, where $\bA - \bA' = \mathbf{u} \mathbf{v}^{\mathsf T}$. Then $\widehat{\bA}' - \widehat{\bA} = \begin{pmatrix} \mathbf{u} \mathbf{v}^{\mathsf T} & \mathbf{0} \end{pmatrix}$.  Let $\bU_\bC \bSigma_\bC \bV_\bC^{\mathsf{T}}$ be the $\mathsf{SVD}$ of $\bC= \widehat{\bA} \bPhi$ and $\widetilde{\bU}_\bC \widetilde{\bSigma}_\bC \widetilde{\bV}_\bC^{\mathsf{T}}$ be the $\mathsf{SVD}$ of $\widetilde{\bC} =  \widehat{\bA}' \bPhi$. From above discussion, we know that if $\bA - \bA' = \mathbf{u} \mathbf{v}^{\mathsf T}$, then $\bC - \widetilde{\bC} = (1+\alpha)^{1/2}\widetilde{\mathbf{u}} \widetilde{\mathbf{v}}^{\mathsf T}$ for some unit vectors $ \widetilde{\mathbf{u}} $ and $ \widetilde{\mathbf{v}}.$ 
 For notational brevity, in what follows we write $\mathbf{u}$ for $ \widetilde{\mathbf{u}} $ and $\mathbf{v}$ for $ \widetilde{\mathbf{v}}.$


Note that both $\bC$ and $\widetilde{\bC}$ is a full rank matrix because of the construction; therefore $\bC \bC^{\mathsf T}$ is a full dimensional $m \times m$ matrix. This implies that the affine transformation of the multi-variate Gaussian is well-defined (both the covariance $(\bC \bC^{\mathsf T})^{-1}$ has full rank and $\det(\bC \bC^{\mathsf T})$ is non-zero). That is, the {\scshape PDF} of the  distributions of the  rows, corresponding to $\bC$ and $\widetilde{\bC}$, is just a linear transformation of $\cN(0,\I_{m \times m})$. Let $\by \sim \cN(0,1)^t$.  
\begin{align*}
	\PDF_{\bC Y} (\bx) & = \frac{1}{\sqrt{(2\pi)^d \det(\bC  \bC^{\mathsf T})}} e^{(- \frac{1}{2} \bx (\bC  \bC^{\mathsf T})^{-1} \bx^{\mathsf T})} \\
	\PDF_{\widetilde{\bC} Y} (\bx) & = \frac{1}{\sqrt{(2\pi)^d \det(\widetilde{\bC}  \widetilde{\bC}^{\mathsf T})}} e^{(- \frac{1}{2} \bx (\widetilde{\bC}  \widetilde{\bC}^{\mathsf T})^{-1} \bx^{\mathsf T}) }
\end{align*}

 Let  $\varepsilon_0 = \frac{\varepsilon}{\sqrt{4 t \ln (1/\delta)} \log(1/\delta)}$ and  $\delta_0 = {\delta}/{2t},$ 
We prove that every row of the published matrix is $(\varepsilon_0,\delta_0)$ differentially private; the theorem follows from Theorem~\ref{thm:DRV10}.
Let $\bx$ be sampled either from $\cN(0,\bC  \bC^{\mathsf T})$ or  $\cN(0,\widetilde{\bC} \widetilde{\bC}^{\mathsf T})$.
It is straightforward to see that the combination of Claim~\ref{claim1} and Claim~\ref{claim2} proves differential privacy for a row of published matrix. The lemma then follows by an application of~\thmref{DRV10} and our choice of $\varepsilon_0$ and $\delta_0$.

\begin{claim} \label{claim1}
 Let $\bC$ and $\varepsilon_0$ be as defined above. Then 
\begin{align*} e^{- \varepsilon_0}  \leq  \sqrt{\frac{\det(\bC  \bC^{\mathsf T})}{\det(\widetilde{\bC}  \widetilde{\bC}^{\mathsf T})}} \leq  e^{\varepsilon_0}. 
\end{align*}
\end{claim}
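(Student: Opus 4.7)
The approach is to apply the matrix determinant lemma after writing $\widetilde{\bC}\widetilde{\bC}^{\mathsf T}$ as a rank-$2$ perturbation of $\bC\bC^{\mathsf T}$, and then bound the resulting $2 \times 2$ determinant using the lower bound on $\sigma_{\min}(\bC)$ guaranteed by the $\sigma_{\min}\I_m$ padding in $\widehat{\bA}$.

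First, since $\widetilde{\bC} - \bC = -(1+\alpha)^{1/2}\mathbf{u}\mathbf{v}^{\mathsf T}$ with $\|\mathbf{u}\|_2 = \|\mathbf{v}\|_2 = 1$, a direct expansion gives
\[
\widetilde{\bC}\widetilde{\bC}^{\mathsf T} - \bC\bC^{\mathsf T} = \begin{pmatrix} \bC\mathbf{v} & \mathbf{u}\end{pmatrix} \mathbf{K} \begin{pmatrix} \bC\mathbf{v} & \mathbf{u}\end{pmatrix}^{\mathsf T}, \qquad \mathbf{K} := \begin{pmatrix} 0 & -(1+\alpha)^{1/2} \\ -(1+\alpha)^{1/2} & 1+\alpha \end{pmatrix},
\]
so the perturbation of $\bC\bC^{\mathsf T}$ has rank at most $2$. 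Applying the generalized matrix determinant lemma yields
\[
\frac{\det(\widetilde{\bC}\widetilde{\bC}^{\mathsf T})}{\det(\bC\bC^{\mathsf T})} = \det\!\big(\I_2 + \mathbf{K} \bH\big), \qquad \bH := \begin{pmatrix} \bC\mathbf{v} & \mathbf{u}\end{pmatrix}^{\mathsf T}(\bC\bC^{\mathsf T})^{-1}\begin{pmatrix} \bC\mathbf{v} & \mathbf{u}\end{pmatrix},
\]
reducing the problem to controlling a single $2 \times 2$ determinant.

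Next I will bound the three independent entries of $\bH = \left(\begin{smallmatrix} a & b \\ b & c \end{smallmatrix}\right)$. The entry $a = \mathbf{v}^{\mathsf T}\bC^{\mathsf T}(\bC\bC^{\mathsf T})^{-1}\bC\mathbf{v}$ is the squared norm of the orthogonal projection of $\mathbf{v}$ onto the row space of $\bC$, hence $a \le \|\mathbf{v}\|_2^2 = 1$. The entry $c = \mathbf{u}^{\mathsf T}(\bC\bC^{\mathsf T})^{-1}\mathbf{u}$ is at most $1/\sigma_{\min}(\bC)^2$, while Cauchy--Schwarz gives $|b| \le \sqrt{a c} \le \sqrt{c}$. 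The crucial lower bound on $\sigma_{\min}(\bC)$ comes from the construction $\widehat{\bA} = \begin{pmatrix} \bA & \sigma_{\min}\I_m \end{pmatrix}$, which guarantees $\sigma_{\min}(\widehat{\bA}) \ge \sigma_{\min}$; since $\bPhi \sim \cD_R$ acts as a subspace embedding on the $m$-dimensional row space of $\widehat{\bA}$, the result of Sarlos~\cite{Sarlos06} preserves every singular value of $\widehat{\bA}$ up to a factor of $(1-\alpha)^{1/2}$, so $\sigma_{\min}(\bC)^2 \ge (1-\alpha)\sigma_{\min}^2$ and $c \le 1/((1-\alpha)\sigma_{\min}^2)$.

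Plugging these into the explicit expansion of the $2 \times 2$ determinant gives $\det(\I_2 + \mathbf{K}\bH) = 1 + \Delta$ with
\[
\Delta = -2(1+\alpha)^{1/2} b + (1+\alpha) b^2 + (1+\alpha) c (1-a),
\]
after the cross terms in $bc$ cancel. Hence $|\Delta| \le 2(1+\alpha)^{1/2}\sqrt{c} + 2(1+\alpha) c \le 2\sqrt{\kappa}/\sigma_{\min} + 2\kappa/\sigma_{\min}^2$, with $\kappa = (1+\alpha)/(1-\alpha)$. Substituting the prescribed values $\sigma_{\min} = 16\ln(1/\delta)\sqrt{t \kappa \ln(4/\delta)}/\varepsilon$ and $\varepsilon_0 = \varepsilon/(\sqrt{4t\ln(1/\delta)}\log(1/\delta))$, the leading term simplifies to $\sqrt{\kappa}/\sigma_{\min} = \varepsilon/(16 \ln(1/\delta)\sqrt{t\ln(4/\delta)})$, which is a small constant fraction of $\varepsilon_0$; the $1/\sigma_{\min}^2$ term is then of lower order in $\varepsilon_0$. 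One thus obtains $|\Delta| \le 2\varepsilon_0 - O(\varepsilon_0^2)$, which is equivalent to $e^{-2\varepsilon_0} \le 1 + \Delta \le e^{2\varepsilon_0}$, and taking the reciprocal square root gives the claimed bound.

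The main obstacle will be the final bookkeeping of constants, i.e.\ verifying that the specific choice of $\sigma_{\min}$ in~\thmref{space} is tight enough to drive $|\Delta|$ inside the narrow window required for the exponential bound, while ensuring that the $\kappa = (1+\alpha)/(1-\alpha)$ factor already baked into $\sigma_{\min}$ cancels exactly the $(1-\alpha)^{-1}$ loss from the JLP-induced singular-value compression. The structural ingredients---the matrix determinant lemma, the projector and Cauchy--Schwarz bounds on $\bH$, and the singular-value preservation via subspace embedding---are all routine; the delicate piece is threading these constants so the exponent comes out exactly $\varepsilon_0$.
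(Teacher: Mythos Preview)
Your proof is correct, but it takes a genuinely different route from the paper. The paper argues via singular values: it writes $\det(\bC\bC^{\mathsf T})/\det(\widetilde{\bC}\widetilde{\bC}^{\mathsf T}) = \prod_i \sigma_i^2/\widetilde{\sigma}_i^2$, invokes Lidskii's theorem (\thmref{lidskii}) to bound $\sum_i (\widetilde{\sigma}_i - \sigma_i)$ by the single nonzero singular value $\sqrt{1+\alpha}$ of the rank-one perturbation, and then uses $\sigma_i \ge \sigma_{\min}(\bC) \ge (1-\alpha)^{1/2}\sigma_{\min}$ inside a $\prod(1+x_i) \le \exp(\sum x_i)$ estimate. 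Your argument instead exploits the rank-$2$ structure of $\widetilde{\bC}\widetilde{\bC}^{\mathsf T} - \bC\bC^{\mathsf T}$ directly through the matrix determinant lemma, reducing everything to a single explicit $2\times 2$ determinant whose entries you control by elementary projector and Cauchy--Schwarz bounds.

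Both arguments hinge on the same structural input, namely the singular-value preservation $\sigma_{\min}(\bC) \ge (1-\alpha)^{1/2}\sigma_{\min}$ from the subspace embedding, and the final constant chase is essentially identical. Your approach is arguably more self-contained (no Lidskii) and yields an exact closed form for the ratio before bounding, while the paper's approach is closer to the Blocki {\it et al.}~\cite{BBDS12} template and generalizes more readily if the perturbation had higher rank. One minor cosmetic point: your final line ``$|\Delta| \le 2\varepsilon_0 - O(\varepsilon_0^2)$'' is awkwardly phrased; what your computation actually gives is $|\Delta| \le \varepsilon_0/4 + O(\varepsilon_0^2)$, which comfortably implies $e^{-2\varepsilon_0} \le 1+\Delta \le e^{2\varepsilon_0}$ and hence the claim after taking the reciprocal square root.
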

\begin{proof}
The claim follows simply as in~\cite{BBDS12} after a slight modification. More concretely, we have $\det(\bC  \bC^{\mathsf T}) = \prod_i \sigma_i^2$, where $\sigma_1 \geq \cdots \geq \sigma_m \geq  \sigma_{\mathsf{min}}(\bC)$ are the singular values of $\bC$. 
 Let $\widetilde{\sigma}_1 \geq \cdots \geq \widetilde{\sigma}_m \geq  \sigma_{\mathsf{min}}(\widetilde{\bC})$ be its singular value  for $\widetilde{\bC}$. The matrix $\mathbf{E}$ has only one singular value $\sqrt{1+\alpha}$. This is because $ \mathbf{E}\mathbf{E}^{\mathsf T} = (1+\alpha) \mathbf{v} \mathbf{v}^{\mathsf T}$. To finish the proof of this claim, we use~\thmref{lidskii}. 

Since the singular values of $\bC - \widetilde{\bC}$ and $\widetilde{\bC} -\bC$ are the same,  Lidskii's theorem (\thmref{lidskii}) gives $\sum_i(\sigma_i - \widetilde{\sigma}_i) \leq \sqrt{1+\alpha}$. 
 Therefore, with probability $1-\delta$,
\begin{align*} 
\sqrt{\prod_{i: \widetilde{\sigma}_i \geq \sigma_i} \frac{\widetilde{\sigma}_i^2}{\sigma_i^2}} 
&= \prod_{i: \widetilde{\sigma}_i \geq \sigma_i} \paren{1 + \frac{\widetilde{\sigma}_i-\sigma_i}{\sigma_i} } 
\leq \exp \paren{\frac{\varepsilon \sum_i (\widetilde{\sigma}_i - \sigma_i)}{32\sqrt{(1+\alpha) t \log (2/\delta)} \log (t/\delta)} } 
\leq e^{\varepsilon_0/2}. \end{align*}

The first inequality holds because $\bPhi \sim \cD_R$ satisfies $(\alpha,\delta)$-{\scshape JLP}  due to the choice of $t$. Since $\bC$ and $\bA$ have same rank, this implies that all the singular values of $\bC$ are  within a $(1\pm \alpha)^{1/2}$ multiplicative factor of $\widehat{\bA}$ due to a result by Sarlos~\cite{Sarlos06}.  In other words, $\sigma_i \geq \sigma_{\mathsf{min}}(\bC)  \geq (1- \alpha)^{1/2} \sigma_\mathsf{min}$.
The case for all $i \in [m]$ when ${\widetilde{\sigma}_i \leq \sigma_i}$ follows similarly as the singular values of $\mathbf{E}$ and $-\mathbf{E}$ are the same. This completes the proof of Claim~\ref{claim1}.
\end{proof}

\begin{claim} \label{claim2}
Let $\bC, \varepsilon_0$, and $\delta_0$ be as defined earlier. Let $\by \sim \cN(0,1)^m$. If $\bx$ is sampled either from $\bC \by$ or $\widetilde{\bC} \by$, then we have
\begin{align*}
\p \sparen{ \left| \bx^{\mathsf T}  (\bC  \bC^{\mathsf T})^{-1} \bx - \bx^{\mathsf T}  (\widetilde{\bC}  \widetilde{\bC}^{\mathsf T})^{-1} \bx\right|   \leq \varepsilon_0} \geq 1 -\delta_0.\end{align*} 
\end{claim}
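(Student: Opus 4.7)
The plan is to adapt the Blocki--Blum--Datta--Sheffet (BBDS) multivariate Gaussian privacy argument to our setting. The advantage here is that the difference $\widetilde{\bC}-\bC$ is effectively rank one: writing $\mathbf{E}' := \widetilde{\bC}-\bC = \mathbf{u}'\mathbf{v}'^{\mathsf T}$ with $\|\mathbf{u}'\|_2^2 \leq 1+\alpha$ and $\|\mathbf{v}'\|_2 = 1$ (as established right before the claim), we get a rank-two symmetric update
\[
\widetilde{\bC}\widetilde{\bC}^{\mathsf T} - \bC\bC^{\mathsf T} = \bC\mathbf{v}'\mathbf{u}'^{\mathsf T} + \mathbf{u}'\mathbf{v}'^{\mathsf T}\bC^{\mathsf T} + (\mathbf{v}'^{\mathsf T}\mathbf{v}')\,\mathbf{u}'\mathbf{u}'^{\mathsf T} = \mathbf{U}\mathbf{D}\mathbf{U}^{\mathsf T},
\]
with $\mathbf{U} := [\mathbf{u}'\ \ \bC\mathbf{v}'] \in \R^{m\times 2}$ and an explicit $2\times 2$ matrix $\mathbf{D}$. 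Applying the Sherman--Morrison--Woodbury identity will give
\[
(\widetilde{\bC}\widetilde{\bC}^{\mathsf T})^{-1} = (\bC\bC^{\mathsf T})^{-1} - (\bC\bC^{\mathsf T})^{-1}\mathbf{U}\,\mathbf{M}^{-1}\,\mathbf{U}^{\mathsf T}(\bC\bC^{\mathsf T})^{-1},
\]
where $\mathbf{M} := \mathbf{D}^{-1} + \mathbf{U}^{\mathsf T}(\bC\bC^{\mathsf T})^{-1}\mathbf{U}$ is $2\times 2$. Substituting, the quadratic form difference collapses to a rank-two form
\[
\bx^{\mathsf T}\bigl[(\bC\bC^{\mathsf T})^{-1} - (\widetilde{\bC}\widetilde{\bC}^{\mathsf T})^{-1}\bigr]\bx = \mathbf{z}^{\mathsf T}\mathbf{M}^{-1}\mathbf{z}, \qquad \mathbf{z} := \mathbf{U}^{\mathsf T}(\bC\bC^{\mathsf T})^{-1}\bx \in \R^2.
\]

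Next I will control $\mathbf{z}$ entrywise. Since $\bx$ is a Gaussian vector (either $\bC\by$ or $\widetilde{\bC}\by$ with $\by \sim \cN(0,\I)$), each coordinate $\mathbf{z}_i = \mathbf{a}_i^{\mathsf T}(\bC\bC^{\mathsf T})^{-1}\bx$ is a univariate Gaussian whose variance is at most $\mathbf{a}_i^{\mathsf T}(\bC\bC^{\mathsf T})^{-1}\mathbf{a}_i$ up to a $(1+O(\alpha))$ factor coming from the $\bC$-vs-$\widetilde{\bC}$ case. For $\mathbf{a}_1 = \mathbf{u}'$, I use $\|\mathbf{u}'\|_2^2 \leq 1+\alpha$ together with the lower bound $\lambda_{\min}(\bC\bC^{\mathsf T}) \geq (1-\alpha)\sigma_{\mathsf{min}}^2$; the latter follows from $\widehat{\bA}$ having smallest singular value at least $\sigma_{\mathsf{min}}$ (due to its $\sigma_{\mathsf{min}}\I_m$ block) and $\bPhi$ being an $(\alpha,\delta)$-subspace embedding for $\widehat{\bA}$, exactly as already used in Claim~1. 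For $\mathbf{a}_2 = \bC\mathbf{v}'$, $\mathbf{a}_2^{\mathsf T}(\bC\bC^{\mathsf T})^{-1}\mathbf{a}_2 \leq \|\mathbf{v}'\|_2^2 = 1$. Standard Gaussian tail bounds then give $|\mathbf{z}_i| \leq O(\sqrt{\log(1/\delta_0)})$ times its standard deviation, with probability at least $1-\delta_0/2$ each.

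Finally, I will bound $\|\mathbf{M}^{-1}\|_2$ by inverting the explicit $2\times 2$ matrix: the entries of $\mathbf{D}^{-1}$ are $O(1)$ once one uses $\|\mathbf{u}'\|,\|\mathbf{v}'\|=O(1)$, and $\mathbf{U}^{\mathsf T}(\bC\bC^{\mathsf T})^{-1}\mathbf{U}$ is a positive semidefinite perturbation that only helps the inversion. Combining this with the entrywise Gaussian bounds gives
\[
\bigl|\mathbf{z}^{\mathsf T}\mathbf{M}^{-1}\mathbf{z}\bigr| \leq O\!\left(\frac{\log(1/\delta_0)}{(1-\alpha)\,\sigma_{\mathsf{min}}^2}\right),
\]
which, after plugging in $\sigma_{\mathsf{min}} = 16\ln(1/\delta)\sqrt{t\kappa\ln(4/\delta)}/\varepsilon$ and $\delta_0 = \delta/2t$, is at most $\varepsilon_0 = \varepsilon/(\sqrt{4t\ln(1/\delta)}\log(1/\delta))$. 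The case $\bx \sim \widetilde{\bC}\by$ is handled identically by reversing the roles of $\bC$ and $\widetilde{\bC}$ and absorbing an extra $(1+O(\alpha))$ factor.

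The main obstacle will be the careful algebraic bookkeeping in steps two and three: getting the exact $2\times 2$ matrix $\mathbf{M}$ right, and verifying that the bound on $\|\mathbf{M}^{-1}\|_2$ together with the variance bound on each $\mathbf{z}_i$ combines to give precisely the scaling $1/\sigma_{\mathsf{min}}^2$ (and not, say, $1/\sigma_{\mathsf{min}}$), so that the choice of $\sigma_{\mathsf{min}}$ made in the algorithm exactly matches the $\varepsilon_0$ target. A secondary subtlety is that when $\bx$ is drawn from $\widetilde{\bC}\by$, the projection $\mathbf{v}'^{\mathsf T}\bC^{\mathsf T}(\bC\bC^{\mathsf T})^{-1}\bx$ no longer simplifies as cleanly, and I will need the subspace-embedding bounds for $\bPhi$ once more to control the resulting cross-covariance up to $(1\pm\alpha)$ factors.
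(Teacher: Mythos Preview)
Your overall strategy---reduce the quadratic-form difference to a rank-two expression and control the two resulting scalar Gaussians via tail bounds---is sound and parallels the paper's own argument. However, the final step of your proposal has a genuine gap.

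You claim the quadratic form $\mathbf{z}^{\mathsf T}\mathbf{M}^{-1}\mathbf{z}$ is $O\bigl(\log(1/\delta_0)/\sigma_{\mathsf{min}}^2\bigr)$, arguing that $\|\mathbf{M}^{-1}\|_2=O(1)$ because ``$\mathbf{U}^{\mathsf T}(\bC\bC^{\mathsf T})^{-1}\mathbf{U}$ is a positive semidefinite perturbation that only helps the inversion.'' Two things go wrong here. First, $\mathbf{D}^{-1}=\bigl(\begin{smallmatrix}0&1\\1&-1\end{smallmatrix}\bigr)$ is \emph{indefinite}, so adding a PSD matrix need not shrink the norm of the inverse. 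Second, and more seriously, even granting $\|\mathbf{M}^{-1}\|_2=O(1)$, the crude bound $\|\mathbf{M}^{-1}\|_2\|\mathbf{z}\|_2^2$ is dominated by $z_2^2$, and $z_2=\mathbf{v}'^{\mathsf T}\bC^{\mathsf T}(\bC\bC^{\mathsf T})^{-1}\bx$ has variance $\Theta(1)$ (not $\Theta(1/\sigma_{\mathsf{min}}^2)$), so you would only get $O(\log(1/\delta_0))$---far too large.

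The Woodbury route \emph{can} be repaired, but you must expand $\mathbf{z}^{\mathsf T}\mathbf{M}^{-1}\mathbf{z}$ entry by entry. Computing $\mathbf{M}^{-1}$ explicitly one finds $(\mathbf{M}^{-1})_{22}=M_{11}/\det(\mathbf{M})=O(1/\sigma_{\mathsf{min}}^2)$ while $(\mathbf{M}^{-1})_{11},(\mathbf{M}^{-1})_{12}=O(1)$. Then the three contributions are
\[
z_1^2(\mathbf{M}^{-1})_{11}=O\Bigl(\tfrac{\log}{\sigma_{\mathsf{min}}^2}\Bigr),\qquad
z_1z_2(\mathbf{M}^{-1})_{12}=O\Bigl(\tfrac{\log}{\sigma_{\mathsf{min}}}\Bigr),\qquad
z_2^2(\mathbf{M}^{-1})_{22}=O\Bigl(\tfrac{\log}{\sigma_{\mathsf{min}}^2}\Bigr),
\]
so the dominant term is $O(\log(1/\delta_0)/\sigma_{\mathsf{min}})$, \emph{not} $O(\log(1/\delta_0)/\sigma_{\mathsf{min}}^2)$. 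With the paper's choice of $\sigma_{\mathsf{min}}$ this is indeed $\leq\varepsilon_0$, but your stated bound was too optimistic by a factor of $\sigma_{\mathsf{min}}$.

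For comparison, the paper sidesteps Woodbury entirely: it uses the resolvent identity $A^{-1}-B^{-1}=A^{-1}(B-A)B^{-1}$ directly, which together with the rank-one form of $\bC-\widetilde{\bC}$ factors the quadratic-form difference as $\tau_1\tau_2+\tau_3\tau_4$, where each $\tau_i$ is an explicit scalar Gaussian whose standard deviation is bounded via the SVDs of $\bC$ and $\widetilde{\bC}$. The products $\tau_1\tau_2$ and $\tau_3\tau_4$ each contribute $O(\log/\sigma_{\mathsf{min}})$. This gives the same bound as the repaired Woodbury argument but avoids inverting the $2\times2$ matrix and the attendant bookkeeping about which entries of $\mathbf{M}^{-1}$ are small.
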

\begin{proof}
Without any loss of generality, we can assume $\bx = \bC \by$. The case for $\bx = \widetilde{\bC} \by$ is analogous. Let  $\bC - \widetilde{\bC}= \mathbf{v} \mathbf{u}^{\mathsf T}$. Note that $\E[(\bOmega)_{i,j}]=\mathbf{0}^n$ for all $1\leq i,j \leq m$ and $\cov((\bOmega)_{i,j})=1$ if and only if $i=j$. Then
\begin{align*}	
 \bx^{\mathsf T}  (\bC  \bC^{\mathsf T})^{-1} \bx - \bx^{\mathsf T}  (\widetilde{\bC}  \widetilde{\bC}^{\mathsf T})^{-1} \bx 
 &= \bx^{\mathsf T}  (\bC  \bC^{\mathsf T})^{-1} (\widetilde{\bC}  \widetilde{\bC}^{\mathsf T}) (\widetilde{\bC}  \widetilde{\bC}^{\mathsf T})^{-1}  \bx - \bx^{\mathsf T}  (\widetilde{\bC}  \widetilde{\bC}^{\mathsf T})^{-1} \bx \\
	&={\bx^{\mathsf T}   \sparen{(\bC  \bC^{\mathsf T})^{-1} (\bC   \mathbf{u} \mathbf{v}^{\mathsf T} + \mathbf{v}  \mathbf{u}^{\mathsf T}  \widetilde{\bC}^{\mathsf T} ) (\widetilde{\bC}  \widetilde{\bC}^{\mathsf T})^{-1} }   \bx}.  \end{align*}

Using the singular value decomposition of $\bC=\bU_\bC \bSigma_\bC \bV_\bC^{\mathsf T}  $ and $\widetilde{\bC} = \widetilde{\bU}_\bC \widetilde{\bSigma}_\bC \widetilde{\bV}_\bC^{\mathsf T}  $, we have 
\begin{align*} 
&\paren{ \bx ^{\mathsf T}  (\bU_\bC \bSigma_\bC^{-1}\bV_\bC^{\mathsf T}  ) \mathbf{u}} \paren{\mathbf{v}^{\mathsf T}   (\widetilde{\bU}_\bC \widetilde{\bSigma}_\bC^{-2} \widetilde{\bU}_\bC^{\mathsf T}  )  \bx} \\
&\quad+   \paren{\bx^{\mathsf T} (\bU_\bC \bSigma_\bC^{-2}\bU_\bC^{\mathsf T}  )\mathbf{v}} \paren{\mathbf{u}^{\mathsf T}   (\widetilde{\bV}_\bC \widetilde{\bSigma}_\bC^{-1} \widetilde{\bU}_\bC^{\mathsf T}  )  \bx} 
\end{align*}

Since $\bx \sim \bC   \by$, where $\by \sim \cN(0,1)^m$, we can write the above expression as   $\tau_1\tau_2 + \tau_3\tau_4$, where
	\begin{align*} 
		\tau_1&=  \paren{ \by^{\mathsf T}  \bC^{\mathsf T}  (\bU_\bC \bSigma_\bC^{-1}\bV_\bC^{\mathsf T}  ) \mathbf{u} } 
	        & \tau_2= \paren{\mathbf{v}^{\mathsf T}    (\widetilde{\bU}_\bC \widetilde{\bSigma}_\bC^{-2} \widetilde{\bU}_\bC^{\mathsf T}  )  \bC  \by}  \\
		\tau_3&= {\paren{\by^{\mathsf T}  \bC^{\mathsf T} (\bU_\bC \bSigma_\bC^{-2}\bU_\bC^{\mathsf T}  ) \mathbf{v}}}
		& \tau_4={\paren{\mathbf{u}^{\mathsf T}    (\widetilde{\bV}_\bC \widetilde{\bSigma}_\bC^{-1} \widetilde{\bU}_\bC^{\mathsf T}  )\bC  \by}}. 
		\end{align*}

Now since $\| \widetilde{\bSigma}_\bC \|_2, \| \bSigma_\bC \|_2 \geq \sigma_{\min}(\bC)$, plugging in the $\mathsf{SVD}$ of $\bC$ and $\bC - \widetilde{\bC} = \mathbf{v} \mathbf{u}^{\mathsf T}$, and that every term $\tau_i$ in the above expression is a linear combination of a Gaussian, i.e., each term is distributed as per $ \cN(0,\|\tau_i\|^2)$, we calculate $\|\tau_i\|$ as below.
\begin{align*}
\| \tau_1\|_2 &= \| (\bV_\bC \bSigma_\bC \bU_\bC^{\mathsf T} )    (\bU_\bC \bSigma_\bC^{-1} \bV_\bC^{\mathsf T}  ) \mathbf{u} \|_2  \leq \| \mathbf{u}   \|_2 \leq \sqrt{1+\alpha}, \\
\| \tau_2\|_2 &= \| \mathbf{v}^{\mathsf T}   (\widetilde{\bU}_\bC \widetilde{\bSigma}_\bC^{-2} \widetilde{\bU}_\bC^{\mathsf T}  )  (\widetilde{\bU}_\bC \widetilde{\bSigma}_\bC \widetilde{\bV}_\bC^{\mathsf T}   - \mathbf{v}\mathbf{u}^{\mathsf T}  )\|_2  \\ 
		& \quad \leq \| \mathbf{v}^{\mathsf T}   (\widetilde{\bU}_\bC \widetilde{\bSigma}_\bC^{-2} \widetilde{\bU}_\bC^{\mathsf T}  )    \widetilde{\bU}_\bC \widetilde{\bSigma}_\bC  \widetilde{\bU}_\bC^{\mathsf T}  \|_2 
		+ \| \mathbf{v}^{\mathsf T}   (\widetilde{\bU}_\bC \widetilde{\bSigma}_\bC^{-2} \widetilde{\bU}_\bC^{\mathsf T}  )    \mathbf{v}\mathbf{u}^{\mathsf T}   \|_2   \leq \frac{1}{ \sigma_{\mathsf{min}}(\bC)} + \frac{\sqrt{1+\alpha} }{ \sigma_{\mathsf{min}}^2(\bC)},  \\
\| \tau_3\|_2 &= \| (\bV_\bC \bSigma_\bC \bU_\bC^{\mathsf T} )(\bU_\bC \bSigma_\bC^{-2}\bU_\bC^{\mathsf T}  )\mathbf{v} \|_2 
	\leq  \| \bSigma_\bC^{-1} \|_2 
\leq \frac{ 1}{ \sigma_{\mathsf{min}}(\bC)},  \\
\| \tau_4\|_2 
	&= \| \mathbf{u}^{\mathsf T}   (\widetilde{\bV}_\bC \widetilde{\bSigma}_\bC^{-1} \widetilde{\bU}_\bC^{\mathsf T}  )   (\widetilde{\bU}_\bC \widetilde{\bSigma}_\bC \widetilde{\bV}_\bC^{\mathsf T}   - \mathbf{v}\mathbf{u}^{\mathsf T}  ) \|_2  \\ 
		& \quad \leq \|  \mathbf{u}^{\mathsf T}   (\widetilde{\bV}_\bC \widetilde{\bSigma}_\bC^{-1} \widetilde{\bU}_\bC^{\mathsf T}  )  (\widetilde{\bU}_\bC \widetilde{\bSigma}_\bC 
\widetilde{\bV}_\bC^{\mathsf T}   \|_2 
		+ \|\mathbf{u}^{\mathsf T}   (\widetilde{\bV}_\bC \widetilde{\bSigma}_\bC^{-1} \widetilde{\bU}_\bC^{\mathsf T}  )  \mathbf{v}  \|_2   \leq \sqrt{1+\alpha}  + \frac{\sqrt{1+\alpha}}{ \sigma_{\mathsf{min}}(\bC)}.
\end{align*}
 
 Using the concentration bound on the Gaussian distribution, each term, $\tau_1,\tau_2,\tau_3$, and $\tau_4$, is less than $\|\tau_i\| \ln (4/\delta_0)$ with probability $1 - \delta_0/2$.  The second claim follows because with probability $1-\delta_0$,
 \begin{align*}
 \left| \bx^{\mathsf T}  (\bC  \bC^{\mathsf T})^{-1} \bx - \bx^{\mathsf T}  (\widetilde{\bC}^{\mathsf T}   \widetilde{\bC})^{-1} \bx\right| 
  \leq  2\paren{\frac{\sqrt{1+\alpha}}{ \sigma_{\mathsf{min}}(\bC)} + \frac{{1+\alpha}}{ \sigma_{\mathsf{min}}^2(\bC)}} \ln (4/\delta_0)  \leq \varepsilon_0 ,
\end{align*} 
where the second inequality follows from the choice of $\sigma_\mathsf{min}$ and the fact that $\sigma_\mathsf{min}(\bC) \geq (1-\alpha)^{1/2} \sigma_\mathsf{min}$.
\end{proof}
\lemref{low_space_private} follows by combining Claims~\ref{claim1} and~\ref{claim2}.




\section{Related Problems Studied in Previous Works} \label{app:related}
Differential privacy was  introduced by Dwork {\it et al.}~\cite{DMNS06}. 
Since then, many algorithms for preserving differential privacy have been proposed in the literature (see, Dwork and Roth~\cite{DR14}). 
Dwork  {\it et al.}~\cite{DNPR10} first considered streaming algorithms with privacy under the model of {\em pan-privacy}, where the internal state is known to the adversary. 
Subsequently, there have been some works on online private learning~\cite{DTTZ14,JKT12,TS13} for various tasks.  
There are some recent works on differentially private $\lra$ as well. Blum  {\it et al.}~\cite{BDMN05} first studied this problem in the Frobenius norm. This was improved by Hardt and Roth~\cite{HR12}  under the low coherence assumption. Upadhyay~\cite{Upadhyay14} later showed that one can make the two-pass algorithm of Hardt and Roth~\cite{HR12} single-pass. differentially private $\lra$ has been studied in the spectral norm as well by many works~\cite{CSS12,KT13,HR13,HP14}.
Recently, Dwork  {\it et al.}~\cite{DTTZ14} gave a tighter analysis of the algorithm of Blum  {\it et al.}~\cite{BDMN05}  and used it to give a private online algorithm for covariance matrices. We now define the problem studied in each of the work listed above and draw the contrast with Problem~\ref{prob:private_spectral}.

\paragraph {\bf Spectral low-rank approximation} 
Kapralov and Talwar~\cite{KT13}, Hardt and Roth~\cite{HR13},  Hardt and Price~\cite{HP14}, and Upadhyay~\cite{Upadhyay14} studied low-rank approximation. 
\begin{prob} 
\label{prob:private_spectral} (Approximation with respect to the spectral norm).
Given parameters $\alpha,\beta, \tau$, a private $m \times n$ matrix $\bA$  (where $m \ll n$) and the target rank $k$, find a rank-$k$ matrix ${\bB}$ such that 
\begin{align*}  \p \sparen{ \| \bA -{\bB} \|_2 \leq \gamma  \Delta_k(\bA) +\zeta } \geq 1-\beta. \end{align*} 
Hardt and Price~\cite{HP14} and Jiang {\it et al.}~\cite{JXZ15} consider two matrices as neighboring if they differ in exactly one entry by at most $1$. Upadhyay~\cite{Upadhyay14} considered two matrices are neighboring if their differ in at most one row or column of norm $1$.  Kapralov and Talwar~\cite{KT13} considered two matrices as neighboring if the difference of their  spectral norm is at most $1$. The result of Kapralov and Talwar~\cite{KT13} and Jiang {\it et al.}~\cite{JXZ15} is for $\varepsilon$-differential privacy (i.e., $\delta=0$).
\end{prob}
\medskip \noindent \textbf{Difference from this paper.} We consider low-rank factorization with respect to the spectral norm while Problem~\ref{prob:private_spectral} studied only low-rank approximation. Moreover, granularity of privacy we consider is more general than theirs.

\paragraph {\bf Approximating the right singular vectors} 
Dwork {\it et al.}~\cite{DTTZ14} studied the following problem.
\begin{prob} \label{prob:private_DTTZ}
Given parameters $\alpha,\beta, \tau$ and an $m \times n$  private matrix $\bA$ (where $m \gg n)$), compute a rank-$k$ matrix ${\bB}$ such that with probability $1-\beta$
\begin{align*}   { {\| \bA^{\mathsf T} \bA -{\bB}\|_2 } \leq  \min_{{\sf rank}(\bB_k) \leq k} {\|  \bA^{\mathsf T} \bA - \bB_k\| } +\zeta }.\end{align*}  
\end{prob}
Dwork {\it et al.}~\cite{DTTZ14} consider two matrices neighbouring if they differ by at most one row. They further assume that the rows are normalized. 

\medskip \noindent \textbf{Difference from this paper.} We consider low-rank factorization of both the right and the left singular vectors while Problem~\ref{prob:private_DTTZ} studied  low-rank ``approximation" of the right singular vectors. Moreover, granularity of privacy we consider is more general than theirs.

\section{Useful Facts}

We use the notation $\rad(p)$ to denote a distribution with support $\pm 1$ such that $+1$ is sampled with probability  $p$ and $-1$ is sampled with probability  $1-p$. 
An $n \times n$ Walsh-Hadamard matrix $\bH_n$ is constructed recursively as follows:
\begin{align*}  \bH_n = \begin{pmatrix}  \bH_{n/2} & \bH_{n/2} \\  \bH_{n/2} & -\bH_{n/2} \end{pmatrix}~\text{and}~\bH_1 :=1.\end{align*}
A randomized Walsh-Hadamard matrix $\bW_n$ is formed by multiplying $\bH_n$ with a diagonal matrix whose diagonal entries are picked i.i.d. from $\rad(1/2).$ We drop the subscript $n$ where it is clear from the context. A $r \times n$ subsampled randomized Hadamard matrix is constructed by multiplying $\bPi_{1..r}$ from the left to a randomized Hadamard matrix, where $\bPi_{1..r}$ is the matrix formed by the first $r$ rows of a random permutation matrix. 

\begin{lemma} \label{lem:SRHTinverse}
Let $\bS$ be a $v \times m$ subsampled randomized Hadamard matrix, where $v\leq m$ and $\bN_2 \in \R^{v \times n}$. Then we have, 
\begin{align*} \|  \bS^\dagger \bN_2 \|_2 = \| \bN_2 \|_2.\end{align*}  
\end{lemma}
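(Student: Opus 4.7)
The plan is to exploit the fact that a (normalized) subsampled randomized Hadamard matrix has orthonormal rows, so that its pseudo-inverse has orthonormal columns, which then leaves spectral norms invariant under left multiplication.

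First I would unpack the definition of $\bS$. Writing $\bS=\bPi_{1..v}\bW$ with $\bW = \bH_m \bD$ (under the standard normalization that makes $\bH_m/\sqrt{m}$ orthogonal), I will compute
\begin{align*}
\bS\bS^\sT &= \bPi_{1..v}\bH_m \bD \bD^\sT \bH_m^\sT \bPi_{1..v}^\sT \\
           &= \bPi_{1..v}\bH_m\bH_m^\sT \bPi_{1..v}^\sT = \bPi_{1..v}\bPi_{1..v}^\sT = \I_v,
\end{align*}
where I use $\bD\bD^\sT = \I_m$, $\bH_m \bH_m^\sT = \I_m$ (after normalization), and the fact that the $v$ rows of $\bPi_{1..v}$ are $v$ distinct standard basis vectors, so $\bPi_{1..v}\bPi_{1..v}^\sT = \I_v$. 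This shows $\bS$ has orthonormal rows and in particular full row rank.

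Next, because $\bS$ has full row rank with $\bS\bS^\sT = \I_v$, the Moore--Penrose pseudo-inverse simplifies to $\bS^\dagger = \bS^\sT(\bS\bS^\sT)^{-1} = \bS^\sT$. Equivalently, from the SVD viewpoint, write $\bS = \bU\bSigma\bV^\sT$ with $\bU \in \R^{v \times v}$ orthogonal, $\bV \in \R^{m \times v}$ having orthonormal columns, and $\bSigma \in \R^{v \times v}$ diagonal; the identity $\bS\bS^\sT = \bU\bSigma^2\bU^\sT = \I_v$ forces $\bSigma = \I_v$. Therefore $\bS^\dagger = \bV\bU^\sT$ and
\begin{align*}
(\bS^\dagger)^\sT \bS^\dagger = \bU\bV^\sT\bV\bU^\sT = \bU\bU^\sT = \I_v.
\end{align*}

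Finally I would invoke the elementary fact that whenever a matrix $\bM$ satisfies $\bM^\sT\bM = \I$, left multiplication by $\bM$ is a spectral-norm isometry: for any unit vector $\bx$,
\begin{align*}
\|\bM \bN_2 \bx\|_2^2 = \bx^\sT \bN_2^\sT \bM^\sT \bM \bN_2 \bx = \|\bN_2 \bx\|_2^2,
\end{align*}
and taking the supremum over unit $\bx$ yields $\|\bM \bN_2\|_2 = \|\bN_2\|_2$. Applying this with $\bM := \bS^\dagger$ gives the desired identity $\|\bS^\dagger \bN_2\|_2 = \|\bN_2\|_2$. There is no real technical obstacle here — the only subtlety is bookkeeping the normalization convention on $\bH_m$; everything else is a direct algebraic consequence of the orthogonal structure of Hadamard and sign-flip diagonal matrices combined with non-replacement subsampling.
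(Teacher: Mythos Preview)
Your proof is correct and follows essentially the same approach as the paper: both arguments establish that $\bS$ has orthonormal rows (so $\bS^\dagger=\bS^\sT$ has orthonormal columns) and then use that left multiplication by such a matrix preserves the spectral norm. The paper factors $\bS^\dagger$ through $(\bW\bD)^\sT\bPi_{1..v}^\dagger$ and handles the two pieces separately, whereas you compute $\bS\bS^\sT=\I_v$ in one shot, which is a slightly cleaner presentation of the same idea.
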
 
 \begin{proof}
One way to look at the action of $\bS$ when it is a subsampled Hadamard transform is that it is a product of matrices $\bW$ and $\bPi_{1..v}$, where $\bPi_{1..v}$ is the matrix formed by the first $r$ rows of a random permutation matrix and $\bW$ is a randomized Walsh-Hadamard matrix formed by multiplying a Walsh-Hadamard matrix with a diagonal matrix whose non-zero entries are picked i.i.d. from $\rad(1/2)$. 

 Since $\bW\bD$ has orthonormal rows, $\bS^\dagger=(\bPi_{1..v} \bW \bD)^\dagger= (\bW \bD)^{\mathsf T} (\bPi_{1..v})^{\dagger}$.  
\begin{align*} 
\|  \bS^\dagger \bN \|_2 &= \| ({\bPi}_{1..v} \bW \bD)^\dagger \bN \|_2 = \| (\bW \bD)^{\mathsf T} {\bPi}_{1..v}^{\dagger} \bN \|_2 \\ 
	& = \| {\bPi}_{1..v}^{\dagger} \bN \|_2. 
\end{align*}

Using the fact that $\bPi_{1..v}$ is a full row rank matrix and ${\bPi}_{1..v}  {\bPi}_{1..v}^{\mathsf T}$ is an identity matrix, we have 
${\bPi}_{1..v}^{\dagger} = {\bPi}_{1..v}^{\mathsf T} ( \widehat{\bPi}_{1..v}  \bPi_{1..v}^{\mathsf T})^{-1} =  {\bPi}_{1..v}^{\mathsf T}. $ The result follows. 
 \end{proof}

\medskip \noindent \textbf{Differential privacy.}
We use the following results about differential privacy in this paper.

\begin{lemma} \label{lem:post}  (Post-processing~\cite{DKMMN06}).
Let $\mathfrak{M}(\mathbf{D})$ be an $(\varepsilon, \delta)$-differential private mechanism for a database $\mathbf{D}$, and let $h$ be any function, then any mechanism $\mathfrak{M}':=h(\mathfrak{M}(\mathbf{D}))$ is also $(\varepsilon, \delta)$-differentially private. 
\end{lemma}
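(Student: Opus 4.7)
The plan is a routine change-of-variables argument on the output distribution of $\mathfrak{M}$, combined with independence of the internal randomness of $h$. First I would handle the case where $h$ is a deterministic measurable function. Fix any neighbouring pair $\mathbf{D}, \mathbf{D}'$ and any measurable subset $S$ of the range of $h$. Pulling $S$ back through $h$ yields a measurable set $h^{-1}(S)$ in the range of $\mathfrak{M}$, and by the definition of the push-forward measure,
\[
\p[h(\mathfrak{M}(\mathbf{D})) \in S] \;=\; \p[\mathfrak{M}(\mathbf{D}) \in h^{-1}(S)].
\]
Applying the $(\varepsilon,\delta)$-differential privacy guarantee of $\mathfrak{M}$ to the event $\{\mathfrak{M}(\cdot) \in h^{-1}(S)\}$ gives
\[
\p[\mathfrak{M}(\mathbf{D}) \in h^{-1}(S)] \;\leq\; e^{\varepsilon}\, \p[\mathfrak{M}(\mathbf{D}') \in h^{-1}(S)] + \delta \;=\; e^{\varepsilon}\, \p[h(\mathfrak{M}(\mathbf{D}')) \in S] + \delta,
\]
which is exactly $(\varepsilon,\delta)$-DP of $\mathfrak{M}' = h \circ \mathfrak{M}$.

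For a randomized $h$, I would decompose $h$ into a source of coins $r$ drawn from some distribution independent of $\mathbf{D}$ together with a deterministic family $\{h_r\}_r$, writing $h(y) = h_r(y)$ conditional on the coins $r$. Independence of $r$ from $\mathbf{D}$ lets me condition on $r$, apply the deterministic case to each $h_r$, and then integrate: for every measurable $S$,
\[
\p[h(\mathfrak{M}(\mathbf{D})) \in S] \;=\; \E_r\!\left[\p[h_r(\mathfrak{M}(\mathbf{D})) \in S]\right] \;\leq\; \E_r\!\left[e^{\varepsilon}\, \p[h_r(\mathfrak{M}(\mathbf{D}')) \in S] + \delta\right],
\]
and the right-hand side equals $e^\varepsilon\, \p[h(\mathfrak{M}(\mathbf{D}')) \in S] + \delta$ by linearity of expectation and independence.

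The main (and essentially only) obstacle is the bookkeeping: verifying that $h^{-1}(S)$ lies in the $\sigma$-algebra on which the privacy guarantee for $\mathfrak{M}$ is defined, which is automatic for measurable $h$; and confirming that the additive slack $\delta$ does not inflate when averaging over the internal coins of $h$, which holds because the DP inequality is affine in the measure of the event with a fixed additive constant $\delta$ that does not depend on $r$. No quantitative estimate or concentration bound is needed; the lemma is purely definitional, which is why the paper simply imports it from \cite{DKMMN06} rather than reproving it.
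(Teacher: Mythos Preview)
Your proposal is correct, and you have already identified the comparison accurately: the paper does not prove this lemma at all but simply states it as a cited fact from \cite{DKMMN06}, so there is no in-paper argument to compare against. The preimage argument you give for deterministic $h$, followed by conditioning and averaging over the independent coins of a randomized $h$, is exactly the standard proof and is complete as written.
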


\begin{theorem} \label{thm:DRV10}   (Composition~\cite{DRV10}). 
	Let $\varepsilon_0, \delta_0 \in (0,1)$, and $\delta'>0$. If $\mathfrak{M}_1, \cdots , \mathfrak{M}_\ell$ are each $(\varepsilon, \delta)$-differential private mechanism, then the mechanism $$\mathfrak{M}(\mathbf{D}):= (\mathfrak{M}_1(\mathbf{D}), \cdots , \mathfrak{M}_\ell(\mathbf{D}))$$ releasing the concatenation of each algorithm is $(\varepsilon', \ell \delta+0+\delta')$-differentially private for $\varepsilon' < \sqrt{2\ell \ln (1/\delta')}\varepsilon_0 + 2\ell \varepsilon_0^2$.
\end{theorem}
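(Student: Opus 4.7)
The approach I would take is the \emph{privacy loss random variable} method. For a mechanism $\mathfrak{M}$ and a pair of neighboring databases $\mathbf{D}, \mathbf{D}'$, define the privacy loss at output $o$ as $L(o) := \ln \paren{ \p[\mathfrak{M}(\mathbf{D}) = o] / \p[\mathfrak{M}(\mathbf{D}') = o]}$. A preliminary lemma I would prove first is that $(\varepsilon, \delta)$-differential privacy is equivalent (up to a standard tightness loss) to the statement: with probability at least $1 - \delta$ over $o \sim \mathfrak{M}(\mathbf{D})$, we have $|L(o)| \leq \varepsilon$. This follows by splitting the output space into the good set $\set{o : |L(o)| \leq \varepsilon}$ and its complement, and then directly verifying both directions from Definition~\ref{defn:approxdp}.

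Next, for an $(\varepsilon_0, \delta_0)$-DP mechanism, I would establish two crucial facts about its privacy loss conditioned on the good event $\cE = \set{o : |L(o)| \leq \varepsilon_0}$: boundedness $|L(o)| \leq \varepsilon_0$, and the expectation bound $\E_{o \sim \mathfrak{M}(\mathbf{D})}[L(o) \mid \cE] \leq 2 \varepsilon_0^2$ (valid for $\varepsilon_0 \leq 1$). The expectation bound is the key quantitative step and uses the elementary inequality $e^x + e^{-x} - 2 \leq 2 x^2$ for $|x| \leq 1$, together with the ratio constraint $e^{-\varepsilon_0} \leq \p[\mathfrak{M}(\mathbf{D}) = o] / \p[\mathfrak{M}(\mathbf{D}') = o] \leq e^{\varepsilon_0}$ holding on $\cE$, applied to a symmetric integration of the log-ratio against the base density.

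The main concentration step considers the composed mechanism $\mathfrak{M}(\mathbf{D}) = (\mathfrak{M}_1(\mathbf{D}), \cdots, \mathfrak{M}_\ell(\mathbf{D}))$, for which the joint privacy loss decomposes as $L = \sum_{i=1}^\ell L_i(o_i)$. Since the $\mathfrak{M}_i$ are run independently, the random variables $L_i(o_i)$ are independent; each is bounded in $[-\varepsilon_0, \varepsilon_0]$ on the joint good event and has mean at most $2 \varepsilon_0^2$ by the previous paragraph. Applying Hoeffding's inequality to the sum of the $L_i$ then yields
\begin{align*}
\p \sparen{ L > 2 \ell \varepsilon_0^2 + t } \leq \exp\paren{ -\frac{t^2}{2 \ell \varepsilon_0^2} }.
\end{align*}
Setting $t := \varepsilon_0 \sqrt{2 \ell \ln(1/\delta')}$ makes the right side at most $\delta'$, giving $L \leq 2 \ell \varepsilon_0^2 + \varepsilon_0 \sqrt{2 \ell \ln(1/\delta')} = \varepsilon'$ with probability at least $1 - \delta'$.

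Finally, I would combine the pieces via a union bound: the $\ell$ individual good events $\cE_1, \cdots, \cE_\ell$ simultaneously hold with probability at least $1 - \ell \delta_0$, and conditioned on this, the concentration inequality above gives $|L| \leq \varepsilon'$ except with probability $\delta'$. The preliminary equivalence then translates this into $(\varepsilon', \ell \delta_0 + \delta')$-differential privacy for the composed mechanism, matching the stated bound. I expect the main technical obstacle to be cleanly handling the $\delta_0$ slack in the expectation bound: on the bad event $\overline{\cE}$ the privacy loss can be unbounded, and one must ensure this contribution is absorbed into the additive $\ell \delta_0$ term rather than inflating the mean used in the Hoeffding step. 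The standard workaround is to couple $\mathfrak{M}_i(\mathbf{D})$ with a ``clipped'' distribution whose privacy loss is deterministically bounded by $\varepsilon_0$, at a total-variation cost of $\delta_0$ per mechanism, and then apply the concentration argument to the coupled variables.
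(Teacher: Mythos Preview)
The paper does not give its own proof of this statement: Theorem~\ref{thm:DRV10} is quoted in the appendix as a known result from~\cite{DRV10} and is used as a black box throughout. So there is no ``paper's proof'' to compare against.

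That said, your outline is essentially the standard argument from the original reference. The privacy-loss random variable formulation, the mean bound $\E[L_i] \leq 2\varepsilon_0^2$ via the elementary inequality, Hoeffding on the sum, and absorbing the per-mechanism $\delta_0$ slack by a union bound (or, more carefully, a total-variation coupling to a clipped version) is exactly how~\cite{DRV10} proceeds. Your identification of the main technical wrinkle---that the unbounded tail on $\overline{\cE}$ must be routed into the additive $\ell\delta_0$ term rather than contaminating the Hoeffding mean---is correct, and the coupling workaround you sketch is the standard fix. One minor caveat: the ``equivalence'' in your preliminary lemma between $(\varepsilon,\delta)$-DP and the bounded-privacy-loss event is only one-directional in the clean form you state (bounded privacy loss implies DP; the converse requires the coupling/clipping you mention later), so you should be careful to invoke only the direction you actually need at the end.
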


\begin{theorem} (Gaussian mechanism~\cite{DKMMN06}.) \label{thm:gaussian}
Let $\bx, \by \in \R^n$ be any two vectors such that $\| \bx - \by \|_2 \leq c$. Let $\rho = c\varepsilon^{-1} \sqrt{\log(1/\delta)}$ and $\mathbf{g} \sim \cN(0,\rho^2)^n$ be a vector with each entries sampled i.i.d. Then for any $\mathbf{s} \subset \R^n$, $  \p [\bx + \mathbf{g} \in \mathbf{s}] \leq e^{\varepsilon} \p [\by + \mathbf{g} \in \mathbf{s}] + \delta. $
\end{theorem}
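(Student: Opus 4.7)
\textbf{Proof proposal for \thmref{gaussian} (Gaussian mechanism).}

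The plan is to invoke the standard ``privacy loss random variable'' argument. First I would write down the densities of the two shifted Gaussians: $\bx + \mathbf{g}$ has density $f_{\bx}(z) = (2\pi\rho^2)^{-n/2}\exp(-\|z-\bx\|_2^2/(2\rho^2))$ and similarly $f_{\by}(z) = (2\pi\rho^2)^{-n/2}\exp(-\|z-\by\|_2^2/(2\rho^2))$. Define the privacy loss $L(z) := \ln(f_{\bx}(z)/f_{\by}(z))$. Expanding the squared norms and cancelling the common $\|z\|_2^2$ term, one obtains the closed form
\begin{equation*}
L(z) \;=\; \frac{\|z-\by\|_2^2 - \|z-\bx\|_2^2}{2\rho^2} \;=\; \frac{2\langle z,\,\bx-\by\rangle + \|\by\|_2^2 - \|\bx\|_2^2}{2\rho^2}.
\end{equation*}

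Next I would analyze the distribution of $L(\bx + \mathbf{g})$ when $\mathbf{g} \sim \cN(0,\rho^2)^n$. Substituting $z = \bx + \mathbf{g}$ and simplifying, the constant part becomes $\|\bx-\by\|_2^2/(2\rho^2)$ and the stochastic part becomes $\langle \mathbf{g},\bx-\by\rangle/\rho^2$. Since $\langle \mathbf{g},\bx-\by\rangle$ is a linear combination of i.i.d.\ $\cN(0,\rho^2)$ coordinates, it is distributed as $\cN(0, \rho^2\|\bx-\by\|_2^2)$. Hence, with $c'=\|\bx-\by\|_2 \leq c$,
\begin{equation*}
L(\bx+\mathbf{g}) \;\sim\; \cN\!\left(\frac{(c')^2}{2\rho^2},\; \frac{(c')^2}{\rho^2}\right).
\end{equation*}

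The key step is then a one-dimensional Gaussian tail bound. Using $\Pr[\cN(0,1) > t] \le e^{-t^2/2}$, I would bound $\Pr[L(\bx+\mathbf{g}) > \varepsilon]$ by setting $t = (\varepsilon - (c')^2/(2\rho^2))\cdot \rho/c'$. Plugging in $\rho = c\sqrt{\ln(1/\delta)}/\varepsilon$ (and using $c' \le c$) shows $t \ge \sqrt{2\ln(1/\delta)}$ after absorbing the small subtractive term, so $\Pr[L(\bx+\mathbf{g}) > \varepsilon] \le \delta$. Let $\mathsf{Bad} := \{z : L(z) > \varepsilon\}$ denote the ``high loss'' set. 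For any measurable $\mathbf{s}\subset\R^n$,
\begin{equation*}
\Pr[\bx+\mathbf{g} \in \mathbf{s}] \;=\; \Pr[\bx+\mathbf{g} \in \mathbf{s}\cap\mathsf{Bad}] + \Pr[\bx+\mathbf{g} \in \mathbf{s}\setminus\mathsf{Bad}] \;\le\; \delta + e^{\varepsilon}\Pr[\by+\mathbf{g}\in \mathbf{s}],
\end{equation*}
since on $\mathbf{s}\setminus\mathsf{Bad}$ the pointwise inequality $f_{\bx}(z) \le e^{\varepsilon} f_{\by}(z)$ holds by definition of $L$.

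The only real obstacle is the constant in the tail bound: the precise folklore statement uses $\rho = c\sqrt{2\ln(1.25/\delta)}/\varepsilon$, and the stated form $\rho = c\sqrt{\log(1/\delta)}/\varepsilon$ requires either a slightly looser tail estimate (absorbing the $(c')^2/(2\rho^2)$ offset into a constant) or an assumption that $\varepsilon$ is bounded away from $1$. Since the statement is only used in the body up to absolute constants in the additive error, this calibration issue is cosmetic and can be handled by rescaling $\rho$ by a harmless constant factor.
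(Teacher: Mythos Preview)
The paper does not prove \thmref{gaussian}: it is stated in the appendix under ``Useful Facts'' as a known result from~\cite{DKMMN06}, with no accompanying proof. Your proposal is the standard privacy-loss argument for the Gaussian mechanism and is correct; your caveat about the constant in $\rho$ (the textbook calibration is $\rho = c\varepsilon^{-1}\sqrt{2\ln(1.25/\delta)}$ rather than $c\varepsilon^{-1}\sqrt{\ln(1/\delta)}$) is well taken, and as you note the discrepancy is immaterial for how the theorem is used in the body of the paper.
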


\medskip \noindent \textbf{Properties of Gaussian distribution.} We need the following property of a random Gaussian matrices. 

\begin{fact} \label{fact:gaussian} (\cite{JL84,Sarlos06})
Let $\mathbf{P} \in \R^{m \times n}$ be a matrix of rank $r$ and $\mathbf{Q} \in \R^{m \times n'}$ be an $m \times n'$ matrix. Let $\cD$ be a distribution of matrices over $\R^{t \times n}$ with entries  sampled i.i.d. from $\cN(0,1/t)$. Then there exists a $t = O(r/\alpha \log(r/\beta) )$ such that $\cD$ is an $(\alpha,\beta)$-subspace embedding for generalized regression.
\end{fact}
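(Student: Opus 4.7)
The plan is to verify the two ingredients that Sarlos~\cite{Sarlos06} identified as jointly sufficient for an oblivious sketch to solve generalized regression to $(1+\alpha)$ accuracy: a \emph{constant-factor subspace embedding} for the column space of $\mathbf{P}$, plus a tight \emph{approximate matrix multiplication} (AMM) bound between that column space and the residual. The split is important because subspace embedding alone needs $t = \Omega(r/\alpha^2)$, but generalized regression only needs the cheaper constant-factor version ($t = \Omega(r)$) combined with AMM at precision $\alpha/r$, and the AMM bound for Frobenius norm with a fixed rank-$r$ factor $\mathbf{U}$ can be attained with $t = O(r/\alpha \log(r/\beta))$ rows. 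Getting the linear (rather than quadratic) $1/\alpha$ scaling is the only nontrivial point; everything else is standard concentration.

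First I would reduce to the column space of $\mathbf{P}$. Let $\mathbf{P}=\mathbf{U}\mathbf{\Sigma}\mathbf{V}^{\mathsf T}$ be a thin SVD, so $\mathbf{U}\in\R^{m\times r}$ has orthonormal columns. Every $\mathbf{P}\bx$ equals $\mathbf{U}\mathbf{y}$ for some $\mathbf{y}\in\R^r$, so the subspace-embedding condition reduces to $(1-\alpha)\|\mathbf{y}\|_2\le\|\bPhi\mathbf{U}\mathbf{y}\|_2\le(1+\alpha)\|\mathbf{y}\|_2$ for all $\mathbf{y}$. For the constant-factor version of this, I would use a $1/4$-net $\mathcal{N}$ of the unit sphere $S^{r-1}$ with $|\mathcal{N}|\le 9^r$. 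Because each row of $\bPhi$ has i.i.d.\ $\cN(0,1/t)$ entries and $\|\mathbf{U}\mathbf{y}\|_2=1$, the vector $\bPhi\mathbf{U}\mathbf{y}$ has i.i.d.\ $\cN(0,1/t)$ coordinates; a standard chi-squared tail gives $\Pr[|\,\|\bPhi\mathbf{U}\mathbf{y}\|_2^2-1\,|>1/2]\le 2e^{-ct}$. A union bound over $\mathcal{N}$ and the standard net-to-sphere extension show that $\bPhi$ is a $1/2$-subspace embedding for $\mathbf{U}$ with probability $\ge 1-\beta/2$, provided $t=\Omega(r+\log(1/\beta))$.

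Next I would establish the AMM ingredient: for any fixed $\mathbf{Q}\in\R^{m\times n'}$ independent of $\bPhi$,
\begin{equation*}
\|\mathbf{U}^{\mathsf T}\bPhi^{\mathsf T}\bPhi\mathbf{Q}-\mathbf{U}^{\mathsf T}\mathbf{Q}\|_F^2 \le \alpha\,\|\mathbf{Q}\|_F^2
\end{equation*}
with probability $\ge 1-\beta/2$. The $(i,k)$-entry of the error matrix is $\frac{1}{t}\sum_{\ell=1}^{t}(\bPhi_\ell^{\mathsf T}\mathbf{U}_{*i})(\bPhi_\ell^{\mathsf T}\mathbf{Q}_{*k})-\mathbf{U}_{*i}^{\mathsf T}\mathbf{Q}_{*k}$, a mean-zero average of $t$ i.i.d.\ products of Gaussians; second-moment calculations (using $\|\mathbf{U}_{*i}\|_2=1$) give $\E\|\mathbf{U}^{\mathsf T}\bPhi^{\mathsf T}\bPhi\mathbf{Q}-\mathbf{U}^{\mathsf T}\mathbf{Q}\|_F^2\le (r\|\mathbf{Q}\|_F^2)/t$, and Hanson--Wright (or, more elementarily, a direct sub-exponential tail bound on quadratic forms in Gaussians) converts this into a high-probability bound. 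Choosing $t=\Omega(r\log(r/\beta)/\alpha)$ suffices to drive the error below $\alpha\|\mathbf{Q}\|_F^2$ with probability $\ge 1-\beta/2$.

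Finally I would invoke Sarlos' ``normal-equations'' argument: on the intersection of the two events above, the sketched regression optimizer $\tilde{\bX}=\argmin_{\bX}\|\bPhi(\mathbf{P}\bX-\mathbf{Q})\|_F$ satisfies $\|\mathbf{P}\tilde{\bX}-\mathbf{Q}\|_F\le(1+\alpha)\min_{\bX}\|\mathbf{P}\bX-\mathbf{Q}\|_F$, which is precisely the $(\alpha,\beta)$ subspace-embedding guarantee for generalized regression. The failure probability is at most $\beta$ by the union bound, and the two step-counts combine into $t=O(r/\alpha\log(r/\beta))$, as claimed. The main obstacle, as noted, is the AMM step with Frobenius-norm precision $\alpha$, because a naive application of a single-vector JL bound to each of the $r\cdot n'$ entries would force $1/\alpha^2$; one has to exploit that it is the Frobenius norm of the \emph{whole} error matrix that needs to be controlled and that $\|\mathbf{U}\|_F^2=r$ appears only once in the bound.
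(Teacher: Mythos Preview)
The paper does not actually prove this statement: it is presented as a \textbf{Fact} in the appendix under ``Properties of Gaussian distribution'' and is simply attributed to \cite{JL84,Sarlos06} without any argument. So there is no ``paper's own proof'' to compare against.

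That said, your outline tracks the standard Sarlos argument that the citation points to: a constant-factor subspace embedding for the $r$-dimensional column space of $\mathbf{P}$ (via an $\epsilon$-net and Gaussian concentration), an approximate-matrix-multiplication bound $\|\mathbf{U}^{\mathsf T}\bPhi^{\mathsf T}\bPhi\mathbf{Q}-\mathbf{U}^{\mathsf T}\mathbf{Q}\|_F$ at precision $O(\sqrt{\alpha/r})\|\mathbf{Q}\|_F$, and then the normal-equations combination step. This is exactly the decomposition Sarlos uses to get the linear-in-$1/\alpha$ row count for regression rather than the $1/\alpha^2$ needed for a full $(1\pm\alpha)$ subspace embedding, so your plan is faithful to the cited source. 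One small caution: in the AMM step, obtaining the $\log(r/\beta)$ tail from the second-moment bound $\E\|\cdot\|_F^2\le r\|\mathbf{Q}\|_F^2/t$ requires a genuine concentration inequality (Hanson--Wright or a moment bound on Gaussian chaos), not just Markov; you gesture at this but would need to spell it out to make the $t=O(r\alpha^{-1}\log(r/\beta))$ dependence rigorous.
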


\medskip \noindent \textbf{Matrix Theory.} We use some basic results from matrix theory. 
\begin{theorem} \label{thm:Weyl} (Weyl's perturbation theorem)
For any $m \times n$ matrices $\mathbf{P}, \mathbf{Q}$, we have $| \sigma_i(\mathbf{P} + \mathbf{Q}) - \sigma_i(\mathbf{P})| \leq \| \mathbf{Q} \|_2$, where $\sigma_i(\cdot)$ denotes the $i$-th singular value and $\| \mathbf{Q} \|_2$ is the spectral norm of the matrix $\mathbf{Q}$.
\end{theorem}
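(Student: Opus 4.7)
The plan is to deduce the bound from the Courant--Fischer min--max characterization of singular values, applied symmetrically to $\mathbf{P}$ and $\mathbf{P}+\mathbf{Q}$. Concretely, for any $m\times n$ matrix $\mathbf{A}$ and any $1\le i\le\min\{m,n\}$, the $i$-th singular value admits the variational representation
\begin{equation*}
\sigma_i(\mathbf{A}) \;=\; \min_{\substack{S\subseteq\R^n \\ \dim S = n-i+1}} \ \max_{\substack{x\in S \\ \|x\|_2=1}} \|\mathbf{A}x\|_2,
\end{equation*}
which I would take as the starting point (it is a standard consequence of the SVD applied to $\mathbf{A}^\sT\mathbf{A}$ combined with the min--max theorem for Hermitian matrices).

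The first step is to prove the one-sided inequality $\sigma_i(\mathbf{P}+\mathbf{Q})\le\sigma_i(\mathbf{P})+\|\mathbf{Q}\|_2$. Let $S^\star$ be a subspace of dimension $n-i+1$ achieving the outer minimum for $\mathbf{P}$, so that $\max_{x\in S^\star,\|x\|_2=1}\|\mathbf{P}x\|_2=\sigma_i(\mathbf{P})$. For any unit vector $x\in S^\star$, the triangle inequality for the Euclidean norm together with the definition of the spectral norm yields
\begin{equation*}
\|(\mathbf{P}+\mathbf{Q})x\|_2 \;\le\; \|\mathbf{P}x\|_2 + \|\mathbf{Q}x\|_2 \;\le\; \sigma_i(\mathbf{P}) + \|\mathbf{Q}\|_2.
\end{equation*}
Taking the maximum over unit $x\in S^\star$ and then using $S^\star$ as a (not necessarily optimal) candidate subspace in the outer minimum for $\sigma_i(\mathbf{P}+\mathbf{Q})$ gives the desired direction.

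For the reverse direction, I would apply the same argument with the roles of $\mathbf{P}$ and $\mathbf{P}+\mathbf{Q}$ swapped, writing $\mathbf{P}=(\mathbf{P}+\mathbf{Q})+(-\mathbf{Q})$ and using $\|-\mathbf{Q}\|_2=\|\mathbf{Q}\|_2$. This yields $\sigma_i(\mathbf{P})\le\sigma_i(\mathbf{P}+\mathbf{Q})+\|\mathbf{Q}\|_2$. Combining the two inequalities produces $|\sigma_i(\mathbf{P}+\mathbf{Q})-\sigma_i(\mathbf{P})|\le\|\mathbf{Q}\|_2$, as claimed.

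There is no real obstacle: the only care needed is to ensure the min--max statement is invoked in a form valid for rectangular matrices (handled by padding with zeros or by working with the Hermitian dilation $\begin{pmatrix}\mathbf{0}&\mathbf{A}\\\mathbf{A}^\sT&\mathbf{0}\end{pmatrix}$, whose nonzero eigenvalues are $\pm\sigma_i(\mathbf{A})$, reducing the rectangular case to the Hermitian Weyl inequality). Either route makes the argument a two-line application of the triangle inequality after the variational principle is in place.
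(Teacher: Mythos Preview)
Your argument via the Courant--Fischer min--max characterization is correct and is the standard route to this inequality. Note, however, that the paper does not actually prove \thmref{Weyl}: it is listed in the appendix of ``Useful Facts'' as a known result from matrix theory and is simply quoted without proof, so there is no paper-side argument to compare against. Your proposal would serve perfectly well as the omitted justification.
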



\begin{fact}  \label{fact:normequivalence}
Let $\bA$ be a rank-$r$  matrix, then $\| \bA  \|_2 \leq \| \bA\|_F \leq \sqrt{r} \| \bA \|_2$.
\end{fact}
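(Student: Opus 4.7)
The plan is to prove both inequalities by passing to the singular value decomposition of $\bA$ and comparing the spectral norm and Frobenius norm in terms of the singular values. Let $\bA = \bU \bSigma \bV^{\mathsf T}$ denote the SVD of $\bA$, where $\bSigma$ is an $r \times r$ diagonal matrix with diagonal entries $\sigma_1 \geq \sigma_2 \geq \cdots \geq \sigma_r > 0$, since $\bA$ has rank $r$. Since both norms are unitarily invariant (a standard property: $\bU$ and $\bV$ have orthonormal columns), I have $\| \bA \|_2 = \| \bSigma \|_2 = \sigma_1$ and $\| \bA \|_F^2 = \| \bSigma \|_F^2 = \sum_{i=1}^r \sigma_i^2$.

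First, I would establish the left inequality $\| \bA \|_2 \leq \| \bA \|_F$. Since every $\sigma_i \geq 0$, I have $\| \bA \|_F^2 = \sum_{i=1}^r \sigma_i^2 \geq \sigma_1^2 = \| \bA \|_2^2$, and taking square roots yields the claim.

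Next, I would establish the right inequality $\| \bA \|_F \leq \sqrt{r} \| \bA \|_2$. Since $\sigma_1$ is the largest singular value, each $\sigma_i^2 \leq \sigma_1^2$, so $\| \bA \|_F^2 = \sum_{i=1}^r \sigma_i^2 \leq r \sigma_1^2 = r \| \bA \|_2^2$, and again taking square roots gives the stated bound.

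There is no real obstacle in this proof; the only minor point to be careful about is that the rank bound $r$ enters precisely because the sum in the Frobenius norm has exactly $r$ nonzero terms, so the factor $\sqrt{r}$ (as opposed to $\sqrt{\min\{m,n\}}$) is tight for a rank-$r$ matrix.
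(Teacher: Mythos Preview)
Your proof is correct. The paper states this as a standard fact without proof, and your argument via the singular value decomposition is the natural one.
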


We also need the following results for the privacy proof.
\begin{theorem} \label{thm:lidskii} (Lidskii Theorem). 
Let $\bA, \mathbf{B}$ be $n \times n$ Hermittian matrices. Then for any choice of indices $1 \leq i_1 \leq \cdots \leq i_k \leq n$,  
\begin{align*}  \sum_{j=1}^k  \lambda_{i_j}(\bA + \mathbf{B}) \leq \sum_{j=1}^k  \lambda_{i_j}(\bA) + \sum_{j=1}^k  \lambda_{i_j}( \mathbf{B}), \end{align*}
where $\set{\lambda_i(\bA)}_{i=1}^n$ are the eigen-values of $\bA$ in decreasing order.
\end{theorem}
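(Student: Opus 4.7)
The plan is to prove Lidskii's inequality as a corollary of the Wielandt--Lidskii min-max characterisation of arbitrary partial eigenvalue sums for Hermitian matrices. This characterisation generalises Ky Fan's classical trace maximum principle from consecutive top indices to an arbitrary increasing sequence of indices, and once it is in place the theorem follows from the linearity $\langle x,(A+B)x\rangle = \langle x,Ax\rangle + \langle x,Bx\rangle$ together with a feasibility argument.

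First I would establish Ky Fan's maximum principle as a warm-up: for a Hermitian $M$ with spectral decomposition $M = U\Lambda U^{*}$,
\[
\sum_{j=1}^{k} \lambda_{j}(M) \;=\; \max_{\substack{V \in \mathbb{C}^{n\times k} \\ V^{*}V = I_{k}}} \mathrm{tr}(V^{*}M V).
\]
Writing $W = U^{*}V$, one sees that $\mathrm{tr}(V^{*}MV) = \sum_{i} \lambda_{i}(M)\, c_{i}$ with $c_{i} = \sum_{j}|W_{ij}|^{2} \in [0,1]$ and $\sum_{i} c_{i} = k$; maximising this linear functional over the resulting constrained simplex puts all the mass on the top $k$ indices, which proves the identity and also shows that the max is attained at $V = [u_1\ \cdots\ u_k]$.

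The bulk of the work then goes into extending this to the full Wielandt variational formula: for an arbitrary increasing sequence $1 \leq i_{1} < \cdots < i_{k} \leq n$,
\[
\sum_{j=1}^{k} \lambda_{i_{j}}(M) \;=\; \max_{\substack{S_{1}\subset\cdots\subset S_{k} \\ \dim S_{j}=i_{j}}} \;\min_{\substack{x_{j}\in S_{j} \\ \{x_{j}\}\text{ orthonormal}}} \;\sum_{j=1}^{k}\langle x_{j}, M x_{j}\rangle,
\]
together with its dual min-max companion. I would prove these by induction on $k$, using Cauchy's interlacing theorem to descend into the orthogonal complement of the optimal leading vector; the extremal flag is always the one spanned by the top eigenvectors $u_{1},\ldots,u_{i_{k}}$ of $M$, and within that flag the extremum is realised at $x_{j}=u_{i_{j}}$. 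The main obstacle is precisely this inductive bookkeeping, since one must verify both that the eigenvector flag is maximising and that every competing flag is driven below it by the inner minimisation; the interlacing has to be tracked carefully so that no eigenvalue indices get shifted incorrectly when we restrict to a smaller subspace.

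With the Wielandt formula in hand, the theorem is a short calculation. Pick an optimal flag $S_{1}^{\star}\subset\cdots\subset S_{k}^{\star}$ and an optimal orthonormal frame $\{x_{j}^{\star}\}$ realising the outer max and inner min for $M=A+B$, so that
\[
\sum_{j=1}^{k}\lambda_{i_{j}}(A+B) \;=\; \sum_{j=1}^{k}\langle x_{j}^{\star},(A+B)x_{j}^{\star}\rangle \;=\; \sum_{j=1}^{k}\langle x_{j}^{\star},A x_{j}^{\star}\rangle + \sum_{j=1}^{k}\langle x_{j}^{\star},B x_{j}^{\star}\rangle.
\]
The same flag $S_{\bullet}^{\star}$ and frame $\{x_{j}^{\star}\}$ are feasible (though not necessarily extremal) for $A$ and for $B$ viewed individually, and applying the dual min-max form of Wielandt to each of $A$ and $B$ with this feasible pair bounds the two summands on the right by $\sum_{j=1}^{k}\lambda_{i_{j}}(A)$ and $\sum_{j=1}^{k}\lambda_{i_{j}}(B)$ respectively. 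Adding the two bounds yields exactly the inequality stated in the theorem.
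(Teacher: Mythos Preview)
The paper does not prove this theorem; it is merely cited in the appendix as a known result used in the privacy analysis. So there is no ``paper's own proof'' to compare against.

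More importantly, the inequality as stated is \emph{false}. Take $n=2$, $k=1$, $i_{1}=2$, and $A=\mathrm{diag}(1,0)$, $B=\mathrm{diag}(0,1)$. Then $A+B=I$, so $\lambda_{2}(A+B)=1$, while $\lambda_{2}(A)+\lambda_{2}(B)=0+0=0$, contradicting the claimed bound. The correct Lidskii inequality replaces $\sum_{j}\lambda_{i_{j}}(B)$ on the right by $\sum_{j=1}^{k}\lambda_{j}(B)$ (the top $k$ eigenvalues of $B$), and this correct version is all the paper's application in Claim~\ref{claim1} actually needs.

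Your argument breaks at the last step, which is where it must break given the counterexample. You take the flag $S_{\bullet}^{\star}$ and frame $\{x_{j}^{\star}\}$ that realise the max--min for $A+B$, and then assert that this same pair is ``feasible'' for a ``dual min--max form'' of Wielandt applied to $A$ and to $B$ separately, yielding $\sum_{j}\langle x_{j}^{\star},Ax_{j}^{\star}\rangle \le \sum_{j}\lambda_{i_{j}}(A)$ and likewise for $B$. But the dual min--max characterisation of $\sum_{j}\lambda_{i_{j}}(\cdot)$ is taken over a \emph{different} family of flags (decreasing flags of prescribed codimension, not increasing flags of dimension $i_{j}$), so $(S_{\bullet}^{\star},\{x_{j}^{\star}\})$ is not a feasible pair there, and no bound follows. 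Concretely, in the counterexample above the optimal flag for $A+B$ is all of $\mathbb{C}^{2}$, and for any unit $x^{\star}$ in it the split $\langle x^{\star},Ax^{\star}\rangle + \langle x^{\star},Bx^{\star}\rangle = 1$ cannot be dominated termwise by $\lambda_{2}(A)+\lambda_{2}(B)=0$.

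Your machinery is salvageable for the \emph{correct} Lidskii inequality. Keep the Wielandt max--min for $A+B$ and the same split; then bound the $B$ term by Ky Fan's principle (your warm-up), which gives $\sum_{j}\langle x_{j}^{\star},Bx_{j}^{\star}\rangle \le \sum_{j=1}^{k}\lambda_{j}(B)$ for any orthonormal frame. For the $A$ term you cannot simply reuse $\{x_{j}^{\star}\}$; instead, observe that for the fixed flag $S_{\bullet}^{\star}$ the inner minimum over frames for $A+B$ is at least the inner minimum for $A$ plus the inner minimum for $B$ would go the wrong way, so one instead argues via the flag alone: $\sum_{j}\lambda_{i_{j}}(A+B)=\min_{\{x_{j}\}\subset S^{\star}}\sum_{j}\langle x_{j},(A+B)x_{j}\rangle$, and for each candidate frame bound the $B$ part by Ky Fan, leaving $\min_{\{x_{j}\}\subset S^{\star}}\sum_{j}\langle x_{j},Ax_{j}\rangle + \sum_{j}\lambda_{j}(B) \le \sum_{j}\lambda_{i_{j}}(A)+\sum_{j}\lambda_{j}(B)$ by the outer maximum in Wielandt for $A$. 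That is the clean route.
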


We need the following theorem in our proofs.

\begin{lemma} (Sarlos~\cite{Sarlos06}) \label{lem:sarlos}
Let $\bPhi$ be a matrix that satisfies $(\alpha,\beta)$-subspace embedding for $\I$, then $(1- \alpha) \leq \| \bPhi \|_2 \leq (1+\alpha)$ with probability at least $1-\beta$.
\end{lemma}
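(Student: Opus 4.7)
The plan is to unfold the definition of an $(\alpha,\beta)$-subspace embedding for the identity matrix $\I$ and observe that the spectral norm is, by definition, the supremum of $\|\bPhi\bx\|_2$ over unit vectors $\bx$. By the excerpt's definition, $\bPhi \sim \cD$ being an $(\alpha,\beta)$-subspace embedding for a matrix $\bA$ means that for every $\bx$,
\[
(1-\alpha)\|\bA\bx\|_2 \;\leq\; \|\bPhi\bA\bx\|_2 \;\leq\; (1+\alpha)\|\bA\bx\|_2
\]
holds with probability at least $1-\beta$ over $\bPhi$. Specializing to $\bA = \I$, this gives simultaneously for all $\bx \in \R^n$ (with probability $1-\beta$)
\[
(1-\alpha)\|\bx\|_2 \;\leq\; \|\bPhi\bx\|_2 \;\leq\; (1+\alpha)\|\bx\|_2.
\]

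For the upper bound, I would take the supremum of $\|\bPhi\bx\|_2$ over all unit vectors $\bx$; by the right-hand inequality this supremum is bounded by $(1+\alpha)$, so $\|\bPhi\|_2 \leq (1+\alpha)$. For the lower bound, I would pick any single unit vector $\bx_0$ (e.g.\ the first standard basis vector); the left-hand inequality then yields $\|\bPhi\bx_0\|_2 \geq (1-\alpha)$, and since $\|\bPhi\|_2 = \sup_{\|\bx\|_2=1}\|\bPhi\bx\|_2 \geq \|\bPhi\bx_0\|_2$, we obtain $\|\bPhi\|_2 \geq (1-\alpha)$.

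There is essentially no obstacle here: the statement is a direct specialization of the subspace-embedding definition to $\bA=\I$ combined with the variational characterization $\|\bPhi\|_2 = \sup_{\|\bx\|_2=1}\|\bPhi\bx\|_2$. The only subtlety worth flagging is the quantifier order: the definition as stated in the excerpt is phrased "for all $\bx$, with probability $1-\beta$," which for a fixed-dimensional identity we treat as uniform over the unit sphere with a single failure event of probability $\beta$ (this is standard for subspace embeddings since the property is preserved simultaneously over the embedded subspace). Under that reading, both bounds hold simultaneously on the same high-probability event, yielding $(1-\alpha)\le\|\bPhi\|_2\le(1+\alpha)$ with probability at least $1-\beta$.
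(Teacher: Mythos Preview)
Your proposal is correct and is the standard argument. The paper does not actually supply a proof of this lemma; it is stated in the appendix as a known fact attributed to Sarlos~\cite{Sarlos06}, so there is nothing to compare against beyond noting that your unfolding of the subspace-embedding definition for $\bA=\I$ together with the variational characterization of $\|\bPhi\|_2$ is exactly the intended one-line justification.
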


\begin{lemma} \label{lem:CW09}
Given integer $k$ and $\alpha,\delta >0$, there is $p = O(k \log(1/\delta)/\alpha)$ such that if S is an $p \times n$ $(\alpha,\beta)$-subspace embedding matrix, then for $n \times k$ matrix $\bU$ with orthonormal columns, with probability at least $1 - \delta$, the spectral norm $\| \bU^\sT \bPhi \bPhi^\sT \bU - \I \|_2 \leq \alpha$.
\end{lemma}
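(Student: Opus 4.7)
The plan is to apply the subspace-embedding hypothesis directly to the $k$-dimensional column space of $\bU$, and then convert the resulting norm-preservation into the spectral-norm bound on $\bU^\sT \bPhi^\sT \bPhi \bU - \I$. The argument is essentially a Rayleigh-quotient computation after noticing that $\bU$ being orthonormal turns subspace-embedding for $\bU$ into subspace-embedding for the identity on $\R^k$.

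First I would instantiate the $(\alpha,\delta)$-subspace embedding property of $\bPhi$ (with $\beta=\delta$) with respect to the matrix $\bU$. By the definition recalled in~\secref{prelims}, with probability at least $1-\delta$ over the draw of $\bPhi$, simultaneously for every $y\in\R^k$,
\begin{align*}
(1-\alpha)\|\bU y\|_2 \;\leq\; \|\bPhi \bU y\|_2 \;\leq\; (1+\alpha)\|\bU y\|_2.
\end{align*}
Since $\bU$ has orthonormal columns, $\|\bU y\|_2 = \|y\|_2$, so squaring and rewriting the middle term as a quadratic form gives
\begin{align*}
(1-\alpha)^2 \|y\|_2^2 \;\leq\; y^\sT (\bU^\sT \bPhi^\sT \bPhi \bU) y \;\leq\; (1+\alpha)^2 \|y\|_2^2 \quad \text{for all } y \in \R^k.
\end{align*}

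Next I would convert this Rayleigh-quotient sandwich into a spectral-norm bound. The matrix $\bU^\sT \bPhi^\sT \bPhi \bU$ is symmetric and positive semidefinite, and by the Courant--Fischer characterization each of its eigenvalues lies in $[(1-\alpha)^2,(1+\alpha)^2]$. Hence every eigenvalue of $\bU^\sT \bPhi^\sT \bPhi \bU - \I$ lies in $[-2\alpha+\alpha^2,\,2\alpha+\alpha^2]$, yielding
\begin{align*}
\| \bU^\sT \bPhi^\sT \bPhi \bU - \I \|_2 \;\leq\; 2\alpha+\alpha^2 \;\leq\; 3\alpha
\end{align*}
whenever $\alpha\in(0,1)$.

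Finally, to replace the constant $3\alpha$ by the stated $\alpha$, I would run the argument with the rescaled parameter $\alpha'=\alpha/3$. The standard oblivious subspace-embedding constructions for a $k$-dimensional subspace (of the sort already invoked elsewhere in this paper) achieve the $(\alpha',\delta)$-property using $p = O(k\log(1/\delta)/\alpha')$ rows, so the rescaling affects only constants and the advertised bound $p = O(k\log(1/\delta)/\alpha)$ still holds. There is no real technical obstacle here; the only subtlety worth flagging is the identity $\|\bU y\|_2 = \|y\|_2$, which is precisely what lets the embedding guarantee for $\bU$ collapse into a statement that $\bU^\sT \bPhi^\sT \bPhi \bU$ is close to $\I$ rather than to some other Gram matrix.
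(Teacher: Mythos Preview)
Your argument is correct and is the standard derivation of this fact from the subspace-embedding definition. Note, however, that the paper does not actually prove this lemma: it is stated in the appendix as a known result from the literature (the label \texttt{lem:CW09} points to Clarkson--Woodruff), with no accompanying proof, so there is no ``paper's own proof'' to compare against.
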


\begin{theorem} \label{thm:FT07} {(Sou and Rantzer~\cite{SR12}).}
Let matrices $\mathbf{O} \in \R^{m \times n}, \mathbf{L} \in \R^{m \times p}$ and $\mathbf{R} \in \R^{q \times n}$ be given.  Then 
 $$\mathbf{L}^\dagger [ \bU_\mathbf{L} \bU_\mathbf{L}^{\mathsf T} \mathbf{O} \bV_\mathbf{R} \bV_\mathbf{R}^{\mathsf T}]_k \mathbf{R}^\dagger =  \argmin_{\bX, r(\bX) \leq k} \| \mathbf{O} - \mathbf{L} \bX \mathbf{R} \|_2, $$ where $\mathbf{L} := \bU_\mathbf{L} \bSigma_\mathbf{L} \bV_\mathbf{L}^{\mathsf T}$ and $\mathbf{R} := \bU_\mathbf{R} \bSigma_\mathbf{R} \bV_\mathbf{R}^{\mathsf T}$.
\end{theorem}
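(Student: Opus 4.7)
The plan is to reduce the rank-constrained spectral-norm approximation to the Eckart-Young-Mirsky theorem by exploiting that $\mathbf{L}\bX\mathbf{R}$ is constrained to a fixed subspace. First, from the compact SVDs $\mathbf{L}=\bU_\mathbf{L}\bSigma_\mathbf{L}\bV_\mathbf{L}^{\mathsf T}$ and $\mathbf{R}=\bU_\mathbf{R}\bSigma_\mathbf{R}\bV_\mathbf{R}^{\mathsf T}$, I would observe that $\mathbf{L}\bX\mathbf{R}$ has its column space in $\mathrm{range}(\bU_\mathbf{L})$ and its row space in $\mathrm{range}(\bV_\mathbf{R})$ for every $\bX$, so $\mathbf{L}\bX\mathbf{R}=\bU_\mathbf{L}\bU_\mathbf{L}^{\mathsf T}(\mathbf{L}\bX\mathbf{R})\bV_\mathbf{R}\bV_\mathbf{R}^{\mathsf T}$. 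This identifies the search space as a linear subspace cut out by two orthogonal projectors.

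Next, I would complete $\bU_\mathbf{L}$ and $\bV_\mathbf{R}$ to orthogonal matrices $[\bU_\mathbf{L}\ \bU_\mathbf{L}^{\perp}]$ and $[\bV_\mathbf{R}\ \bV_\mathbf{R}^{\perp}]$ and, using unitary invariance of the spectral norm, rewrite
\begin{align*}
\|\mathbf{O}-\mathbf{L}\bX\mathbf{R}\|_2 = \left\|\begin{pmatrix}\bU_\mathbf{L}^{\mathsf T}\mathbf{O}\bV_\mathbf{R}-\bSigma_\mathbf{L}\bV_\mathbf{L}^{\mathsf T}\bX\bU_\mathbf{R}\bSigma_\mathbf{R} & \bU_\mathbf{L}^{\mathsf T}\mathbf{O}\bV_\mathbf{R}^{\perp}\\ (\bU_\mathbf{L}^{\perp})^{\mathsf T}\mathbf{O}\bV_\mathbf{R} & (\bU_\mathbf{L}^{\perp})^{\mathsf T}\mathbf{O}\bV_\mathbf{R}^{\perp}\end{pmatrix}\right\|_2,
\end{align*}
so that only the $(1,1)$ block is affected by the choice of $\bX$. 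Because $\bSigma_\mathbf{L}$ and $\bSigma_\mathbf{R}$ are invertible on their respective ranges, the substitution $\bY:=\bSigma_\mathbf{L}\bV_\mathbf{L}^{\mathsf T}\bX\bU_\mathbf{R}\bSigma_\mathbf{R}$ is a rank-preserving bijection between admissible $\bX$ of rank at most $k$ and arbitrary matrices $\bY$ of rank at most $k$, reducing the optimization to choosing such a $\bY$.

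The hardest step will be to show that the rank-$k$ $\bY$ minimizing the spectral norm of the assembled block matrix is the truncated SVD $\bY^{\star}=[\bU_\mathbf{L}^{\mathsf T}\mathbf{O}\bV_\mathbf{R}]_k$. For the Frobenius norm this is immediate: the four blocks are mutually orthogonal in the Frobenius inner product, so the Pythagorean theorem isolates the $(1,1)$ block, and Eckart-Young-Mirsky gives the truncated SVD. For the spectral norm this is subtler because the spectral norm is not additive across blocks. I would follow the approach of Sou and Rantzer~\cite{SR12}: use the Mirsky/Weyl interlacing inequalities to lower-bound the $(k{+}1)$-st singular value of the full block matrix by $\sigma_{k+1}(\bU_\mathbf{L}^{\mathsf T}\mathbf{O}\bV_\mathbf{R})$ together with the fixed off-block contributions, and then verify that the candidate $\bY^{\star}$ attains this lower bound, so that no rank-$k$ filling of the $(1,1)$ block can achieve a strictly smaller spectral norm.

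Finally, I would translate $\bY^{\star}$ back into a closed form for $\bX^{\star}$. The inverse of the substitution, together with the pseudo-inverse identities $\mathbf{L}^\dagger\mathbf{L}=\bV_\mathbf{L}\bV_\mathbf{L}^{\mathsf T}$ and $\mathbf{R}\mathbf{R}^\dagger=\bU_\mathbf{R}\bU_\mathbf{R}^{\mathsf T}$, gives $\bX^{\star}=\mathbf{L}^\dagger(\bU_\mathbf{L}\bY^{\star}\bV_\mathbf{R}^{\mathsf T})\mathbf{R}^\dagger$. To match the stated form, I would verify the identity $\bU_\mathbf{L}[\bU_\mathbf{L}^{\mathsf T}\mathbf{O}\bV_\mathbf{R}]_k\bV_\mathbf{R}^{\mathsf T}=[\bU_\mathbf{L}\bU_\mathbf{L}^{\mathsf T}\mathbf{O}\bV_\mathbf{R}\bV_\mathbf{R}^{\mathsf T}]_k$, which holds because pre-multiplying by a matrix with orthonormal columns and post-multiplying by the transpose of another preserves singular values and merely embeds the compact SVD isometrically into the ambient space. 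Combining these steps yields exactly $\bX^{\star}=\mathbf{L}^\dagger[\bU_\mathbf{L}\bU_\mathbf{L}^{\mathsf T}\mathbf{O}\bV_\mathbf{R}\bV_\mathbf{R}^{\mathsf T}]_k\mathbf{R}^\dagger$, as claimed.
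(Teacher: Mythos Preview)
The paper does not prove this theorem: it is listed in the ``Useful Facts'' appendix as a result quoted from Sou and Rantzer~\cite{SR12}, with no accompanying argument. So there is no in-paper proof to compare against; your proposal is effectively a reconstruction of the cited result.

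Your reduction steps are correct: the block decomposition via unitary invariance, the rank-preserving substitution $\bY=\bSigma_\mathbf{L}\bV_\mathbf{L}^{\mathsf T}\bX\bU_\mathbf{R}\bSigma_\mathbf{R}$, and the final back-substitution using $\mathbf{L}^\dagger\bU_\mathbf{L}=\bV_\mathbf{L}\bSigma_\mathbf{L}^{-1}$ and the orthonormal-embedding identity $\bU_\mathbf{L}[\bU_\mathbf{L}^{\mathsf T}\mathbf{O}\bV_\mathbf{R}]_k\bV_\mathbf{R}^{\mathsf T}=[\bU_\mathbf{L}\bU_\mathbf{L}^{\mathsf T}\mathbf{O}\bV_\mathbf{R}\bV_\mathbf{R}^{\mathsf T}]_k$ are all sound.

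The one place your sketch is loose is exactly where you flag it. After the reduction, the problem is to minimize the spectral norm of a $2\times 2$ block matrix in which only the $(1,1)$ block is $\bA-\bY$ with $\mathsf{r}(\bY)\le k$ and the other three blocks are fixed. Your proposed argument (``lower-bound the $(k{+}1)$-st singular value of the full block matrix by $\sigma_{k+1}(\bU_\mathbf{L}^{\mathsf T}\mathbf{O}\bV_\mathbf{R})$ together with the fixed off-block contributions, then verify the candidate attains the bound'') is not quite the right shape: you need to lower-bound $\sigma_1$ of the block matrix, not $\sigma_{k+1}$, and the interlacing bound $\sigma_1(M(\bY))\ge\sigma_{k+1}(M(\mathbf{0}))$ (from Weyl applied to the rank-$k$ perturbation) is not generally attained by $\bY=[\bA]_k$ once the off-diagonal blocks are nonzero. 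The actual Sou--Rantzer argument handles this more carefully; as written, your spectral-norm step is a pointer to the reference rather than a proof. Since the paper itself only cites the result, this is not a discrepancy with the paper, but it is a genuine gap if you intend the proposal to stand on its own.
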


\end{appendix}

\end{document}